\newcommand{\LoadPackagesNow}{}
\newcommand{\LoadPackageLater}[2][]{%
   \g@addto@macro{\LoadPackagesNow}{%
      \usepackage[#1]{#2}%
   }%
}
\tikzstyle{blackdot}=[shape=circle,fill=black,minimum size=1mm,inner sep=0pt,outer sep=0pt]
\g@addto@macro\bfseries{\boldmath}
\numberwithin{equation}{section}
\definecolor{pdfurlcolor}{rgb}{0,0,0.6}
\definecolor{pdffilecolor}{rgb}{0.7,0,0}
\definecolor{pdflinkcolor}{rgb}{0,0,0.6}
\definecolor{pdfcitecolor}{rgb}{0,0,0.6}
\newcommand{\ifargdef}[3][{}]{\ifthenelse{\equal{#2}{}}{#1}{#3}}
\newlength{\hangind}
\newcommand{\myhangindent}[1]{\settowidth{\hangind}{\widthof{#1}}\hangindent=\the\hangind}
\newenvironment{highlight}{\begin{quote}\itshape}{\end{quote}}
\newenvironment{rmklist}
{\begin{enumerate}[label={(\arabic*)},itemindent=2em,leftmargin=0em]}
{\end{enumerate}}
\newenvironment{thmlist}
{\begin{enumerate}[label={(\alph*)}]}
{\end{enumerate}}
\newenvironment{prooflist}
{\begin{enumerate}[label={(\alph*)},itemindent=2em,leftmargin=0em]}
	{\end{enumerate}}
\newenvironment{thmproperties}
{\begin{enumerate}[label={(\roman*)}]}
{\end{enumerate}}
\newenvironment{deflist}
{\begin{enumerate}[label={(\arabic*)}]}
{\end{enumerate}}
\renewcommand{\qedsymbol}{$_\blacksquare$}
\providecommand{\qedhere}{\hfill\qedsymbol}
\newtheoremstyle{claim}
	{\topsep}{\topsep}%
	{\itshape}
	{}
	{}
	{}
	{.5em}
	{{\bfseries\boldmath\thmname{#1} \thmnumber{#2}} \thmnote{(#3)}}
\newtheoremstyle{definition}
	{\topsep}{\topsep}%
	{}
	{}
	{}
	{}
	{.5em}
	{\textbf{\thmname{#1} \thmnumber{#2}} \thmnote{(#3)}}
\newtheoremstyle{algorithm}
	{\topsep}{\topsep}%
	{}
	{}
	{\bfseries\boldmath}
	{}
	{\newline}
	{\thmname{#1} \thmnumber{#2} \thmnote{(#3)}}
\declaretheorem[style=claim,numberwithin=section]{theorem}
\declaretheorem[style=claim,sibling=theorem]{lemma}
\declaretheorem[style=claim,sibling=theorem]{corollary}
\declaretheorem[style=claim,sibling=theorem]{proposition}
\declaretheorem[style=definition,sibling=theorem]{definition}
\declaretheorem[style=definition,sibling=theorem,qed=$\Diamond$]{remark}
\declaretheorem[style=definition,sibling=theorem,qed=$\Diamond$]{example}
\declaretheorem[style=definition,sibling=theorem,qed=$\Diamond$]{iteration}
\declaretheorem[style=algorithm,sibling=theorem,%
	preheadhook={\begin{mdframed}[style=emphframe] \setcounter{mpfootnote}{\value{footnote}}},%
	postfoothook=\setcounter{footnote}{\value{mpfootnote}}\end{mdframed}]{experiment}
\declaretheorem[style=algorithm,sibling=theorem,%
	preheadhook={\begin{mdframed}[style=emphframe] \setcounter{mpfootnote}{\value{footnote}}},%
	postfoothook=\setcounter{footnote}{\value{mpfootnote}}\end{mdframed}]{algorithm}
\newcommand{\opleft}[1]{\mathopen{}\left#1}
\newcommand{\opright}[1]{\right#1\mathclose{}}
\newcommandx{\braces}[4]{%
\ifstrequal{#3}{normal}{#1#4#2}{%
\ifstrequal{#3}{auto}{\left#1#4\right#2}{%
\ifstrequal{#3}{opauto}{\opleft#1#4\opright#2}{%
#3#1#4#3#2}}}%
}
\newcommandx{\opannot}[3][3=\downarrow]{\stackrel{\mathclap{\substack{#1 \\ #3 \vspace{2pt}}}}{#2}}
\newcommandx{\lineannot}[3][3=\rightarrow]{\mathllap{\boxed{\text{\textsmaller{#1}}} #3} #2}
\newcommandx{\multilineannot}[4][4=\rightarrow]{\mathllap{\boxed{\parbox{#1}{\RaggedRight\textsmaller{#2}}} #4} #3}
\newcommand{\N}{\mathbb{N}} 
\newcommand{\Q}{\mathbb{Q}} 
\newcommand{\R}{\mathbb{R}} 
\newcommand{\suchthat}[1][normal]{\ifstrequal{#1}{normal}{\mid}{#1|}} 
\newcommand{\setcompl}[1]{#1^c} 
\newcommand{\cardinality}[1]{\abs{#1}} 
\newcommand{\union}{\cup} 
\newcommand{\disjunion}{\mathrel{\dot{\union}}} 
\newcommand{\intersec}{\cap} 
\newcommandx{\intvcl}[3][1=normal]{\braces{[}{]}{#1}{#2, #3}} 
\newcommandx{\intvop}[3][1=normal]{\braces{(}{)}{#1}{#2, #3}} 
\newcommandx{\intvclop}[3][1=normal]{\braces{[}{)}{#1}{#2, #3}} 
\newcommandx{\intvopcl}[3][1=normal]{\braces{(}{]}{#1}{#2, #3}} 
\DeclareMathOperator*{\argmin}{argmin} 
\DeclareMathOperator*{\argmax}{argmax} 
\DeclareMathOperator{\sign}{sign}
\newcommandx{\abs}[2][1=normal]{\braces{\lvert}{\rvert}{#1}{#2}} 
\newcommandx{\ceil}[2][1=normal]{\braces{\lceil}{\rceil}{#1}{#2}} 
\newcommandx{\floor}[2][1=normal]{\braces{\lfloor}{\rfloor}{#1}{#2}} 
\newcommandx{\round}[2][1=normal]{\braces{\llbracket}{\rrbracket}{#1}{#2}} 
\newcommandx{\der}[1]{D^{#1}} 
\newcommandx{\gradient}{\nabla} 
\newcommandx{\partder}[4][1={},4={}]{\frac{\partial^{#4} #2}{\partial #3^{#4}}\ifargdef{#1}{\Big|_{#1}}} 
\newcommandx{\integ}[4][1={},2={}]{\int_{#1}^{#2} #3 \, #4} 
\newcommandx{\asympffaster}[2][1=normal]{o\braces{(}{)}{#1}{#2}} 
\newcommandx{\asympfaster}[2][1=normal]{O\braces{(}{)}{#1}{#2}} 
\newcommandx{\asympeq}[2][1=normal]{\Theta\braces{(}{)}{#1}{#2}} 
\newcommandx{\asympsslower}[2][1=normal]{\omega\braces{(}{)}{#1}{#2}} 
\newcommandx{\asympslower}[2][1=normal]{\Omega\braces{(}{)}{#1}{#2}} 
\newcommand{\matr}[1]{\begin{bmatrix} #1 \end{bmatrix}} 
\newcommandx{\norm}[2][1=normal]{\braces{\|}{\|}{#1}{#2}} 
\renewcommandx{\sp}[3][1=normal]{\braces{\langle}{\rangle}{#1}{#2, #3}} 
\newcommandx{\End}[2][2={}]{\mathcal{L}\opleft( #1 \ifargdef{#2}{, #2} \opright)} 
\newcommand{\T}{\mathsf{T}} 
\newcommand{\psinv}[1]{#1^{\dagger}} 
\renewcommand{\vec}[1]{\boldsymbol{#1}} 
\newcommandx{\measure}[2][1=normal]{\operatorname{vol}\braces{(}{)}{#1}{#2}} 
\newcommand{\indset}[1]{\chi_{#1}} 
\DeclareMathOperator{\supp}{supp} 
\newcommandx{\Leb}[3][1={},3=normal]{L^{#2}\ifargdef{#1}{\braces{(}{)}{#3}{#1}}{}} 
\newcommandx{\Lebnorm}[4][1=normal,3={2},4={}]{\norm[#1]{#2}_{#3}} 
\renewcommandx{\l}[3][1={},3=normal]{\ell^{#2}\ifargdef{#1}{\braces{(}{)}{#3}{#1}}} 
\newcommandx{\lnorm}[4][1=normal,3={2},4={}]{\norm[#1]{#2}_{#3}} 
\newcommandx{\Smooth}[4][1={},3={},4=normal]{C_{#3}^{#2}\ifargdef{#1}{\braces{(}{)}{#4}{#1}}} 
\newcommandx{\Schwartz}[2][1={},2=normal]{\mathscr{S}\ifargdef{#1}{\braces{(}{)}{#2}{#1}}} 
\newcommandx{\Schwartzpoly}[2][1=normal]{\braces{\langle}{\rangle}{#1}{\abs[#1]{#2}} } 
\newcommandx{\Tempdistr}[2][1={},2=normal]{\mathscr{S}'\ifargdef{#1}{\braces{(}{)}{#2}{#1}}} 
\newcommandx{\distrinp}[3][1=normal]{\braces{\langle}{\rangle}{#1}{#2, #3}} 
\newcommandx{\ft}[3][1=default,2=auto]{
\ifstrequal{#1}{default}{\widehat{#3}}{
\ifstrequal{#1}{long}{{\braces{(}{)}{#2}{#3}}^{\wedge}}{}}} 
\newcommandx{\ift}[3][1=default,2=auto]{
\ifstrequal{#1}{default}{\check{#3}}{
\ifstrequal{#1}{long}{{\braces{(}{)}{#2}{#3}}^{\vee}}{}}} 
\DeclareMathOperator{\PolyLog}{PolyLog} 
\newcommand{\y}{\vec{y}} 
\newcommand{\A}{\vec{A}} 
\newcommand{\noise}{\vec{e}} 
\newcommand{\noiseparam}{\eta} 
\newcommand{\x}{\vec{x}} 
\newcommand{\grtr}{\vec{x}^\ast} 
\newcommand{\grtrsparse}{\bar{\vec{x}}^\ast} 
\newcommandx{\solu}[1][1={}]{\ifargdef[\hat{\vec{x}}]{#1}{\hat{#1}}} 
\renewcommand{\v}{\vec{v}} 
\newcommand{\vnull}{\vec{0}} 
\newcommand{\1}{\mathbf1} 
\newcommand{\restrict}{\big|} 
\newcommand{\contgrtr}{\mathcal{X}} 
\newcommand{\cont}[1]{#1^{\circ}}
\newcommand{\sset}{K} 
\newcommand{\aop}{\vec{\Psi}} 
\newcommand{\ssupp}[1][{}]{\mathcal{S}_{#1}} 
\newcommand{\ssuppc}[1][{}]{\setcompl{\mathcal{S}}_{#1}} 
\newcommand{\proj}[1]{\vec{P}_{#1}} 
\newcommand{\I}[1]{\vec{I}_{#1}} 
\newcommand{\probsuccess}{u} 
\newcommandx{\prob}[2][1={},2=normal]{\mathbb{P}\ifargdef{#1}{\braces{[}{]}{#2}{#1}}}
\newcommandx{\mean}[2][1={},2=normal]{\mathbb{E}\ifargdef{#1}{\braces{[}{]}{#2}{#1}}}
\newcommandx{\var}[2][1={},2=normal]{\mathbb{V}\ifargdef{#1}{\braces{[}{]}{#2}{#1}}}
\newcommand{\distributed}{\sim}
\newcommandx{\Normdistr}[3][1=normal]{\mathcal{N}\braces{(}{)}{#1}{#2, #3}} 
\newcommandx{\normsubg}[2][1=normal]{\norm[#1]{#2}_{\psi_2}} 
\newcommand{\gaussian}{\vec{g}} 
\newcommand{\median}{\operatorname{Median}}
\newcommandx{\anorm}[3][1=normal,3={\sset}]{\norm[#1]{#2}_{#3}} 
\newcommandx{\pospart}[2][1=auto]{\braces{[}{]}{#1}{#2}_+}
\newcommandx{\ball}[2][1={},2={}]{B_{#1}^{#2}} 
\renewcommand{\S}{\mathbb{S}} 
\newcommand{\meanwidth}[2][{}]{w_{#1}(#2)} 
\newcommand{\effdim}[2][{}]{w_{#1}^2(#2)} 
\newcommand{\conic}{\wedge} 
\newcommand{\cone}[1]{\operatorname{cone}(#1)} 
\newcommand{\descset}[1]{\mathcal{D}(#1)} 
\newcommand{\desccone}[1]{\mathcal{D}_{\conic}(#1)} 
\newcommand{\subd}[1]{\partial #1} 
\newcommandx{\subdiff}[2][2={}]{\partial #1\ifargdef{#2}{(#2)}{}} 
\newcommand{\dv}{\vec{w}} 
\newcommandx{\clip}[3][1=normal]{\operatorname{clip}\braces{(}{)}{#1}{#2; #3}} 
\newcommand{\F}{\mathcal{F}} 
\newcommand{\Fext}{\mathcal{\bar{F}}} 
\renewcommand{\H}{\vec{H}} 
\let\b\relax
\let\d\relax
\newcommand{\gssignals}[1]{\mathcal{G}_{#1}}
\newcommand{\s}{s}
\newcommand{\sext}{\bar{s}}
\newcommand{\SC}{\Delta}
\newcommand{\jump}{\nu}
\newcommand{\jumpext}{\xi}
\newcommand{\Klambda}{\mathcal{K}} 
\newcommand{\Supp}{\mathcal{\bar{S}}} 
\newcommand{\SuppEq}{\mathcal{\tilde{S}}} 
\newcommand{\SuppS}[2]{\Supp_{#1}^{(#2)}} 
\newcommand{\TV}{\nabla} 
\newcommand{\mo}{{-1}}
\newcommand{\b}[2]{\beta_{#1}^{(#2)}}
\newcommand{\d}[2]{d_{#1}^{(#2)}}
\newcommand{\dl}[2]{\overleftarrow{d}_{#1}^{(#2)}}
\newcommand{\dr}[2]{\overrightarrow{d}_{#1}^{(#2)}}
\newcommand{\e}[2]{e_{#1}^{(#2)}}
\newcommand{\h}[2]{\vec{h}_{#1}^{(#2)}}
\newcommand{\n}[2]{n_{#1}^{(#2)}}
\newcommand{\nl}[2]{\overleftarrow{n}_{#1}^{(#2)}}
\newcommand{\nr}[2]{\overrightarrow{n}_{#1}^{(#2)}}
\newcommand{\p}[2]{p_{#1}^{(#2)}}
\newcommand{\pl}[2]{\overleftarrow{p}_{#1}^{(#2)}}
\newcommand{\pr}[2]{\overrightarrow{p}_{#1}^{(#2)}}
\newcommand{\child}{\mathfrak{c}}
\DeclareMathOperator{\depth}{d}
\newcommand{\lchild}{\overleftarrow{\child}}
\newcommand{\rchild}{\overrightarrow{\child}}
\newcommand{\parent}{\mathfrak{p}}
\renewcommand{\Q}[2]{\mathcal{Q}_{#1}^{(#2)}}
\newcommand{\Ql}[2]{\overleftarrow{\mathcal{Q}}_{#1}^{(#2)}}
\newcommand{\Qr}[2]{\overrightarrow{\mathcal{Q}}_{#1}^{(#2)}}
\newcommand{\olarr}[1]{\overleftarrow{#1}}
\newcommand{\orarr}[1]{\overrightarrow{#1}}
\newcommand{\revision}[1]{#1}
\begin{document}

\renewcommand*{\thefootnote}{\fnsymbol{footnote}}
\pagestyle{scrheadings}

\begin{center}
	\bfseries\larger[3]{Compressed Sensing with 1D Total Variation: \\ Breaking Sample Complexity Barriers \\ via Non-Uniform Recovery}
\end{center}

\vspace{1\baselineskip}
\begin{addmargin}[2em]{2em}
\begin{center}
\noindent{\normalsize\bfseries{Martin Genzel\footnote{Technische Universit\"at Berlin, Department of Mathematics, Berlin, Germany; E-Mail:~\href{mailto:genzel@math.tu-berlin.de}{\texttt{genzel@math.tu-berlin.de}}} \qquad Maximilian März\footnote{Technische Universit\"at Berlin, Department of Mathematics, Berlin, Germany; E-Mail:~\href{mailto:maerz@math.tu-berlin.de}{\texttt{maerz@math.tu-berlin.de}}} \qquad Robert Seidel\footnote{Technische Universit\"at Berlin, Institute of Software Engineering and Theoretical Computer Science, Berlin, Germany; E-Mail:~\href{mailto:robert.seidel@campus.tu-berlin.de}{\texttt{robert.seidel@campus.tu-berlin.de}}}}}
\end{center}

\vspace{1\baselineskip}
{\smaller
\noindent\textbf{Abstract.}
This paper investigates total variation minimization in one spatial dimension for the recovery of gradient-sparse signals from undersampled Gaussian measurements. Recently established bounds for the required sampling rate state that uniform recovery of all $s$-gradient-sparse signals in $\R^n$ is only possible with $m \gtrsim \sqrt{\s n} \cdot \PolyLog(n)$ measurements. 
Such a condition is especially prohibitive for high-dimensional problems, where $s$ is much smaller than $n$. However, previous empirical findings seem to indicate that this sampling rate does not reflect the typical behavior of total variation minimization.
The present work provides a rigorous analysis that breaks the $\sqrt{\s n}$-bottleneck for a large class of ``natural'' signals.
The main result shows that non-uniform recovery succeeds with high probability for $m \gtrsim \s \cdot \PolyLog(n)$ measurements if the jump discontinuities of the signal vector are sufficiently well separated.
In particular, this guarantee allows for signals arising from a discretization of piecewise constant functions defined on an interval.
The key ingredient of the proof is a novel upper bound for the associated conic Gaussian mean width, which is based on a signal-dependent, non-dyadic Haar wavelet transform. Furthermore, a natural extension to stable and robust recovery is addressed.

\vspace{.5\baselineskip}
\noindent\textbf{Key words.} Total variation minimization, compressed sensing, sparsity, Gaussian mean width, non-dyadic Haar wavelet transform.

\vspace{.5\baselineskip}
\noindent\textbf{MSC classes.} 42C40, 68Q25, 94A12, 94A20.

}
\end{addmargin}
\newcommand{\shortauthor}{Genzel, März, and Seidel: Compressed Sensing with 1D Total Variation}
%
\renewcommand*{\thefootnote}{\arabic{footnote}}
\setcounter{footnote}{0}


\thispagestyle{plain}

\section{Introduction}
\label{sec:intro}

The methodology of \emph{compressed sensing} allows for the reconstruction of an unknown signal from surprisingly few indirect and randomized measurements by exploiting the inherent structure of the signal. This field was pioneered by the seminal works of Cand\`{e}s, Donoho, Romberg, and Tao~\cite{candes2006cs,candes2006stable,donoho2006cs}, which have triggered a large amount of research in the past two decades; see also \cite{foucart2013cs} for more details on this subject. 

A standard setup in compressed sensing concerns the following inverse problem:  Assume that $\grtr \in \R^n$ denotes a signal vector of interest that is \emph{$\s$-sparse}, i.e., $\cardinality{\supp(\grtr)} \leq \s$. But instead of having direct access to $\grtr$, it is observed via a \emph{linear, non-adaptive measurement process}\footnote{For the sake of simplicity, potential distortions in the measurement process are ignored here, but we emphasize that all results of this work can be made robust against (adversarial) noise; see Subsection~\ref{subsec:results:stable} for more details.} 
\begin{equation}\label{eq:meas}
	\y = \A \grtr,
\end{equation}
where $\A \in \R^{m \times n}$ is a known matrix, typically referred to as the \emph{measurement matrix}. A remarkable insight of compressed sensing is that, under certain circumstances, it is still possible to retrieve $\grtr$ from the knowledge of $\y \in \R^m$ even when $m\ll n$. Indeed, if the measurement matrix $\A$ is drawn according to an appropriate random distribution, the recovery of $\grtr$ remains feasible with high probability as long as the number of measurements obeys $m \gtrsim \s \log(2n/\s)$, where the `$\gtrsim$'-notation hides a universal constant. For the success of this strategy, it is crucial to employ non-linear recovery methods that exploit the a priori knowledge that $\grtr$ is sparse. There exist numerous greedy methods and convex programs that are designed to accomplish this task efficiently. Arguably, the most popular variant is the so-called \emph{basis pursuit}~\cite{chen_atomic_1998}, which solves the convex problem
\begin{equation}\label{eq:intro:bp}\tag{$\text{BP}$}
	\min_{\x \in \R^n} \lnorm{\x}[1] \quad \text{subject to \quad $\A \x = \y$.}
\end{equation}
The key component of this minimization task is using the $\l{1}$-norm as objective function, which is known to promote sparsity in the solutions. 

Historically, the idea of promoting sparsity via the $\l{1}$-norm dates back even further than the initial works on compressed sensing~\cite{candes2006cs,candes2006stable,donoho2006cs} or on finding sparse representations with the basis pursuit~\cite{chen_atomic_1998}. For instance, similar techniques can be found in the work of \citeauthor{Beurling1938}~\cite{Beurling1938} or in the influential paper of \citeauthor{rudin_nonlinear_1992} (ROF)~\cite{rudin_nonlinear_1992}, which are both formulated in terms of \emph{total variation minimization}. In fact, the above assumption that the unknown signal $\grtr$ is sparse by itself is typically not satisfied in real-world applications. However, it has turned out that encouraging a small total variation norm often efficaciously reflects the inherent structure of the signal---an observation that holds particularly true for image signals. While the original ROF-model was formulated for image denoising, methods based on total variation regularization became state of the art for many other image reconstruction problems, for instance, see~\cite{Chambolle1997,Chambolle2004} or~\cite[Tab.~2.1]{benning_burger_2018}. Although probably not as popular as their counterparts in higher spatial dimensions, total variation methods in one spatial dimension find application in many practical scenarios as well, e.g., see \cite{Little2011,Little2010,Sandbichler2015,Wu2014,Perrone2016}. \revision{Furthermore, total variation in 1D has frequently been subject of mathematical research \cite{Condat2013,Selesnick2015,Selesnick2012,Mammen1997,Briani2011,Grasmair2007,gl19}}. 

In the discrete one-dimensional setting of this article, such a \emph{total variation (TV) model} translates into the assumption of \emph{gradient-sparsity}, i.e., we have that $\cardinality{\supp(\TV\grtr)} \leq \s$, where 
\begin{equation}
\TV  \coloneqq \matr{
	-1 & 1 & 0 & \dots & 0 \\
	0 & -1 & 1 & 0 & 0\\
	\vdots & & \ddots &\ddots & \vdots \\
	0 & \dots & 0 & -1 & 1} \in \R^{N \times n}
\end{equation}
denotes a \emph{discrete gradient operator}\footnote{Note that this specific difference operator is based on forward differences and assumes \emph{von Neumann boundary conditions}. This seems to be a particularly popular choice in the context of compressed sensing, which we do also follow in this work. Nevertheless, we expect that it is straightforward to extend our results to other common variants of gradient operators.} with $N \coloneqq n-1$. In other words, the signal $\grtr \in \R^n$ is assumed to be piecewise constant with at most $\s$ \emph{jump discontinuities}. In order to reconstruct $\grtr$ from a given measurement vector $\y = \A \grtr \in \R^m$, we consider the following modified version of the basis pursuit~\eqref{eq:intro:bp}, which we refer to as \emph{TV minimization (in 1D)}:
\begin{equation}\label{eq:intro:tv-1}\tag{$\text{TV-1}$}
	\min_{\x \in \R^n} \lnorm{\TV\x}[1] \quad \text{subject to \quad $\A \x = \y$.}
\end{equation}
The main objective of the present work is to study this minimization problem from a compressed sensing perspective by analyzing its \emph{sample complexity} for Gaussian measurements. In a nutshell, we intend to answer the following question:
\begin{highlight}
Assuming that $\A \in \R^{m\times n}$ is a standard Gaussian random matrix, under which conditions is it possible to recover an $\s$-gradient-sparse signal $\grtr \in \R^n$ via TV minimization~\eqref{eq:intro:tv-1} with the near-optimal rate of $m \gtrsim \s\cdot \PolyLog(n)$ measurements?
\end{highlight}

\subsection{Prior Art}
\label{subsec:intro:priorart}

TV minimization forms an active branch of research in compressed sensing, and we refer the interested reader to~\cite{krahmer_total_2017} for a comprehensive overview. 
The following discussion is confined to a brief historical outline of the subject as well as several results in the literature that are of particular relevance to the problem setup of this work; see also Table~\ref{tab:intro} for a compact summary.

\begin{table}
	\renewcommand{\arraystretch}{1.5}
	\begin{center}
		\begin{tabular}{|c||c|c|}
			\hline
			\diagbox[width=2.5cm,height=1cm]{$\A$}{$d$D} &  1D  & $\geq$2D \\
			\hline \hline
			\multirow{2}{*}{Gaussian} & \textbf{$\s \log^2 (n)$} (non-unif.) {\footnotesize[ours]} & $\s \cdot \PolyLog(n,\s)$ \\
			\cdashline{2-2}
			& $\sqrt{\s n} \cdot \log (n)$ (unif.) {\footnotesize \cite{cai_guarantees_2015}} & \footnotesize\cite{cai_guarantees_2015,Needell2013,Needell2013b} \\
			\hline
			\multirow{2}{*}{Fourier} & \multicolumn{2}{c|}{$\s \cdot \PolyLog(n,\s)$}  \\
			& \multicolumn{2}{c|}{\footnotesize \cite{candes2006cs,Poon2015,Krahmer2014}} \\
			\hline
		\end{tabular}
	\end{center}
	\caption{An overview of known asymptotic-order sampling rates for TV minimization in compressed sensing, ignoring universal and model-dependent constants.}
	\label{tab:intro}
\end{table}

Already one of the seminal works on compressed sensing by \citeauthor{candes2006cs} \cite{candes2006cs} considers the recovery of $\s$-gradient-sparse signals via \eqref{eq:intro:tv-1} from randomly subsampled Fourier measurements. They show that with high probability, $m \gtrsim \s \log (n)$ of such measurements are sufficient for exact recovery of $\grtr$---a guarantee that naturally extends to more spatial dimensions. However, we emphasize that this result is specifically tailored to Fourier measurements. Although extendable to stable recovery~\cite{Patel2012}, it does not allow for robustness against noise in the measurements, which is a crucial feature for applications. This issue was subsequently addressed by \citeauthor{Needell2013} in~\cite{Needell2013,Needell2013b}. These works establish robust and stable recovery for measurement matrices that satisfy a restricted isometry property (RIP) when composed with the orthonormal Haar wavelet transform; for instance, such a condition is fulfilled with $m\gtrsim \s\cdot \PolyLog (n,\s)$ measurements, if $\A$ is a (sub-)Gaussian matrix or a partial Fourier matrix with randomized column signs. This achievement is based on a connection between compressibility of Haar wavelet representations and bounded variation of a function. Unfortunately, such a property only holds true for two and more spatial dimensions, so that the results of~\cite{Needell2013,Needell2013b} cannot be extended to the one-dimensional case. 

To the best of our knowledge, these are the first guarantees in the literature indicating that sparse recovery by TV minimization in 1D might behave differently than in higher dimensions. We highlight already at this point that a certain connection between gradient-sparse signals and the Haar wavelet transform will also play a crucial role in our approach, yet it appears in a very different manner. 

\citeauthor{Krahmer2014}~\cite{Krahmer2014} and \citeauthor{Poon2015}~\cite{Poon2015} also cover TV-based recovery from subsampled Fourier measurements by relying on the Haar wavelet transform. Both works employ so-called variable-density sampling of the Fourier transform, where the sampling in the low frequencies is denser than in the high frequencies. Such a strategy enables the authors of~\cite{Krahmer2014} to apply the results of~\cite{Needell2013,Needell2013b} and thereby to show stable and robust recovery guarantees in 2D, which are based on the RIP. 
The work of \citeauthor{Poon2015}~\cite{Poon2015} contains similar results, which are non-uniform and allow for stable and robust recovery from $m\gtrsim \s \log (n)$ Fourier measurements in 1D. \citeauthor{Poon2015} also provides comparable theorems in 1D and 2D for a uniform sampling pattern of the (discrete) Fourier domain.
Of more interest to our approach is another result in~\cite{Poon2015} (see Thm.~2.6 therein) concerning the recovery of gradient-sparse signals in 1D that obey additional structural constraints: it shows that  $m \gtrsim \s~\cdot~\PolyLog(n,\s)$ random Fourier measurements up to a low-frequency threshold are sufficient for successful recovery of~$\grtr$; importantly, this frequency threshold is determined by the minimal distance of the jump discontinuities of $\grtr$.

The recovery method of~\eqref{eq:intro:tv-1} is formulated as minimization of an \emph{analysis-based prior}. This strategy builds upon the assumption that the signal of interest $\grtr$ is of low complexity after being transformed by a so-called \emph{analysis operator}, which is $\TV \in \R^{N \times n}$ in our case. In recent years, this conceptual idea has gained popularity and is sometimes also referred to as the \emph{cosparse analysis model}~\cite{nam_cosparse_2013}. Note that in contrast, a \emph{synthesis-based prior} assumes that $\grtr$ possesses a low-complexity representation in a \emph{dictionary} $\vec{D} \in \R^{n \times  d}$~\cite{Elad2006}, e.g., that $\grtr = \vec{D} \vec{z}$ with a sparse coefficient vector $\vec{z} \in \R^d$. Many works on analysis-based priors operate under the hypothesis that the underlying analysis operator forms a frame \cite{candes2011csdict,kabanava_analysis_2015,liu2012,foucart2014,Haltmeier2013,Alberti2019}, making related theoretical guarantees inapplicable to TV minimization. 
Other approaches are concerned with deriving general and abstract recovery conditions~\cite{nam_cosparse_2013,vaiter2013}, but these do not yield bounds for the number of measurements required for successful recovery.  
Finally, there has been an effort to describe the sampling rate of the $\l{1}$-analysis basis pursuit as tight as possible in a non-asymptotic sense~\cite{kabanava_robust_2015,Daei2018,Daei2019,genzel2017cosparsity}. To the best of our knowledge, the most precise prediction of the associated phase transition is provided in~\cite{genzel2017cosparsity}, where the incoherence structure of the analysis operator is explicitly taken into account. However, even this result still exhibits a gap between the predicted and the actual location of the phase transition for TV minimization in 1D, which becomes especially striking when the gradient-sparsity is small compared to the ambient dimension (cf.~\cite[Sec.~3.2]{genzel2017cosparsity}).  

Finally, of particular importance for our work are the findings of \citeauthor{cai_guarantees_2015} in~\cite{cai_guarantees_2015}. Indeed, at first sight, their main result seems to imply a negative answer to our initial question on TV-minimization in 1D with Gaussian measurements:
\begin{theorem}[\protect{\cite[Thm.~2.1]{cai_guarantees_2015}}]\label{thm:cai}
	Let $\A \in \R^{m \times n}$ be a standard Gaussian random matrix. Then we have the following:
	\begin{thmlist}
	\item\label{thm:cai:upper}
		There exist universal constants $C_1,C_2,C_3,C_4 > 0$ such that for
		\begin{equation}\label{eq:cai:meas}
			m \geq C_1\cdot \sqrt{s n} \cdot (\log (n) + C_2)
		\end{equation}
		the following holds true with probability at least ${1-C_3 e^{-C_4\sqrt{m}}}$: Every $s$-gradient-sparse vector $\grtr \in \R^n$ is exactly recovered via TV minimization \eqref{eq:intro:tv-1} with input $\y = \A \grtr \in \R^m$. 
	\item\label{thm:cai:lower}
		For every $\eta \in \intvop{0}{1}$, there exist constants $C_1,C_2>0$ depending on $\eta$ and a universal constant $C_3>0$ such that for $C_1 \leq \s < n/4-1$, the following statement holds true with probability at least $1-\eta$: There exist infinitely many $\grtr \in \R^n$ with $\cardinality{\supp (\TV \grtr)} = s$ such that TV minimization~\eqref{eq:intro:tv-1} does not recover $\grtr$ from $m\leq C_2 \cdot \sqrt{s n} - C_3$ measurements. This set particularly contains signals with \emph{(alternating) dense jumps}, which satisfy $\sign(\TV \grtr) = \{(-1)^1, (-1)^2, \dots, (-1)^{\s}, 0, \dots, 0\}$.
	\end{thmlist}
\end{theorem}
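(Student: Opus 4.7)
The plan is to derive both parts from the standard conic-geometric framework of compressed sensing with Gaussian measurements. The starting point is the descent cone characterization: \eqref{eq:intro:tv-1} recovers $\grtr$ from $\y = \A \grtr$ if and only if $\ker(\A) \cap \desccone{\lnorm{\TV \cdot}[1], \grtr} = \{\vnull\}$. By Gordon's escape through a mesh, a standard Gaussian $\A \in \R^{m \times n}$ satisfies this condition on a high-probability event whenever $m$ exceeds the squared conic Gaussian mean width $\effdim{\desccone{\lnorm{\TV \cdot}[1], \grtr} \cap \S^{n-1}}$ up to constants; conversely, recovery must fail once $m$ falls below the statistical dimension of the same cone by more than a constant multiple of $\sqrt{n}$, e.g.\ via the conic phase transition of Amelunxen--Lotz--McCoy--Tropp.

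For part~(a), I would uniformly bound the squared mean width over all $s$-gradient-sparse signals. Using the polarity formula
\[
\effdim{\desccone{\lnorm{\TV \cdot}[1], \grtr} \cap \S^{n-1}} \leq \min_{\tau \geq 0} \mathbb{E}\bigl[\distsq{\gaussian}{\tau \cdot \cone{\subdiff{\lnorm{\TV \grtr}[1]}}}\bigr],
\]
the problem reduces to constructing, for every jump support $\ssupp \subseteq \{1,\dots,N\}$ with $|\ssupp| \leq s$ and every sign pattern $\anasign{\grtr} = \sign(\TV \grtr)$, a good dual certificate $\TV^T v$ with $v_{\ssupp} = \anasign{\grtr}_{\ssupp}$ and $\lnorm{v_{\ssuppc}}[\infty] \leq 1$. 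A natural choice is to interpolate $v$ linearly between consecutive jump locations, which is adapted to the forward-difference structure of $\TV$ and controls the squared distance gap-by-gap: a gap of length $\ell$ contributes $\asympfaster{\ell}$ variance, so that summing over the worst-case partition and optimizing in $\tau$ yields the $\asympfaster{\sqrt{sn}}$ bound. Union-bounding over all $\binom{N}{s}$ supports injects the $\log(n)$-factor, while the tail $e^{-C_4 \sqrt{m}}$ is inherited from Gordon's concentration inequality combined with a covering argument.

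For part~(b), I would specialize to the alternating dense-jumps configuration $\anasign{\grtr} = ((-1)^1,\dots,(-1)^s,0,\dots,0)$ and derive a matching lower bound on the statistical dimension. For this sign pattern, the primitive $\TV^T \anasign{\grtr}$ has a highly oscillatory shape, whereas any admissible extension of $v$ on $\ssuppc$ is an $\lnorm{\cdot}[\infty]$-bounded discrete integration with prescribed boundary signs, so it cannot cancel a significant portion of the Gaussian mass outside the jump cluster. A direct variance computation in the polarity formula then produces $\ed{\desccone{\lnorm{\TV \cdot}[1], \grtr}} \gtrsim \sqrt{sn}$, and the conic phase transition forces recovery to fail with probability at least $1-\eta$. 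The infinitely many failing signals arise by perturbing $\grtr$ on the between-jump plateaus without altering $\sign(\TV \grtr)$; this leaves the descent cone (and hence the failure event) invariant.

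The main obstacle lies in the precise evaluation of the dual-certificate distance in part~(a): $\TV$ is neither orthonormal nor a tight frame, and its pseudoinverse behaves like an integration operator with slowly decaying singular values, so the classical sparsity-based bound $\asympfaster{s\log(n/s)}$ is unavailable. One has to track how the piecewise-linear interpolation interacts with the Gaussian noise across arbitrary gap structures, and the matching lower bound in part~(b) hinges on showing that the densely-packed alternating configuration is precisely the extremal case saturating this $\sqrt{sn}$-barrier.
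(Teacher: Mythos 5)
The paper does not prove this theorem; it imports it verbatim from Cai--Xu~\cite{cai_guarantees_2015} and only remarks that their proof of part~(a) proceeds by verifying a nullspace property adapted to the $\ell_1$-gradient seminorm, while part~(b) is obtained by an explicit construction of a bad perturbation in $\ker\A$. So you are not reconstructing the paper's own argument but rather proposing an alternative proof of a cited result.

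Your route through the conic-geometry framework (Gordon's escape for the upper bound, the ALMT phase transition plus statistical-dimension lower bound for the failure direction) is a genuine and legitimate alternative, and in fact is closer in spirit to the machinery the present paper actually develops for Theorem~\ref{thm:results:mwbound}. The ``infinitely many failing signals'' step is handled correctly: the descent cone of $\lnorm{\TV(\cdot)}[1]$ at $\grtr$ depends only on $(\supp(\TV\grtr),\sign(\TV\grtr))$, so the event of failure is invariant under plateau perturbations preserving the sign pattern.

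However, there is a real gap in the heart of part~(a). Choosing $v$ to ``interpolate linearly between consecutive jump locations'' produces a \emph{deterministic} dual vector, independent of $\gaussian$. For any fixed $v$ one gets
\begin{equation}
\mean{}_{\gaussian}\big[\lnorm{\gaussian - \tau\TV^\T v}^2\big] = n + \tau^2\lnorm{\TV^\T v}^2 \geq n,
\end{equation}
so no optimization over $\tau$ can beat the trivial bound, let alone reach $\asympfaster{\sqrt{sn}}$. The certificate must track $\gaussian$ off the support --- the standard construction is a clipped random walk $v_j \approx \clip{v_{j-1}-g_j/\tau}{1}$ on each gap, giving a per-gap cost of order $\ell/\tau^2$ rather than $\asympfaster{\ell}$; the $\tau$-optimization $\tau\asymp(n/s)^{1/4}$ then delivers $\asympfaster{\sqrt{sn}}$. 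Your accounting omits this $\gaussian$-adapted ingredient. Relatedly, the claim that ``union-bounding over all $\binom{N}{s}$ supports injects the $\log(n)$-factor'' is not quite right: the mean width of the union of the $\binom{N}{s}2^s$ descent cones exceeds the worst single one by an additive $\asympfaster{\sqrt{s\log(n/s)}}$, which for $s\ll n$ is dominated by the $(sn)^{1/4}$ main term; the multiplicative $\log n$ in~\eqref{eq:cai:meas} is an artifact of the nullspace-property route and need not appear in a tight conic-geometry proof. Finally, the lower bound in part~(b) is only asserted --- ``cannot cancel a significant portion of the Gaussian mass'' must be turned into a quantitative lower bound on $\mean{}_{\gaussian}[\distsq{\gaussian}{\cone{\subd{\lnorm{\TV(\cdot)}[1]}(\grtr)}}]$, which is precisely the hard step and is not supplied.
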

The conclusion from this theorem is as surprising as it is discouraging: It reveals that the threshold for successful recovery of $s$-gradient-sparse signals via \eqref{eq:intro:tv-1}  is essentially given by $\sqrt{s n}$-many Gaussian measurements. Remarkably, this rate does not resemble the standard criterion ${m \gtrsim \s \cdot \PolyLog (n,\s)}$. This underpins the special role of 1D~TV because the latter rate does apply to TV in two and more spatial dimensions, as shown in~\cite[Sec.~6]{cai_guarantees_2015}. In fact, the $\sqrt{s n}$-rate promoted by Theorem~\ref{thm:cai} is significantly worse, in particular, when the gradient-sparsity $s$ is relatively small compared to the ambient dimension $n$.

It is noteworthy that the previous result is formulated uniformly across all $s$-gradient-sparse signals. Indeed, the proof of Theorem~\ref{thm:cai}\ref{thm:cai:upper} is based on verifying a nullspace property that is suitably adapted to minimizing the $\l{1}$-gradient semi-norm. On the other hand, the lower bound of Theorem~\ref{thm:cai}\ref{thm:cai:lower} is established by considering signals that are somewhat ``unnatural'' in the context of TV minimization; these \emph{dense-jump signals} are also visualized further below in Figure~\ref{fig:intro:n30} and~\ref{fig:intro:n50}. Hence, the uniform $\sqrt{sn}$-rate of Theorem~\ref{thm:cai} just describes the worst-case performance on the class of all $s$-gradient-sparse signals. 
In other words, Theorem~\ref{thm:cai} states that there is no hope to lower the required sampling rate when asking for recovery of \emph{all} $s$-gradient-sparse signals simultaneously. Nevertheless, one might still wonder whether a meaningful restriction of the class of gradient-sparse signals allows for an improvement of the situation. In fact, the numerical experiments of~\cite[Sec.~5]{cai_guarantees_2015} and~\cite[Sec.~3.2]{genzel2017cosparsity} indicate that recovery of more natural piecewise constant signals succeeds with significantly fewer measurements.

\subsection{Our Contributions and Overview}
\label{subsec:intro:contrib}

The main contribution of this work consists in breaking the $\sqrt{sn}$-complexity barrier of Theorem~\ref{thm:cai}. Taking a non-uniform perspective, we show that a large class of piecewise constant signals is already recoverable from $m \gtrsim \s \cdot \PolyLog (n)$ (sub-)Gaussian measurements.\footnote{For the sake of clarity, we will only consider the case of Gaussian measurements in this article. However, our main results can be easily extended to the sub-Gaussian case, i.e., $\A$ has i.i.d.\ isotropic, sub-Gaussian rows; see \cite[Subsec.~6.1]{genzel2017cosparsity} for more details. Apart from that, we emphasize that, despite their practical limitations, (sub-)Gaussian measurement ensembles form a generic and widely accepted benchmark for the analysis of compressed sensing algorithms. } More specifically, we will address the following aspects:
\begin{deflist}
\item 
	Our main result (see Theorem~\ref{thm:results:exact} in Subsection~\ref{subsec:results:exact}) is a signal-dependent guarantee for exact recovery via \eqref{eq:intro:tv-1}.
	Informally speaking, we show that recovery of an $\s$-gradient-sparse signal $\grtr \in \R^n$ succeeds with high probability as long as $m \gtrsim \SC^{-1} \cdot \s \log^2(n)$, where $\SC \in \intvopcl{0}{1}$ is a \emph{separation constant} for $\grtr$, which controls the minimal distance of its jump discontinuities (see Definition~\ref{def:results:msc}).
	For signals with well-separated jump discontinuities, $\SC$ can be chosen of constant order (i.e., independently of $\s$ and $n$) so that Theorem~\ref{thm:results:exact} indeed yields the near-optimal sampling rate of $m \gtrsim \s \cdot \PolyLog (n)$.
	This class of ``natural'' signals particularly encompasses those arising from a discretization of piecewise constant functions defined on an interval; see Subsection~\ref{subsec:results:signals} and Corollary~\ref{cor:results:exact:cont} for further details.
	The numerical simulation of Figure~\ref{fig:intro} serves as an illustration of our results. 
\item 
	The proof of Theorem~\ref{thm:results:exact} builds upon the non-uniform methodology of~\cite{tropp2014convex,amelunxen2014edge,chandrasekaran2012geometry,stojnic2009gordon}, which in our case suggests studying the \emph{conic mean width} of $\lnorm{\TV(\cdot)}[1]$ at $\grtr$. The major technical achievement of our work is to provide an informative upper bound for this implicit complexity parameter (see Theorem~\ref{thm:results:mwbound} in Subsection~\ref{subsec:results:mw}).
	
	A brief roadmap for the proof of Theorem~\ref{thm:results:mwbound} can be found in Subsection~\ref{subsec:results:mw}, while the formal proof is presented in Subsection~\ref{subsec:proofs:mw}. In a nutshell, the key difficulty is a box-constrained least-squares problem which does not have a closed-form solution. However, an appropriate rotation with a \emph{non-dyadic} Haar wavelet transform will allow us to ``decouple'' the variables and thereby to construct a good approximate solution. An important step in this process is to interpret the gradient support of $\grtr$ as a binary tree, whose structure depends only on the position of the jump discontinuities; in particular, the better the jumps are separated, the more balanced is the resulting tree and the closer is the separation constant~$\SC$ to~$1$. 
	We believe that the development of such a proof strategy is of independent interest and might find application to other (analysis-based) convex recovery methods.
\item 
	By combining the results of~\cite{genzel2017cosparsity} with the previously discussed upper bound for the conic mean width, our results naturally extend to stable and robust recovery (see Theorem~\ref{thm:results:stable} in Subsection~\ref{subsec:results:stable}). 
\end{deflist}

\begin{figure}[ht!]
	\centering
	\begin{subfigure}[t]{0.4\textwidth}
		\centering
		\includegraphics[width=\textwidth]{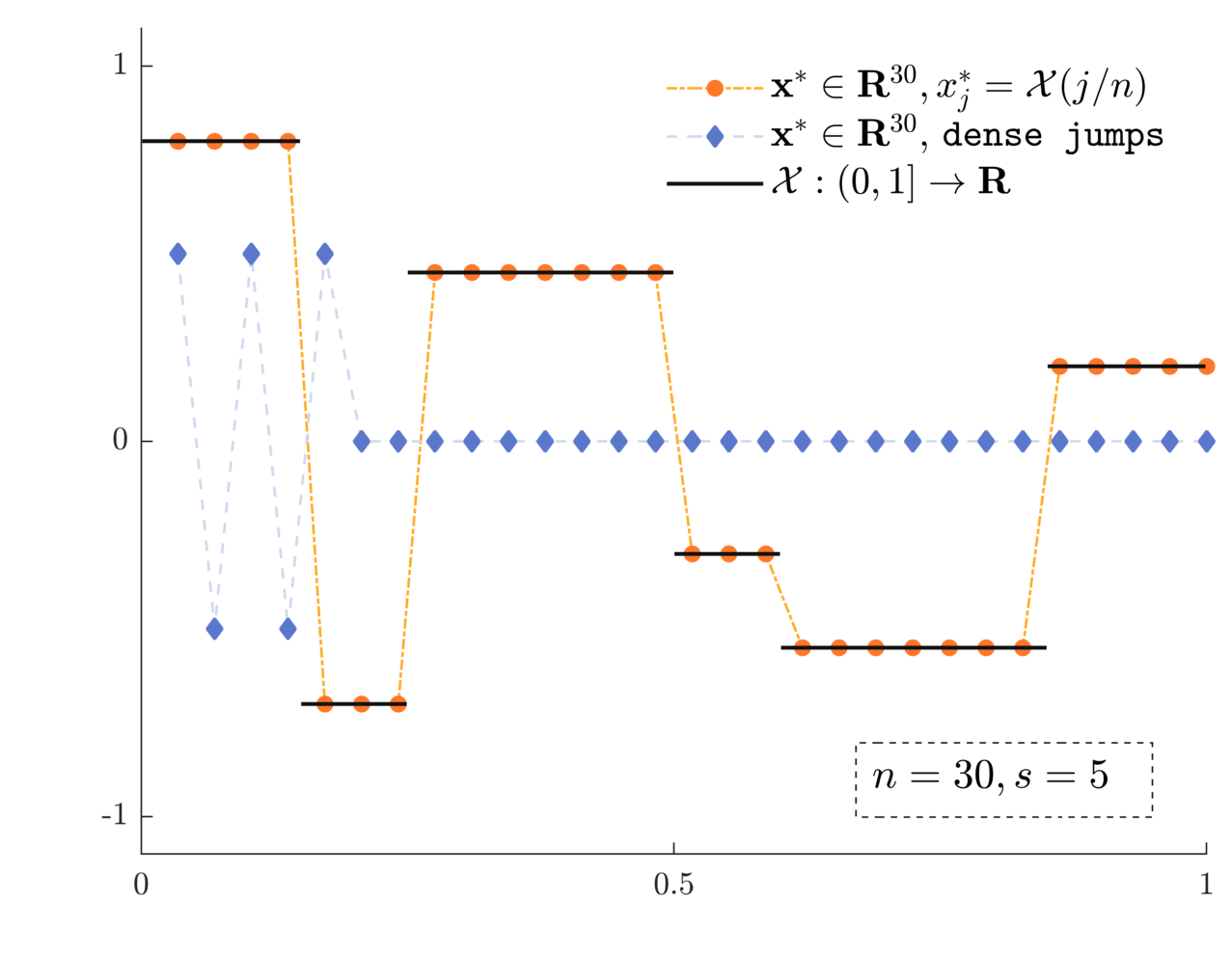}
		\caption{}
		\label{fig:intro:n30}
	\end{subfigure}%
	\qquad
	\begin{subfigure}[t]{0.4\textwidth}
		\centering
		\includegraphics[width=\textwidth]{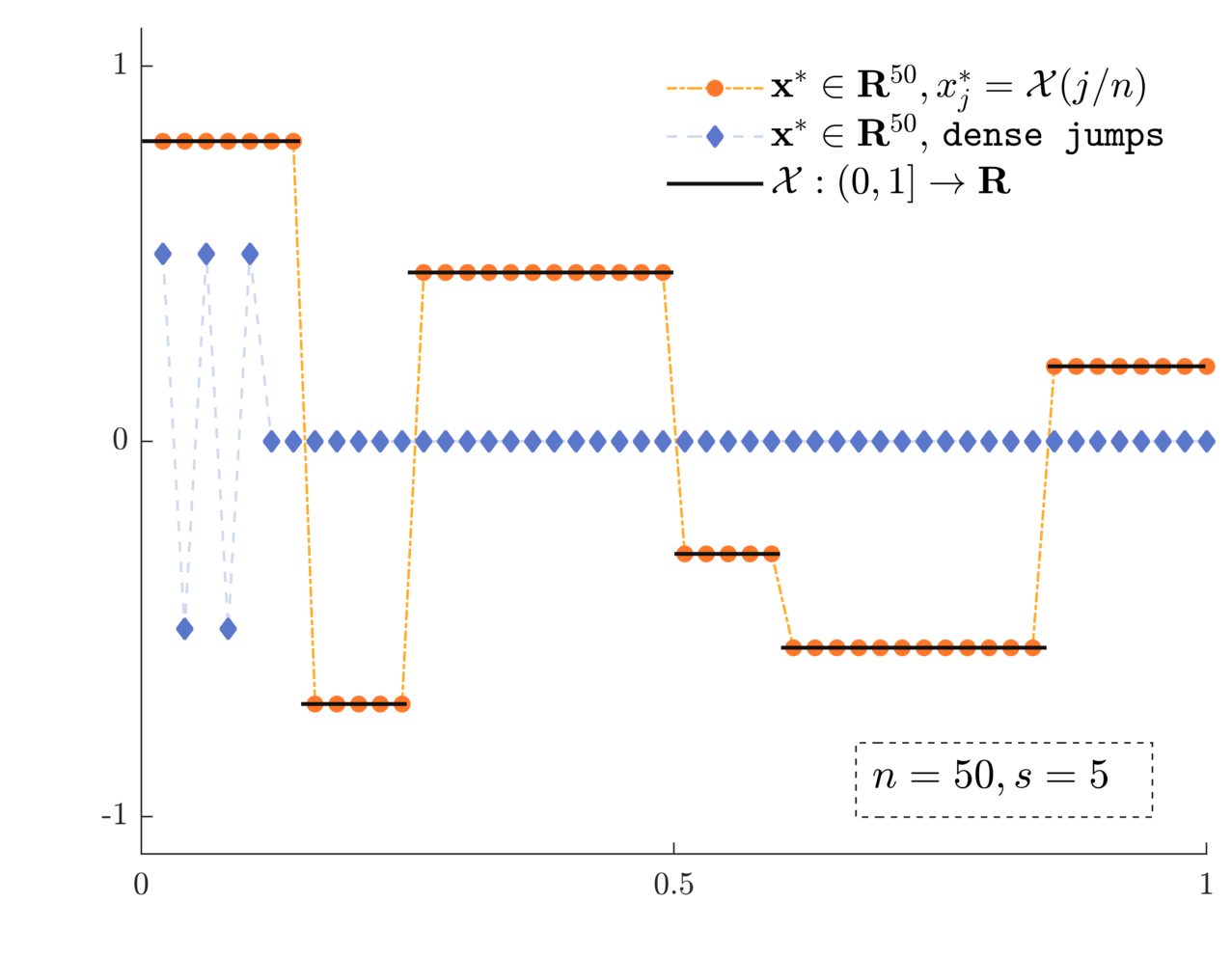}
		\caption{}
		\label{fig:intro:n50}
	\end{subfigure}
	\vspace{1em}
	\begin{subfigure}[t]{0.4\textwidth}
		\centering
		\includegraphics[width=\textwidth]{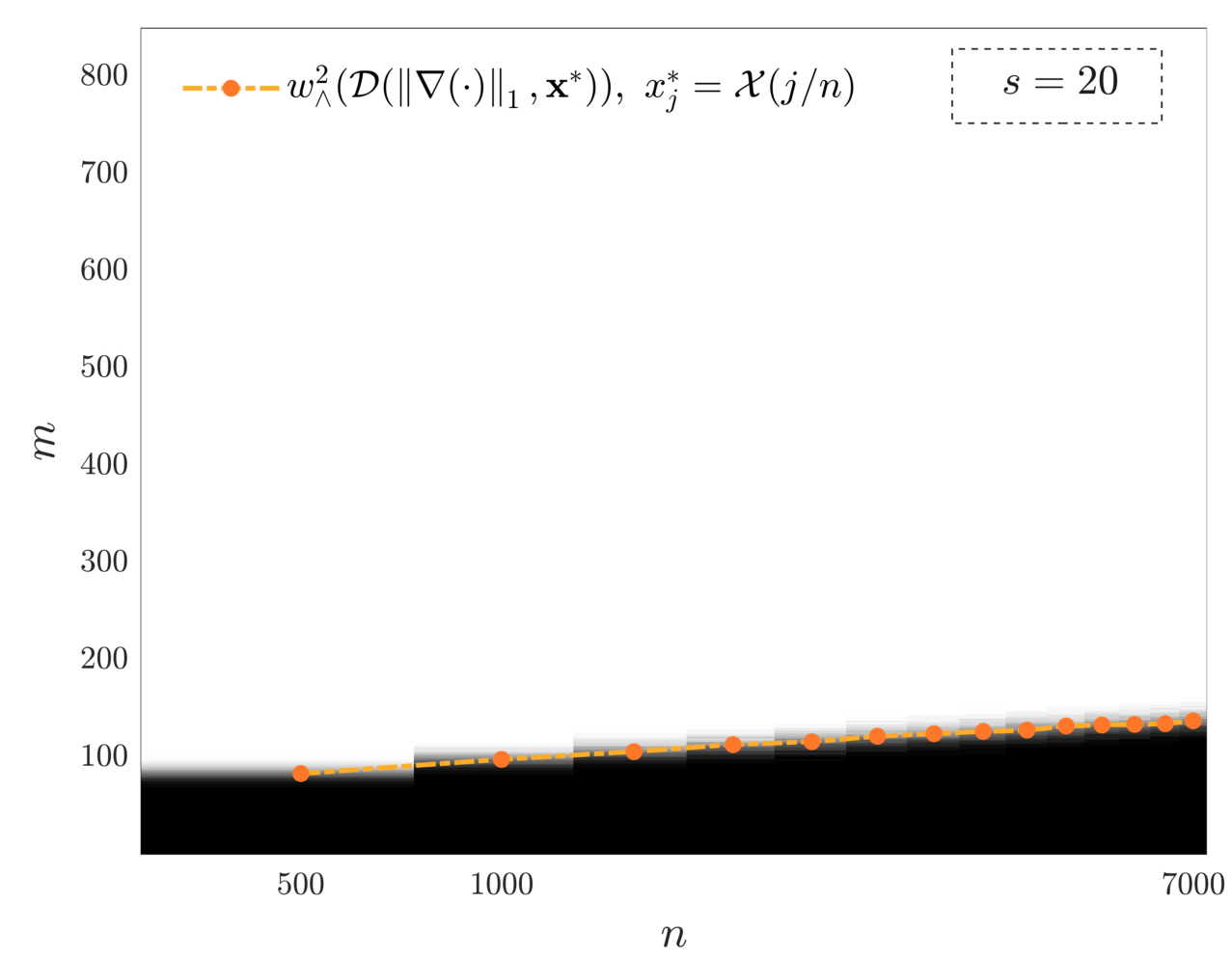}
		\caption{}
		\label{fig:intro:max}
	\end{subfigure}%
	\qquad
	\begin{subfigure}[t]{0.4\textwidth}
		\centering
		\includegraphics[width=\textwidth]{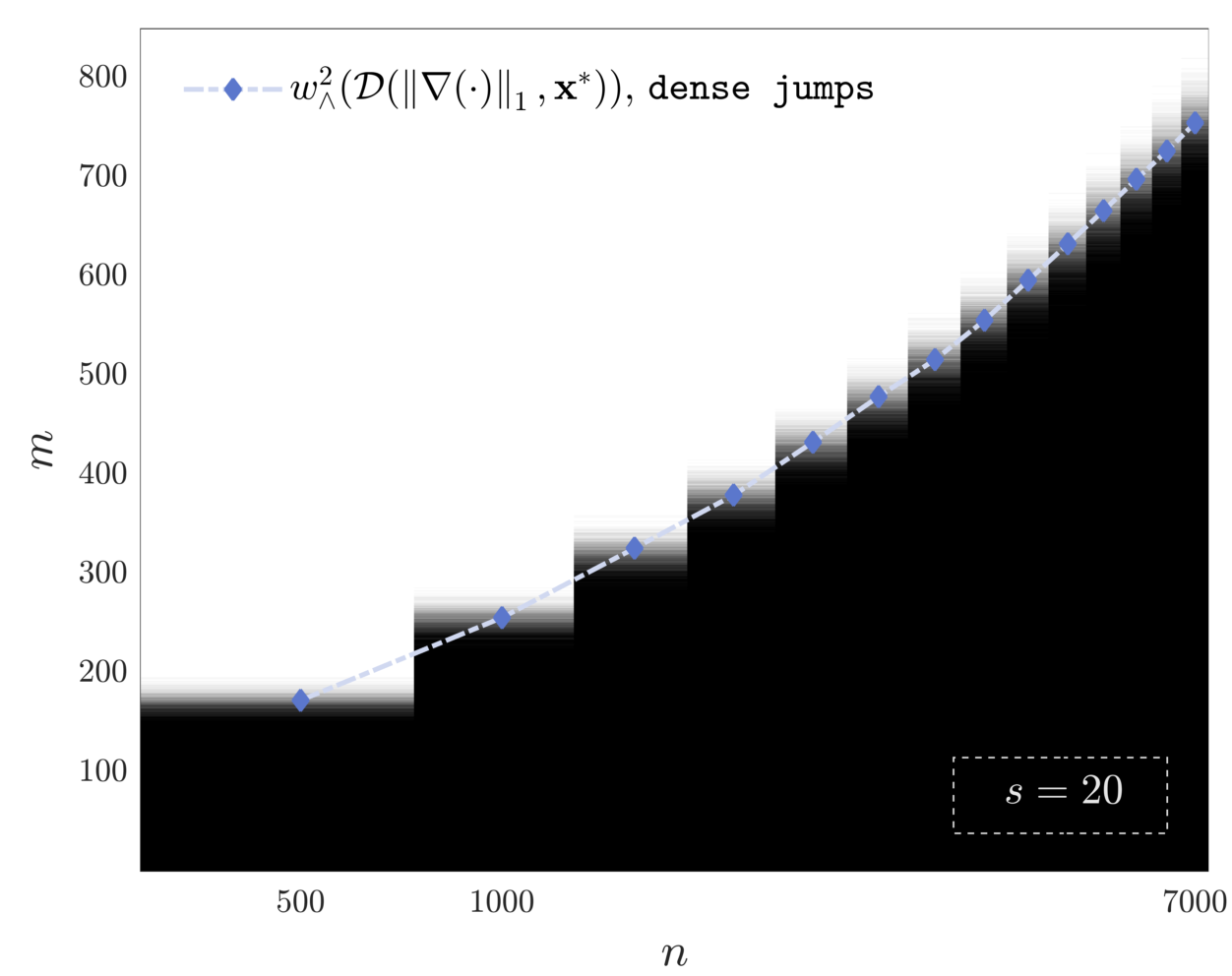}
		\caption{}
		\label{fig:intro:dense}
	\end{subfigure}%
	\caption{\textbf{Numerical simulation visualizing the recovery performance of \eqref{eq:intro:tv-1}.} Subfigure~\subref{fig:intro:n30} and~\subref{fig:intro:n50} show schematic examples of the signal classes that are considered in this experiment at different resolution levels ($n = 30$ and $n = 50$). The orange signal (with circle symbols) is defined as discretization of the piecewise constant function $\contgrtr \colon \intvopcl{0}{1} \to \R$ with $\s = 5$ jump discontinuities that is plotted in black. The blue plot (with diamond symbols) shows a so-called \emph{dense-jump signal} (see Theorem~\ref{thm:cai}\ref{thm:cai:lower}), which does not match the intuitive notion of a $5$-gradient-sparse signal; note that the spatial location of the jumps is chosen adaptively to the resolution level here, which does not correspond to a discretization of a piecewise constant function. For each signal class we have created phase transition plots: Subfigure~\subref{fig:intro:max} and~\subref{fig:intro:dense} display the empirical probability of successful recovery via TV minimization~\eqref{eq:intro:tv-1} for different pairs of ambient dimension $n$ and number of measurements $m$; note the horizontal axis uses a logarithmic scale. The corresponding grey tones reflect the observed probability of success, reaching from certain failure (black) to certain success (white). Additionally, we have estimated the conic (Gaussian) mean width of $\lnorm{\TV(\cdot)}[1]$ at $\grtr$ (denoted by $\effdim[\conic]{\descset{\lnorm{\TV(\cdot)}[1], \grtr}}$), which is known to precisely capture the phase transition (see Subsection~\ref{subsec:results:mw} for details). The result of Subfigure~\subref{fig:intro:dense} confirms that the class of dense-jump signals suffers from the $\sqrt{\s n}$-bottleneck of Theorem~\ref{thm:cai}. On the other hand, Subfigure~\subref{fig:intro:max} suggests that the discretized signals seem to break this bottleneck, indicating the common logarithmic growth in~$n$.}
	\label{fig:intro}
\end{figure}

From a conceptual point of view, our approach is closely related to the findings of~\cite{genzel2017cosparsity}:
There, a novel sampling-rate bound was proposed that is based on a generalized notion of sparsity, taking the support and the coherence structure of the underlying analysis operator into account.
This result defies the conventional wisdom that promotes the analysis sparsity as the crucial complexity parameter.
In the present work, we go one step further and examine this general phenomenon in the seemingly simple case of TV minimization: The homogeneous coherence structure of the gradient operator $\TV$ suggests that the situation is not significantly more complicated than for orthonormal bases.
However, our results reveal that the gradient support, and especially the (relative) location of discontinuities, is crucial. 
As a consequence, we are able to identify subclasses of the set of $s$-gradient-sparse signals for which the (asymptotic) sample complexity differs substantially. 

In general, we believe that such additional structural properties become important when going beyond orthonormal bases as sparsifying system. Nevertheless, to the best of our knowledge, this aspect has largely gone unnoticed in the literature.  
In this regard, it is worth mentioning that there are several approaches to compressed sensing that take additional support properties into account, such as \emph{joint/group sparsity}~\cite{Duarte2005,Yuan2006} or \emph{model-based compressed sensing}~\cite{baraniuk2010modelcs}. However, these approaches are based on adapting the regularizer of the recovery method to a refined sparsity model, while we stick to vanilla TV minimization in 1D. In~\cite{adcock2017}, it is also argued that sparsity on its own does not govern the reconstruction quality and that the structure of the coefficients needs to be considered. The latter work is nevertheless tailored to multilevel transforms, for which the sparsity appears in different levels. Furthermore, the need to address multilevel sparsity arises from the specific incoherence pattern with the underlying structured measurement setup. Such a constraint is certainly not relevant to the Gaussian model considered in our work. 

Finally, regarding TV-based recovery, recall that a notion similar to the above-mentioned \mbox{$\SC$-}sep\-ara\-bil\-ity also appears in~\cite{Poon2015}. However, this analysis addresses a specific recovery task from (low-frequency) Fourier samples, whereas we focus on Gaussian measurement matrices. Moreover, we note that the TV-Fourier combination allows for an important simplification of the problem, since the gradient operator commutes with the Fourier transform (differentiation forms a Fourier multiplier). This is also reflected by the fact that the $\sqrt{s n}$-bottleneck does not exist for 1D~TV~compressed sensing with Fourier measurements (see~\cite{candes2006cs,Poon2015}). 

\subsection{Notation}
\label{subsec:intro:notation}

The letter $C$ is reserved for constants, whose values could change from time to time, and we speak of a \emph{universal constant} if its value does not depend on any other involved parameter.
If an inequality holds up to a universal constant $C$, we usually write $A \lesssim B$ instead of $A \leq C \cdot B$; the notation $A \asymp B$ means that both $A \lesssim B$ and $B \lesssim A$ hold true.

For $d \in \N$, we set $[d] \coloneqq \{1, \dots, d\}$. The \emph{cardinality} of an index set $\mathcal{I} \subset [d]$ is denoted by $\cardinality{\mathcal{I}}$ and its \emph{set complement} is given by $\setcompl{\mathcal{I}} \coloneqq [d] \setminus \mathcal{I}$. Vectors and matrices are denoted by lower- and uppercase boldface letters, respectively. 
The \mbox{$j$-th} entry of a vector $\v \in \R^d$ is denoted by $[\v]_j$, or simply by $v_j$ if there is no danger of confusion.
Similarly, the restriction of $\v \in \R^d$ to an index set $\mathcal{I} \subset [d]$ is denoted by $[\v]_{\mathcal{I}} \in \R^{\cardinality{\mathcal{I}}}$ or simply $\v_{\mathcal{I}}$.
The \emph{zero vector} (in $\R^d$) is denoted by $\vnull$, while $\1_d \in \R^d$ is the \emph{all-ones vector} and $\I{d} \in \R^{d \times d}$ the \emph{identity matrix}.

The \emph{support} of $\v \in \R^d$ is defined as $\supp(\v) \coloneqq \{ j \in [d] \suchthat v_j \neq 0 \}$, and its cardinality is referred to as the \emph{sparsity} of $\v$.
For $1 \leq p \leq \infty$, we denote the \emph{$\l{p}$-norm} on $\R^d$ by $\lnorm{\cdot}[p]$, and the \emph{Euclidean unit sphere} is given by $\S^{d-1} \coloneqq \{ \v \in \R^d \suchthat \lnorm{\v} = 1 \}$. 
Furthermore, we write $\cone{\sset}$ for the \emph{conic hull} of a subset $\sset \subset \R^d$ and $\indset{\sset} \colon \R^d \to \{0, 1\}$ denotes the \emph{indicator function} (or \emph{step function}) of $\sset$.

The \emph{expected value} is denoted by $\mean{}_{\gaussian}[\cdot]$, where the subscript indicates that the expectation is computed with respect to a certain random variable/vector $\gaussian$. Moreover, we write $\gaussian \distributed \Normdistr{\vnull}{\I{d}}$ if $\gaussian$ is a \emph{standard Gaussian random vector} in $\R^d$.

Let $v \in \R$ and $s \geq 0$. The \emph{sign} of $v$ is denoted by $\sign(v)$, where $\sign(0) = 0$, and if $\sign(\cdot)$ is applied to a vector, this operation is understood entrywise. We also define the \emph{clip function} $\clip{v}{s} \coloneqq \sign(v) \cdot \min\{\abs{v}, s\}$ and the \emph{positive part} $\pospart{v} \coloneqq \max\{v, 0\}$.
By $\round{\cdot}$, $\ceil{\cdot}$, and $\floor{\cdot}$, we denote the \emph{rounding}, \emph{ceiling}, and \emph{floor function}, respectively.
Finally, $\PolyLog(\cdot)$ is the generic notation for a \emph{polylogarithmic function}.

\section{Main Results}
\label{sec:results}

In this section, we present the main results of this work. We begin with the underlying signal model in Subsection~\ref{subsec:results:signals}, which is motivated by the discretization of piecewise constants functions defined on a continuous domain. In this context, we also introduce the notion of $\SC$-separated signals (see Definition~\ref{def:results:msc}), which is the only technical requirement for understanding our recovery guarantees. Hence, equipped with this notion the hurried reader may jump directly to Subsection~\ref{subsec:results:exact}, where the main result of Theorem~\ref{thm:results:exact} and its implications are discussed.
The proof strategy of Theorem~\ref{thm:results:exact} is outlined subsequently in Subsection~\ref{subsec:results:mw}, including the basic geometric ideas as well as a brief roadmap for the proof of our mean-width bound in Theorem~\ref{thm:results:mwbound}.
An extension to stable and robust recovery is then derived in Subsection~\ref{subsec:results:stable}. 
Finally, we point out some limitations and possible refinements of our analysis in Subsection~\ref{subsec:results:limitations}.

\subsection{Gradient-Sparse Signals}
\label{subsec:results:signals}

As foreshadowed in the introduction, the $\sqrt{s n}$-bottleneck of Theorem~\ref{thm:cai} is due to the fact that this result addresses \emph{uniform} recovery across the class of all $\s$-gradient-sparse signals. In particular, the worst-case sampling rate is attained for the somewhat artificial dense-jump signals. While their gradients are certainly sparse, such signals do not match the intuitive notion of being piecewise constant. 

In contrast, the goal of this work is to demonstrate that the $\sqrt{s n}$-bottleneck can be broken for ``more natural'' piecewise constant signals. In order to make our approach precise, we need to introduce the concept of $\SC$-separation, which allows us to control the minimal distance of jump discontinuities and thereby to exclude the aforementioned pathological examples.
A very accessible path towards the definition of $\SC$-separation is to view a signal vector as discretization of a piecewise constant function defined on an interval, say $\intvopcl{0}{1}$. Therefore, we first specify $\cont\SC$-separation in the continuous setting:\footnote{The `$\circ$'-notation indicates that the corresponding object is defined in the continuous setting. This allows the reader for an easy differentiation from analogous parameters in the discrete setting.}
\begin{definition}[Separation constant -- continuous version]\label{def:results:contmsc}
	Let $\contgrtr \colon \intvopcl{0}{1} \to \R$ be a piecewise constant function such that\footnote{Note that the use of \emph{half-open} intervals $\intvopcl{0}{1}$ and $\intvopcl{\cont\jump_{i-1}}{\cont\jump_i}$ is not of particular importance here, but rather for the sake of convenience.}
	\begin{equation}\label{eq:results:contmsc:signal}
	\contgrtr(t) = \sum_{i = 1}^{\s+1} h_i \cdot \indset{\intvopcl{\cont\jump_{i-1}}{\cont\jump_i}}(t)
	\end{equation}
	for certain \emph{jump discontinuities} $\cont\jump_1, \dots, \cont\jump_{\s} \in \intvop{0}{1}$ with $0 \eqqcolon \cont\jump_0 < \cont\jump_1 < \dots < \cont\jump_{\s} < \cont\jump_{\s+1} \coloneqq 1$ and \emph{level set coefficients} $h_1, \dots, h_{\s+1} \in \R$ with $h_i \neq h_{i+1}$ for all $i = 1, \dots, \s$.
	We say that $\contgrtr$ is \emph{$\cont\SC$-separated} for some \emph{separation constant} $\cont\SC > 0$ if
	\begin{equation}\label{eq:results:contmscc}
	\min_{i \in [\s+1]}\abs{\cont\jump_i - \cont\jump_{i-1}} \geq \frac{\cont\SC}{\s+1}.
	\end{equation}
\end{definition}

Intuitively, a separation constant $\cont\SC$ for $\contgrtr$ measures the deviation of its jump discontinuities from an equidistant singularity pattern: The larger $\cont\SC \in \intvopcl{0}{1}$, the better the jumps are separated. Indeed, in the optimal case of equidistantly distributed singularities, we have that $\cont\jump_i = i/(\s+1)$ and therefore $\cont\SC = 1$ is a valid choice, independently of the total number of discontinuities $s$. In this situation, $\cont\SC$ is as large as possible, since the separation constant is always bounded by $1$:
\begin{equation}
\cont\SC \leq (\s + 1) \cdot \Big( \min_{i \in [\s+1]}\abs{\cont\jump_i - \cont\jump_{i-1}} \Big) \leq (\s + 1) \cdot \Big( \tfrac{1}{\s+1} \underbrace{\sum_{i = 1}^{\s+1} (\cont\jump_i - \cont\jump_{i-1}) }_{ = 1} \Big) = 1.
\end{equation}
On the other hand, the closer the minimal distance between the jumps of $\contgrtr$, the smaller becomes~$\cont\SC$. Consequently, the separation constant $\cont\SC$ reflects the ratio between the minimal jump distance of $\contgrtr$ and its equidistant counterpart with the same number of discontinuities.

Of particular importance for this work is the following discrete analog of the previous definition. It naturally arises when the continuous domain $\intvopcl{0}{1}$ is replaced by the discrete domain $\{1, \dots, n\}$:
\begin{definition}[Separation constant -- discrete version]\label{def:results:msc}
	Let $\grtr \in \R^n$ be a signal with $\s > 0$ \emph{jump discontinuities} such that $\supp(\TV\grtr) = \{\jump_1,\dots,\jump_{\s}\}$ with $0 \eqqcolon \jump_0 < \jump_1 < \dots < \jump_{\s} < \jump_{\s+1} \coloneqq n$.
	We say that $\grtr$ is \emph{$\SC$-separated} for some \emph{separation constant} $\SC > 0$ if
	\begin{equation}\label{eq:results:msc}
	\min_{i \in [\s+1]} \frac{\abs{\jump_i - \jump_{i-1}}}{n} \geq \frac{\SC}{\s+1}.
	\end{equation}
\end{definition}
Clearly, one may interpret the entries of $\grtr$ as discrete function values on a \emph{nodal/cell-centered} grid and each entry $[\TV\grtr]_j$ as a \emph{face-staggered} finite difference between the $j$-th and $(j+1)$-th node; see Figure~\ref{fig:proofs:mw:face_edges} for an illustration. Hence, each $\jump_i \in \supp(\TV\grtr)$ corresponds to a jump of height $[\TV\grtr]_{\jump_i}$ between the $i$-th and $(i+1)$-th constant segment of $\grtr$. In total, there are $s$ jumps and $s+1$ constant segments for a signal vector $\grtr$ with $\cardinality{\supp(\TV\grtr)} = \s$.

With this interpretation in mind, the intuition behind the separation constant for a gradient-sparse vector $\grtr$ is exactly the same as in the continuous setting: it measures the deviation of the jump discontinuities of $\grtr$ from an equidistant singularity pattern. In particular, it is not hard to see that the separation constant $\SC$ for $\grtr$ can always be chosen such that $(s+1)/n \leq \SC \leq 1$, where larger values of $\SC$ indicate that the gradient support is closer to being equidistant. 
Although such a notion of separation is natural, it still allows for degenerate cases in which $\SC$ depends on the ambient dimension $n$. For instance, in the worst-case scenario of dense-jump signals, we would have that $\SC = (\s+1)/n$.

The following proposition demonstrates how the continuous and discrete versions of separation constants can be connected by agreeing on a specific discretization procedure for $\contgrtr$. For the sake of simplicity, we focus on the obvious strategy of taking pointwise samples on a uniform grid, but other, more sophisticated discretization schemes would certainly lead to similar results.  
\begin{proposition}\label{prop:results:discrcont}
	Let $\contgrtr \colon \intvopcl{0}{1} \to \R$ be a piecewise constant function defined according to \eqref{eq:results:contmsc:signal}. For $n \geq \s+1 $, let $\grtr = (x_1^\ast, \dots, x_n^\ast) \in \R^n$ be defined by
	\begin{equation}\label{eq:results:discrcont:discrscheme}
		x_j^\ast \coloneqq \contgrtr\big( \tfrac{j}{n} \big), \quad j = 1, \dots, n.
	\end{equation}
	If $\contgrtr$ is $\cont\SC$-separated with $\cont\SC \geq (\s+1)/n$, then $\cardinality{\supp(\TV\grtr)} = \s$ and $\grtr$ is $\SC$-separated with $\SC > 0$ satisfying
	\begin{equation}\label{eq:results:discrcont:scbound}
		\SC > \cont\SC - \frac{\s+1}{n}.
	\end{equation}
	In particular, if $n \geq C(\s+1)/\cont\SC$ for any constant $C > 1$, we have that
	\begin{equation}
		\SC > (1 - \tfrac{1}{C}) \cont\SC.
	\end{equation}
\end{proposition}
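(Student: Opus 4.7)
The plan is to translate the continuous separation directly into a discrete one, by tracking which samples $j/n$ fall into each constant piece $(\cont\jump_{i-1}, \cont\jump_i]$ of $\contgrtr$. By the definition of the discretization, $x_j^\ast = h_i$ holds precisely when $n\cont\jump_{i-1} < j \leq n\cont\jump_i$, so the $i$-th constant segment of $\grtr$ consists of exactly
\begin{equation*}
n_i \coloneqq \floor{n\cont\jump_i} - \floor{n\cont\jump_{i-1}}
\end{equation*}
consecutive samples (with the convention $\cont\jump_0 = 0$, $\cont\jump_{\s+1} = 1$, giving $\floor{n\cont\jump_0} = 0$ and $\floor{n\cont\jump_{\s+1}} = n$). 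Consequently, whenever every segment is non-empty, the discrete jumps lie exactly at positions $\jump_i = \floor{n\cont\jump_i}$ for $i \in [\s]$, and satisfy $|\jump_i - \jump_{i-1}| = n_i$.

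The first step would be to establish that $\cardinality{\supp(\TV\grtr)} = \s$, which reduces to $n_i \geq 1$ for all $i \in [\s+1]$ (no two adjacent levels merge, since the $h_i$ differ for consecutive $i$ by assumption). Using the elementary inequality $\floor{x} > x - 1$ together with the $\cont\SC$-separation of $\contgrtr$,
\begin{equation*}
n_i > n(\cont\jump_i - \cont\jump_{i-1}) - 1 \geq \frac{n \cont\SC}{\s+1} - 1 \geq 0,
\end{equation*}
where the last inequality uses the hypothesis $\cont\SC \geq (\s+1)/n$. Integrality of $n_i$ then upgrades $n_i > 0$ to $n_i \geq 1$, as required.

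The estimate for the separation constant then follows from the same chain of inequalities: dividing through by $n$ and taking the minimum over $i \in [\s+1]$ yields $\min_i \tfrac{|\jump_i - \jump_{i-1}|}{n} > \tfrac{1}{\s+1}\bigl(\cont\SC - (\s+1)/n\bigr)$, which, by the definition of $\SC$ as the largest admissible constant in Definition~\ref{def:results:msc}, is exactly $\SC > \cont\SC - (\s+1)/n$. The supplementary claim under $n \geq C(\s+1)/\cont\SC$ then follows in one line, since in that regime $(\s+1)/n \leq \cont\SC/C$ and hence $\SC > (1 - 1/C)\cont\SC$. There is no real technical obstacle here; the only subtlety worth flagging is the use of the \emph{strict} inequality $\floor{x} > x - 1$ together with integrality, which is precisely what keeps the boundary case $\cont\SC = (\s+1)/n$ from being degenerate.
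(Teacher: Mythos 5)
Your proof is correct and takes essentially the same route as the paper's: both pin down the discrete jump locations as $\floor{n\cont\jump_i}$, use $\floor{x} > x-1$ together with the hypothesis $\cont\SC \geq (\s+1)/n$ to guarantee that no constant piece collapses, and then transfer the separation bound to the grid by the same chain of inequalities. (Incidentally, the paper's final display contains a sign typo, writing $\cont\SC + \tfrac{\s+1}{n}$ where $\cont\SC - \tfrac{\s+1}{n}$ is intended; your derivation carries the correct sign.)
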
 

\begin{figure}
	\centering
	
	\def\tikz_n{10}
	\begin{tikzpicture}
		
		\definecolor{myo}{rgb}{1,0.46,0.15}
		\definecolor{myb}{rgb}{0.35,0.46,0.8}
		\definecolor{myb2}{rgb}{0.81,0.84,0.94}
		\definecolor{Sset}{gray}{0.5}
		\definecolor{myt}{rgb}{0.95,0.75,0.3}
		\definecolor{Sset2}{rgb}{0.35,0.46,0.8}
		
		\tikzstyle{signaldots}=[mark size=1.6pt,color=black,rotate=-90]
		\tikzstyle{jumpdots}=[mark size=2pt,color=gray,rotate=-90]
		\tikzstyle{jumpdotsS}=[mark size=2pt,color=myo,rotate=-90] 
		\tikzstyle{treeS}=[mark size=2pt,color=myo,rotate=-90]
		\tikzstyle{treeSc}=[mark size=2pt,color=gray,rotate=-90]
		
		\draw[-] (1,0) -- (\tikz_n,0);
		\draw[-,color=gray!50] (\tikz_n,0) -- (\tikz_n+0.5,0);
		\draw[-,color=gray!50] (0.5,0) -- (1,0);
		
		\foreach \x in {2,2,3,\tikz_n-1,\tikz_n,\tikz_n}
		\node[mark size=2pt,color=gray,rotate=-90] at (\x-0.5,0) {\pgfuseplotmark{triangle*}};
		\node[mark size=2pt,color=gray!50,rotate=-90] at (1-0.5,0) {\pgfuseplotmark{triangle*}};
		\node[mark size=2pt,color=gray!50,rotate=-90] at (\tikz_n+0.5,0) {\pgfuseplotmark{triangle*}};
		
		\foreach \x in {1,2,3,\tikz_n-1,\tikz_n}
		\node[signaldots] at (\x,0) {\pgfuseplotmark{square*}};
		
		\foreach \x in {1,2}
		\draw[shift={(\x,-2pt)},color=black] (0,0) node[below] {$\x$};
		
		\draw[shift={(3,-2pt)},color=black] (0,0) node[below] {\phantom{1}$\dots$};
		\draw[shift={(\tikz_n-1,-2pt)},color=black] (0,0) node[below] {$n-1$};
		\draw[shift={(\tikz_n,-2pt)},color=black] (0,0) node[below] {$\phantom{1}n$};
		
		\foreach \x in {1,2}
		\draw[shift={(\x+0.5,1pt)},color=gray] (0pt,1pt) node[above] {$\x$};
		\draw[shift={(0.5,1pt)},color=gray!50] (0pt,1pt) node[above] {$0$};
		
		\draw[shift={(\tikz_n-1.5,1pt)},color=gray] (0pt,1pt) node[above] {\phantom{1}$\dots$};
		\draw[shift={(\tikz_n-0.5,1pt)},color=gray] (0pt,1pt) node[above] {$N$};
		\draw[shift={(\tikz_n+0.5,1pt)},color=gray!50] (0pt,1pt) node[above] {$n$};
		
		\draw[shift={(-1,1.5pt)},color=gray] (0pt,1.5pt) node[above] {Faces:};
		\draw[shift={(-1,-4pt)},color=black] (0pt,3pt) node[below] {Nodes:};
		
	\end{tikzpicture}
	\caption{\textbf{Nodes and (staggered) faces of a signal in $\R^n$.} For technical reasons, we also visualize the ``ghost faces'' $0$ and $n$ (cf.~Definition~\ref{def:results:contmsc} and~\ref{def:results:msc}).}
	\label{fig:proofs:mw:face_edges}
\end{figure}
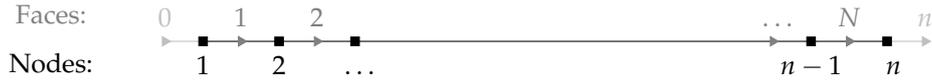

The proof of the previous result is deferred to Subsection~\ref{subsec:proofs:discrcont}. The statement of Proposition~\ref{prop:results:discrcont} shows that the discretization of a well-separated piecewise constant function is also well-sep\-arat\-ed, supposed that the grid resolution is fine enough, i.e., $n$ is sufficiently large. As motivated at the beginning of this section, we are primarily interested in scenarios where the ambient dimension is large and satisfies $n \gg \s$. In this case, the `in-particular'-part of Proposition~\ref{prop:results:discrcont} confirms that the separation constant $\SC$ for $\grtr$ can be chosen such that $\SC \approx \cont\SC$.
This relationship particularly allows us to speak of ``natural'' signal vectors when referring to discretizations of piecewise constant functions defined on an interval.
Indeed, if the grid is chosen fine enough, Proposition~\ref{prop:results:discrcont} ensures that the resulting jump discontinuities of $\grtr$ are well-separated and thereby excludes degenerate cases such as dense-jump signals.




We close our discussion with some additional remarks:
\begin{remark}
	\begin{rmklist}
	\item 
		Note that we assume that $\s > 0$ in Definition~\ref{def:results:msc}, which excludes the not particularly interesting scenario of a constant signal vector. Since one would have $\grtr \in \ker\TV$ in this special case, it needs to be treated slightly differently: according to \cite[Rmk.~2.6]{genzel2017cosparsity}, recovery via TV minimization~\eqref{eq:intro:tv-1} would succeed almost surely with $m\geq 1$, since $\ker\TV$ is a one-dimensional subspace of $\R^n$.
	\item
		We do not assume that $\SC$ is the largest possible constant such that the condition \eqref{eq:results:msc} is satisfied. However, one can certainly choose $\SC \coloneqq (\s+1) \cdot \min_{i \in [\s+1]} \abs{\jump_i - \jump_{i-1}} / n$ and then refer $\SC$ to as the \emph{maximal separation constant} for $\grtr$. The same convention is made for the condition~\eqref{eq:results:contmscc} in the continuous setting of Definition~\ref{def:results:contmsc}.
	\item 
		One could also show a reversed version of Proposition~\ref{prop:results:discrcont} by constructing a piecewise constant function $\contgrtr \colon \intvopcl{0}{1} \to \R$ from a gradient-sparse signal $\grtr \in \R^n$ in such a way that $\grtr$ is the discretization of $\contgrtr$. \qedhere
	\end{rmklist}
\end{remark}	

\subsection{TV Minimization for Exact Recovery}
\label{subsec:results:exact}

With the notion of $\SC$-separation from Definition~\ref{def:results:msc} at hand, we are now ready to state our main guarantee for noiseless signal recovery via TV minimization~\eqref{eq:intro:tv-1}. It is worth emphasizing that the following theorem does not rely on any particular signal generation process, such as the discretization described in Proposition~\ref{prop:results:discrcont}. In fact, this result applies to every gradient-sparse signal that satisfies the mild condition~\eqref{eq:results:exact:n}. 
\begin{theorem}[Exact recovery via TV minimization]\label{thm:results:exact}
	Let $\grtr \in \R^n$ be a $\SC$-separated signal with $\s>0$ jump discontinuities and 
	\begin{equation}\label{eq:results:exact:n}
	\SC \geq \frac{8 \s}{n}.
	\end{equation}
	Let $\probsuccess > 0$ and assume that $\A \in \R^{m \times n}$ is a standard Gaussian random matrix with
	\begin{equation}\label{eq:results:exact:meas}
	m \gtrsim \frac{1}{\SC} \cdot \s\log^2(n) + \probsuccess^2.
	\end{equation}
	Then with probability at least $1 - e^{-\probsuccess^2/2}$, TV minimization \eqref{eq:intro:tv-1} with noiseless input $\y = \A \grtr \in \R^m$ recovers $\grtr$ exactly.
\end{theorem}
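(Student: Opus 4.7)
My plan is to combine the standard non-uniform recovery framework with an upper bound on a conic Gaussian mean width; the former turns the recovery question into a purely geometric one, the latter is provided by Theorem~\ref{thm:results:mwbound}. First, I invoke the classical Gordon-type guarantee (as developed by Chandrasekaran--Recht--Parrilo--Willsky, Amelunxen--Lotz--McCoy--Tropp, and Stojnic): for a proper convex function $\objfunc$, a standard Gaussian matrix $\A \in \R^{m \times n}$, and any $\probsuccess > 0$, the problem $\min_{\x} \objfunc(\x)$ subject to $\A\x = \A\grtr$ has $\grtr$ as its unique minimizer with probability at least $1 - e^{-\probsuccess^2/2}$, provided that
\begin{equation}
\sqrt{m} \geq \meanwidth{\desccone{\objfunc, \grtr} \intersec \S^{n-1}} + \probsuccess.
\end{equation}
This is a direct consequence of the escape-through-the-mesh inequality: when the Gaussian width of the spherical section of the descent cone is sufficiently small, the kernel of $\A$ avoids this cone with the claimed probability, and consequently no feasible descent direction exists at $\grtr$.

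Specializing $\objfunc = \lnorm{\TV(\cdot)}[1]$ reduces the proof of Theorem~\ref{thm:results:exact} to controlling $\effdim[\conic]{\descset{\lnorm{\TV(\cdot)}[1], \grtr}}$. This is exactly the content of Theorem~\ref{thm:results:mwbound}, which supplies an upper bound of order $\SC^{-1} \cdot \s \log^2(n)$ under the hypothesis~\eqref{eq:results:exact:n}. Assuming this mean-width bound, the sample-size condition~\eqref{eq:results:exact:meas} rearranges (using $\sqrt{a+b} \leq \sqrt{a} + \sqrt{b}$ and absorbing a universal constant into the hidden one) into
\begin{equation}
\sqrt{m} \geq \meanwidth{\desccone{\lnorm{\TV(\cdot)}[1], \grtr} \intersec \S^{n-1}} + \probsuccess,
\end{equation}
and the Gordon-type statement above then yields the claim. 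Modulo the mean-width estimate, the proof of Theorem~\ref{thm:results:exact} is thus a short assembly step.

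The genuine difficulty — and the main technical content of the paper — lies in the proof of Theorem~\ref{thm:results:mwbound}, which I would treat as a black box here. As foreshadowed in the overview, controlling $\effdim[\conic]{\descset{\lnorm{\TV(\cdot)}[1], \grtr}}$ reduces via the standard subdifferential/polar duality to a box-constrained least-squares problem over (a rescaling of) the subdifferential of $\lnorm{\TV(\cdot)}[1]$ at $\grtr$; the coordinates are entangled through the cumulative structure of $\TV$ and the problem admits no closed form. The strategy of the excerpt is to switch to a signal-adapted, non-dyadic Haar wavelet basis associated with the binary tree whose leaves correspond to the constant segments of $\grtr$, which decouples the constraints and reduces the task to estimating contributions per tree level. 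I anticipate that the $\log^2(n)$ factor in~\eqref{eq:results:exact:meas} arises from the tree depth (at most logarithmic in $n$) squared against a standard $\sqrt{\log n}$-tail bound per wavelet coefficient, while the $\SC^{-1}$ factor captures the worst-case imbalance of the tree permitted by the separation hypothesis~\eqref{eq:results:exact:n}. Once that bound is in hand, Theorem~\ref{thm:results:exact} follows immediately by the reduction above.
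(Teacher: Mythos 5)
Your proposal follows exactly the paper's route: Theorem~\ref{thm:results:exact} is obtained by combining the Gordon/Tropp non-uniform recovery guarantee (the paper's Proposition~\ref{prop:results:exact_geo}, cited from \cite[Cor.~3.5]{tropp2014convex}) with the mean-width bound of Theorem~\ref{thm:results:mwbound}, which you rightly treat as a black box here. The assembly is correct, and your informal remarks on the sources of the $\log^2(n)$ and $\SC^{-1}$ factors inside Theorem~\ref{thm:results:mwbound} are broadly consistent with the paper's proof (one log from the size of $\Lambda^\natural$, one from the choice of the scaling parameter $\tau$, and $\SC^{-1}$ from the size $\sext \asymp \s/\SC$ of the extended support).
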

The proof of Theorem~\ref{thm:results:exact} is a direct consequence of Proposition~\ref{prop:results:exact_geo} and Theorem~\ref{thm:results:mwbound}, which are presented subsequently in Subsection~\ref{subsec:results:mw}. Before discussing its consequences, we wish to point out that Theorem~\ref{thm:results:exact} provides a \emph{non-uniform} and \emph{signal-dependent} guarantee, in the sense that it concerns the successful recovery of a \emph{fixed} signal $\grtr$ (for a single random draw of $\A$) with a sampling rate that depends on $\grtr$ (in terms of $\SC$). 
This is in stark contrast to Theorem~\ref{thm:cai}, which addresses recovery of all $s$-gradient-sparse signals simultaneously.
We refer to Section~\ref{sec:conclusion} for a more detailed discussion of (non-)uniformity and signal dependence in the context of our results.

Most importantly, Theorem~\ref{thm:results:exact} allows us to break the $\sqrt{\s n}$-bottleneck of Theorem~\ref{thm:cai}. Indeed, the pivotal condition~\eqref{eq:results:exact:meas} indicates that exact recovery of $\grtr$ via TV minimization~\eqref{eq:intro:tv-1} is already possible with $m \gtrsim \SC^{-1} \cdot \s\log^2(n)$ measurements. However, in order to obtain the desired sampling rate of  $m \gtrsim \s \cdot \PolyLog (n)$, it is indispensable to control the size of the separation constant~$\SC \in  \intvopcl{0}{1}$.

Regarding the crucial role of $\SC$, we first note that the condition~\eqref{eq:results:exact:n} is \emph{not} a severe restriction: Recall that for every $\s$-gradient-sparse signal, the separation constant can be chosen such that $\SC \geq (\s + 1) / n$. In this light, the assumption of~\eqref{eq:results:exact:n} is only slightly stronger, where the suboptimal factor $8$ is just an artifact of our proof.
Alternatively, when rearranging \eqref{eq:results:exact:n} to $n \geq 8s/\SC$, it turns into a mild requirement on the ambient dimension (or resolution level).

However, condition~\eqref{eq:results:exact:n} does still not exclude pathological cases: for instance, the gradient support $\supp(\TV\grtr) = \left\{1,9,17,\dots,8s+1\right\}$ would be admissible, but the sampling-rate condition~\eqref{eq:results:exact:meas} would render to the trivial bound of $m \gtrsim n \log^2(n)$.
Consequently, Theorem~\ref{thm:results:exact} only makes a significant statement if $\SC$ is sufficiently large. In particular, we obtain the near-optimal rate of $m \gtrsim \s \cdot \PolyLog (n)$ if $\SC$ can be chosen independently of $n$ and $s$; see the discussions in Subsection~\ref{subsec:results:limitations} and Section~\ref{sec:conclusion} for an asymptotic point of view on this matter.
A typical example of such a situation is the discretization of a piecewise constant function $\contgrtr \colon \intvopcl{0}{1} \to \R$. Indeed, the following corollary is a straightforward combination of Theorem~\ref{thm:results:exact} and Proposition~\ref{prop:results:discrcont}, demonstrating that $m \gtrsim \s \log^2(n)$ measurements are sufficient for recovery when $\contgrtr$ is finely enough discretized.
\begin{corollary}[Recovery of discretized signals]\label{cor:results:exact:cont}
	Let $\contgrtr \colon \intvopcl{0}{1} \to \R$ be a piecewise constant function defined according to \eqref{eq:results:contmsc:signal} and assume that $\contgrtr$ is $\cont\SC$-separated.
	For $n \geq 16\s/\cont\SC$, let $\grtr \in \R^n$ be the equidistant (pointwise) discretization of $\contgrtr$, i.e.,
	\begin{equation}
	x_j^\ast = \contgrtr\big( \tfrac{j}{n} \big), \quad j = 1, \dots, n.
	\end{equation}
	Then with probability at least $1 - e^{-\probsuccess^2/2}$, TV minimization \eqref{eq:intro:tv-1} recovers $\grtr$ exactly from 
	\begin{equation}\label{eq:results:exact:cont:meas}
	m \gtrsim \frac{1}{\cont\SC} \cdot \s\log^2(n) + \probsuccess^2
	\end{equation}
	noiseless Gaussian measurements.
\end{corollary}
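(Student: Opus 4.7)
The plan is to derive Corollary~\ref{cor:results:exact:cont} as a straightforward chaining of Proposition~\ref{prop:results:discrcont} and Theorem~\ref{thm:results:exact}: the proposition converts the continuous separation constant $\cont\SC$ into a discrete one $\SC$ of comparable size, and the theorem then yields the desired recovery guarantee with $\SC$ in the denominator, which we finally replace by $\cont\SC$ (up to absolute constants absorbed in the `$\gtrsim$'-notation).

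First I would check that the hypotheses of Proposition~\ref{prop:results:discrcont} are met. The standing assumption $n \geq 16\s/\cont\SC$ immediately gives $\cont\SC \geq 16\s/n \geq (\s+1)/n$ (using $\s \geq 1$, so $\s+1 \leq 2\s$), which is exactly the requirement of the proposition. Hence $\cardinality{\supp(\TV\grtr)} = \s$ and $\grtr$ is $\SC$-separated with $\SC > \cont\SC - (\s+1)/n$. Since $(\s+1)/n \leq 2\s/n \leq \cont\SC/8$, we obtain the two-sided comparison
\begin{equation}
\SC > \cont\SC - \tfrac{1}{8}\cont\SC = \tfrac{7}{8}\cont\SC,
\end{equation}
so in particular $\SC \asymp \cont\SC$.

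Next I would verify the hypothesis~\eqref{eq:results:exact:n} of Theorem~\ref{thm:results:exact}. From $\cont\SC \geq 16\s/n$ and $\SC > \tfrac{7}{8}\cont\SC$ we get $\SC > 14\s/n \geq 8\s/n$, as required. Applying Theorem~\ref{thm:results:exact} therefore yields exact recovery of $\grtr$ via~\eqref{eq:intro:tv-1} with probability at least $1 - e^{-\probsuccess^2/2}$ whenever
\begin{equation}
m \gtrsim \tfrac{1}{\SC} \cdot \s \log^2(n) + \probsuccess^2.
\end{equation}
Substituting the bound $\tfrac{1}{\SC} < \tfrac{8}{7} \cdot \tfrac{1}{\cont\SC}$ and absorbing the universal factor $\tfrac{8}{7}$ into the hidden constant produces precisely~\eqref{eq:results:exact:cont:meas}, completing the argument.

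Since the corollary is stated as a direct consequence of the two previously established results, I do not anticipate a genuine technical obstacle; the only mildly delicate point is bookkeeping the numerical constants so that the condition $n \geq 16\s/\cont\SC$ is strong enough to simultaneously validate both Proposition~\ref{prop:results:discrcont} (which needs $\cont\SC \geq (\s+1)/n$) and the stricter gap requirement $\SC \geq 8\s/n$ of Theorem~\ref{thm:results:exact}. The factor $16$ in the hypothesis is exactly what makes this bookkeeping work out cleanly.
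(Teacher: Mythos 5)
Your proof is correct and follows exactly the route the paper intends: the paper explicitly describes Corollary~\ref{cor:results:exact:cont} as ``a straightforward combination of Theorem~\ref{thm:results:exact} and Proposition~\ref{prop:results:discrcont},'' and your bookkeeping (showing $n\geq 16\s/\cont\SC$ implies both $\cont\SC\geq(\s+1)/n$ and, via $\SC>\tfrac{7}{8}\cont\SC$, the requirement $\SC\geq 8\s/n$) is precisely the verification needed. The only tiny overstatement is the aside ``$\SC\asymp\cont\SC$''---the one-sided bound $\SC>\tfrac{7}{8}\cont\SC$ is all that is actually established or needed---but this does not affect the argument.
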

Similarly to the corresponding assumption \eqref{eq:results:exact:n} in Theorem~\ref{thm:results:exact}, the factor $16$ in the (still mild) condition $n \geq 16 \s /\cont\SC$ is just an artifact of the proof. Analogous statements can be expected for all other discretization schemes that allow for a comparable relation between the continuous and discrete setting as shown in Proposition~\ref{prop:results:discrcont}. 


\subsection{The Geometry of Non-Uniform Recovery}
\label{subsec:results:mw}

In this section, we discuss the general proof strategy leading to Theorem~\ref{thm:results:exact}, particularly focusing on the main ideas without technical details. 
The crucial ingredient is Theorem~\ref{thm:results:mwbound} below, which states a novel bound for the conic mean width of the semi-norm $\lnorm{\TV(\cdot)}[1]$.
As such, this section can be seen as a ``roadmap'' for Subsection~\ref{subsec:proofs:mw}, where the full proof of Theorem~\ref{thm:results:mwbound} is presented. 

Our approach towards a recovery guarantee for TV minimization~\eqref{eq:intro:tv-1} is based on a well-estab\-lished, abstract geometric framework. Its goal is to understand the interplay of the gradient-sparsity-promoting functional $f(\cdot) = \lnorm{\TV(\cdot)}[1]$ and the measurement matrix $\A$ in an (infinitesimal) neighborhood of the ground truth signal $\grtr$. We emphasize that such a non-uniform and signal-dependent analysis is especially useful for the problem of TV-based recovery in 1D, where a uniform approach inevitably leads to the $\sqrt{\s n}$-bottleneck (see Theorem~\ref{thm:cai}). 

To begin with the formalization, we require the following notion, which will play an important role in understanding the solution set of~\eqref{eq:intro:tv-1}; note that this definition makes sense for any convex function $f \colon \R^n \to \R$, but for our purposes, it is convenient to assume that $f(\cdot) = \lnorm{\TV(\cdot)}[1]$.
\begin{definition}[Descent set and descent cone]\label{def:results:descentcone}
	Let $f \colon \R^n \to \R$ be a convex function and let $\grtr~\in~\R^n$. The \emph{descent set} of $f$ at $\grtr$ is given by
	\begin{equation}
	\descset{f, \grtr} \coloneqq \{ \v \in \R^n \suchthat f(\grtr + \v) \leq f(\grtr) \}
	\end{equation}
	and its corresponding \emph{descent cone} is denoted by $\desccone{f, \grtr} \coloneqq \cone{\descset{f, \grtr}}$.
\end{definition}
While the shape of the descent cone certainly depends on the (level-set) geometry of $f$ around~$\grtr$, it does not involve the measurement matrix $\A$. 
In fact, our assumption that $\A$ is Gaussian allows us to establish such a connection by means of a single geometric parameter, namely the conic (Gaussian) mean width that is introduced next.
This quantity originates from classical results in geometric functional analysis and asymptotic convex geometry, e.g., see~\cite{milman1984mstar,gordon1988escape,giannopoulos2004asymptotic}, but it has also shown up in equivalent forms as \emph{Talagrand’s $\gamma_2$-functional} in stochastic processes~\cite{talagrand2014chaining} or as \emph{Gaussian complexity} in statistical learning theory~\cite{bartlett2003complexity}. 
While its benefits to signal reconstruction problems date back to~\cite{rudelson2008sparse,mendelson2007reconstruction}, most important for us is a more recent line of research that is concerned with \emph{non-uniform} guarantees for compressed sensing, for instance, see~\cite{stojnic2009gordon,chandrasekaran2012geometry,amelunxen2014edge,plan2015lasso,tropp2014convex,oymak2016sharpmse}.  
\begin{definition}[Mean width] \label{def:results:meanwidth}
	The \emph{(Gaussian) mean width} of a bounded subset $\sset \subset \R^n$ is defined as
	\begin{equation}
	\meanwidth{\sset} \coloneqq \mean{}_{\gaussian} [\sup_{\v \in \sset} \sp{\gaussian}{\v}],
	\end{equation}
	where $\gaussian \distributed \Normdistr{\vnull}{\I{n}}$.
	If $f \colon \R^n \to \R$ is a convex function and $\grtr \in \R^n$, we call
	\begin{equation}
	\meanwidth[\conic]{\descset{f, \grtr}} \coloneqq \meanwidth{\desccone{f, \grtr} \intersec \S^{n-1}}
	\end{equation}
	the \emph{conic mean width} of $f$ at $\grtr$.
\end{definition}

The following proposition reveals that, indeed, the conic mean width  essentially determines the required number of measurements for successful recovery via~\eqref{eq:intro:tv-1}. Note that we formulate this result for the case of TV minimization, but it also holds true for a general convex function instead of the particular choice $f(\cdot) = \lnorm{\TV(\cdot)}[1]$. The geometric idea behind Proposition~\ref{prop:results:exact_geo} is visualized in Figure~\ref{fig:exact_geo}.

\begin{figure}
	\centering
	\begin{tikzpicture}[scale=2]
	\coordinate (K1) at (-1,-.6);
	\coordinate (K2) at (0,-1.6);
	\coordinate (K5) at (-1.6,-1.6);
	\coordinate[below right=0cm and 0cm of K1] (X0);
	
	\coordinate (kerAnchor) at ($(X0)+(20:.2)$);
	\draw[thick] ($(X0)!-3.5!(kerAnchor)$) -- ($(X0)!3.5!(kerAnchor)$);
	\draw[thick,dashed] ($(X0)!-4.5!(kerAnchor)$) -- ($(X0)!4.5!(kerAnchor)$) node[xshift=.7cm] {$\ker\A$};
	
	\filldraw pic[top color=blue!30, bottom color=white, angle radius=3.5cm] {angle=K5--K1--K2};
	
	\node[draw,circle,minimum size=2.3cm] at (X0) {};
	\node[above left=.65 and .65 of X0] {$\S^{n-1}$};
	\begin{scope}
	\clip (K1) -- (K2) -- (K5) -- cycle;
	\node[draw,circle,minimum size=2.3cm,red,ultra thick] at (X0) {};
	\end{scope}
	
	\node at (barycentric cs:K1=0.3,K2=1.5,K5=1.5) {$\desccone{\lnorm{\TV(\cdot)}[1], \grtr}$};
	\node[blackdot] at (X0) {};
	
	\path (X0) -- ++(-60:.55cm) coordinate (anchorSphere);
	\node[above right=0 and 1 of anchorSphere] (anchorSpherelabel) {$\desccone{\lnorm{\TV(\cdot)}[1], \grtr}\intersec \S^{n-1}$};
	\path[<-,shorten <=3pt,>=stealth,bend right,red] (anchorSphere) edge (anchorSpherelabel);
	
	\node[blackdot,label={[label distance=-2pt]above :$\vnull$}] at (X0) {};
	\end{tikzpicture}
	\caption{\textbf{The convex geometry of recovery via~\eqref{eq:intro:tv-1}.} It is straightforward to see that $\grtr$ is the unique minimizer of~\eqref{eq:intro:tv-1} if and only if $\ker\A$ does not intersect the spherical subset $\desccone{f, \grtr} \intersec \S^{n-1}$. The probability of this event boils down to relating the dimension of the random subspace $\ker\A$ (which almost surely equals $n-m$) to the ``size'' of the descent cone (which is measured by the conic mean width). 
		According to Proposition~\ref{prop:results:exact_geo}, the subspace $\ker\A$ misses the spherical subset (red arc) with high probability if the condition~\eqref{eq:results:exact_geo:meas} is satisfied.}
	\label{fig:exact_geo}
\end{figure}

\begin{proposition}[\protect{\cite[Cor.~3.5]{tropp2014convex}}]\label{prop:results:exact_geo}
	Let $\grtr \in \R^n$ be an arbitrary signal vector. Let $\probsuccess > 0$ and assume that $\A \in \R^{m \times n}$ is a standard Gaussian random matrix with
	\begin{equation}\label{eq:results:exact_geo:meas}
		m > \big(\meanwidth[\conic]{\descset{\lnorm{\TV(\cdot)}[1], \grtr}} + \probsuccess\big)^2 + 1.
	\end{equation}
	Then with probability at least $1 - e^{-\probsuccess^2/2}$, TV minimization \eqref{eq:intro:tv-1} with noiseless input $\y = \A \grtr \in \R^m$ recovers $\grtr$ exactly.
\end{proposition}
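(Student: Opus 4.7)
The plan is to prove the claim via the classical escape-through-a-mesh paradigm, which reduces the recovery question to a probabilistic statement about the random subspace $\ker\A$ missing a prescribed spherical set. First I would establish that, for $f(\cdot) = \lnorm{\TV(\cdot)}[1]$, the signal $\grtr$ is the unique minimizer of~\eqref{eq:intro:tv-1} with input $\y = \A\grtr$ if and only if
\begin{equation}
\ker\A \intersec \desccone{f, \grtr} = \{\vnull\}.
\end{equation}
This is immediate: any feasible $\x \neq \grtr$ can be written as $\grtr + \v$ with $\vnull \neq \v \in \ker\A$, and by convexity the inequality $f(\grtr + \v) \leq f(\grtr)$ is equivalent to $\v \in \descset{f, \grtr}$; passing to the conic hull covers all nonzero scalar multiples, so $\ker\A$ must miss $\desccone{f,\grtr} \setminus \{\vnull\}$, or equivalently the spherical section $T \coloneqq \desccone{f,\grtr} \intersec \S^{n-1}$ (cf.\ Figure~\ref{fig:exact_geo}).

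Next I would invoke Gordon's escape theorem to control $\prob[\ker\A \intersec T = \emptyset]$. For a standard Gaussian matrix $\A \in \R^{m\times n}$ and any closed $T \subset \S^{n-1}$, Gordon's comparison inequality yields
\begin{equation}
\mean[\inf_{\v \in T}\lnorm{\A\v}] \geq \lambda_m - \meanwidth{T},
\end{equation}
where $\lambda_m \coloneqq \mean[\lnorm{\gaussian}]$ for $\gaussian \distributed \Normdistr{\vnull}{\I{m}}$. Since $\A \mapsto \inf_{\v \in T}\lnorm{\A\v}$ is $1$-Lipschitz in the Frobenius metric on $\A$ (the infimum is taken over $\v \in \S^{n-1}$), Gaussian concentration promotes this to the tail bound
\begin{equation}
\prob[\inf_{\v \in T}\lnorm{\A\v} \geq \lambda_m - \meanwidth{T} - \probsuccess] \geq 1 - e^{-\probsuccess^2/2}.
\end{equation}

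To close the argument, I would use the standard estimate $\lambda_m \geq \sqrt{m-1}$, a consequence of an elementary Gamma-function bound on $\lambda_m = \sqrt{2}\,\Gamma((m+1)/2)/\Gamma(m/2)$. The hypothesis~\eqref{eq:results:exact_geo:meas} rearranges to $\sqrt{m-1} > \meanwidth[\conic]{\descset{f, \grtr}} + \probsuccess$, so the event above forces $\inf_{\v \in T}\lnorm{\A\v} > 0$, meaning $\A\v \neq \vnull$ for every $\v \in T$. Combined with the first paragraph, this yields exact recovery of $\grtr$ with the claimed probability.

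The main technical hurdle is Gordon's inequality itself, which rests on a Slepian--Gordon-type comparison between the bilinear Gaussian process $(\vec u, \vec v) \mapsto \sp{\vec u}{\A \vec v}$ on $\S^{m-1} \times T$ and the decoupled process $(\vec u, \vec v) \mapsto \sp{\gaussian_1}{\vec v} + \sp{\gaussian_2}{\vec u}$ with independent $\gaussian_1,\gaussian_2$. This step is classical but delicate; since Proposition~\ref{prop:results:exact_geo} is quoted verbatim from~\cite[Cor.~3.5]{tropp2014convex}, it can be invoked as a black box, and the only remaining subtlety is verifying the precise calibration of the additive~$+1$ in~\eqref{eq:results:exact_geo:meas}, which absorbs the small gap between $\sqrt{m}$ and $\lambda_m$.
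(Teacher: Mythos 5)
Your proposal is correct and is essentially the standard escape-through-a-mesh argument that underlies the cited result~\cite[Cor.~3.5]{tropp2014convex} and is sketched in the caption of Figure~\ref{fig:exact_geo}; the paper itself invokes Proposition~\ref{prop:results:exact_geo} without reproof. Your reconstruction---the descent-cone kernel characterization, Gordon's comparison inequality, Lipschitz concentration, and the bound $\lambda_m \geq \sqrt{m-1}$---matches that argument step by step and correctly accounts for the additive $+1$ in~\eqref{eq:results:exact_geo:meas}.
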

Remarkably, the sample-size bound \eqref{eq:results:exact_geo:meas} of Proposition~\ref{prop:results:exact_geo} is essentially sharp, since recovery would fail with high probability if 
\begin{equation}\label{eq:results:exact_geo:lower}
	m\leq \effdim[\conic]{\descset{\lnorm{\TV(\cdot)}[1], \grtr}} - C\cdot\meanwidth[\conic]{\descset{\lnorm{\TV(\cdot)}[1], \grtr}},
\end{equation}
where $C>0$ is a universal constant; see~\cite[Rmk.~3.4]{tropp2014convex}. In other words, TV minimization~\eqref{eq:intro:tv-1} exhibits a \emph{phase transition} at $m \approx \effdim[\conic]{\descset{\lnorm{\TV(\cdot)}[1], \grtr}}$.

On the other hand, the definition of the conic mean with is quite implicit and therefore only provides an uninformative description of the required number of measurements. Except for a few simple cases (e.g., for $f(\cdot) = \lnorm{\cdot}[1]$ in standard compressed sensing), it is notoriously hard to find more informative bounds for $\meanwidth[\conic]{\descset{f, \grtr}}$ that are still sufficiently accurate. In this context, the work of \cite{genzel2017cosparsity} establishes a sophisticated, non-asymptotic upper bound for analysis-based priors of the form $\lnorm{\aop(\cdot)}[1]$, where $\aop \in \R^{N \times n}$ is intended to be a \emph{redundant} transformation with $N > n$. Although this approach even applies to the choice $\aop = \TV$, the resulting sampling rate does not break the $\sqrt{\s n}$-bottleneck of Theorem~\ref{thm:cai}. With that in mind, the technical centerpiece of this work is the following theorem, which states a surprisingly simple upper bound for $\meanwidth[\conic]{\descset{\lnorm{\TV(\cdot)}[1], \grtr}}$. Recall that a combination of this result with Proposition~\ref{prop:results:exact_geo} immediately yields our main result, Theorem~\ref{thm:results:exact}.
\begin{theorem}\label{thm:results:mwbound}
	Let $\grtr \in \R^n$ be a $\SC$-separated signal with $\s>0$ jump discontinuities. Assuming that $\SC \geq 8\s / n$, we have that
	\begin{equation}\label{eq:results:mwbound}
	\effdim[\conic]{\descset{\lnorm{\TV(\cdot)}[1], \grtr}} \lesssim \frac{1}{\SC} \cdot \s\log^2(n).
	\end{equation}
\end{theorem}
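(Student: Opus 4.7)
The natural starting point is the standard polarity inequality for conic Gaussian mean widths (due to Amelunxen–Lotz–McCoy–Tropp), which bounds the squared conic mean width of a convex function $f$ at $\grtr$ by the expected squared Gaussian distance to any positive multiple of the subdifferential:
\begin{equation}
\effdim[\conic]{\descset{f, \grtr}} \;\leq\; \inf_{t \geq 0} \mean{}_{\gaussian}\bigl[\dist^2(\gaussian,\,t\cdot\subd f(\grtr))\bigr].
\end{equation}
For $f(\cdot) = \lnorm{\TV(\cdot)}[1]$, a chain-rule computation yields $\subd f(\grtr) = \TV^\T \mathcal{W}$, where $\mathcal{W} \subset \R^N$ is the set of ``subgradient sign patterns'' $\vec{w}$ with $w_j = \sign([\TV\grtr]_j)$ for $j \in \supp(\TV\grtr)$ and $w_j \in \intvcl{-1}{1}$ otherwise. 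Consequently, the problem reduces to producing, for each realization $\gaussian \distributed \Normdistr{\vnull}{\I{n}}$, a feasible $\vec{w} = \vec{w}(\gaussian)\in\mathcal{W}$ such that $\mean{}_{\gaussian}\lnorm{\gaussian - t\TV^\T \vec{w}}^2$ is small enough; optimizing the free parameter $t$ at the end should yield the bound in \eqref{eq:results:mwbound}.

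The key difficulty is that $\TV$ is not orthonormal, so the box-constrained least-squares problem $\min_{\vec{w}\in\mathcal{W}}\lnorm{\gaussian - t\TV^\T\vec{w}}^2$ does not decouple and has no closed form. My plan here, matching the hint of the paper, is to rotate the problem into a signal-adapted \emph{non-dyadic Haar wavelet basis} $\vec{H}$. Concretely, I would build a binary tree $\mathcal{T}$ whose $\s+1$ leaves correspond to the $\s+1$ constant segments of $\grtr$ and whose internal nodes recursively bisect unions of adjacent segments; to each internal node I associate the usual Haar-type vector, constant on each of its two child subintervals and of mean zero. The $\SC$-separation assumption enforces that each segment has length at least $\SC n/(\s{+}1)$, which makes $\mathcal{T}$ balanced, with depth $\dmax \lesssim \log((\s{+}1)/\SC) \lesssim \log n$. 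Expanding $\gaussian = \sum_\alpha g_\alpha\,\vec{h}_\alpha$ in this basis makes the coefficients $g_\alpha$ independent $\Normdistr{0}{1}$, and — crucially — $\TV\vec{h}_\alpha$ is supported only at the two ``split points'' of node $\alpha$, so the projection of $\TV^\T \vec{w}$ onto each Haar atom depends only on the few entries of $\vec{w}$ indexed by those split points. This is what enables an approximate decoupling.

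Equipped with this, I would construct a candidate $\vec{w}^\ast \in \mathcal{W}$ explicitly, processing the tree top-down: at each internal node $\alpha$ I would set the off-support entries of $\vec{w}^\ast$ near node $\alpha$'s split points using a clipped linear combination of the already-committed ancestor values and of $g_\alpha$ (rescaled by the interval length, which gives $\lnorm{\TV\vec{h}_\alpha}$). For the support entries $\ssupp$, $\vec{w}^\ast$ is forced to equal $\sign(\TV\grtr)$ and I would absorb the resulting error into the objective. The per-node contribution to $\mean{}_{\gaussian}\lnorm{\gaussian - t\TV^\T\vec{w}^\ast}^2$ then reduces to an expectation of the form $\mean{}[(g - \clip{g}{c})^2]$ for $g \distributed \Normdistr{0}{1}$ and $c \asymp t\cdot(\text{interval length})^{-1/2}$; such subgaussian tail integrals contribute geometrically $e^{-c^2/2}$, producing at most one logarithmic factor per level of the tree and a total of $\dmax \lesssim \log n$ levels. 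Summing over the at most $\s$ ``support-affected'' branches and $O(n)$ free branches, choosing $t \asymp \sqrt{\log n}$, and invoking the balancedness bound $\dmax \lesssim \log(n/\SC)$ is what should ultimately give $\s \log^2(n) / \SC$.

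The step I expect to be hardest — and where the $1/\SC$ factor really enters — is the combinatorial bookkeeping of the tree construction: I need to argue that the chosen $\vec{w}^\ast$ genuinely lies in the box $[-1,1]^{\ssuppc}$ despite the accumulated contributions from every ancestor of a given face, and simultaneously to control how imbalance in the tree (quantified by $\SC$) inflates both the depth and the worst-case leaf width. Routine parts such as invoking Gaussian tail estimates for $\mean{}[(g-\clip{g}{c})^2]$, verifying the $\l{2}$-normalization of the non-dyadic Haar system, and finally optimizing over $t$ should all be standard once the tree-based approximate solution is in place.
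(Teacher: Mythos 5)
Your high-level strategy is the right one and does match the paper: bound the conic mean width by the polar estimate $\inf_{\tau>0}\mean{}_{\gaussian}\inf_{\vec{w}}\lnorm{\gaussian-\tau\TV^\T\vec{w}}^2$, rotate by a signal-adapted non-dyadic Haar matrix so that each Haar atom interacts with $\TV^\T$ only through a few split faces, and construct a feasible dual $\bar{\vec{w}}$ top-down by clipped propagation along a binary tree with geometrically shrinking trust regions. You also correctly flag the nested-interval feasibility argument as the delicate part.

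There is, however, a genuine gap in the tree construction, and it is precisely where the $1/\SC$ factor has to come from. You build $\mathcal{T}$ with exactly $\s+1$ leaves (one per constant segment) and claim that $\SC$-separation ``makes $\mathcal{T}$ balanced.'' This is false as stated: $\SC$-separation only lower-bounds segment lengths by $\SC n/(\s+1)$ and does not preclude one segment occupying almost all of $[n]$. Consider $\s$ short equispaced segments of length $\approx\SC n/(\s+1)$ followed by one segment of length $\approx(1-\tfrac{\s\SC}{\s+1})n$. If you split by segment-count median, the two halves have drastically different widths, and the Haar normalizations $\d{i}{\ell}=\sqrt{1/\nl{i}{\ell}+1/\nr{i}{\ell}}$ deviate from the reference scale $c_\ell=\sqrt{2^{\ell+1}/n}$ by factors as large as $\sqrt{\s}$. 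In the mean-width bound this deviation appears as $(\b{\max}{}/\b{\min}{})^2$: the Gaussian tail term for off-support faces needs $\d{i}{\ell}\sqrt{2^{L_0-\ell}}$ uniformly bounded below, while the clipped-propagation term for support and boundary faces needs it bounded above, and with only $\SC$-separation these two requirements cannot be reconciled for a tree built on the raw jumps. Following your sketch literally would land you at roughly $\s^2\log^2(n)$, not $\SC^{-1}\s\log^2(n)$.

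What the paper does, and what your plan is missing, is to replace the actual support $\ssupp$ by an \emph{extended} support $\Supp\supset\ssupp$ of cardinality $\sext\asymp\s/\SC$, obtained by adding ``ghost jumps'' at (rounded) nearly-equidistant grid positions. Because zero is a valid subgradient coordinate off the true support, every $\vec{w}$ fixed on $\Supp$ remains feasible, so this relaxation costs nothing in the polar bound. But it makes the support subtree a perfect tree of depth $L_0=\log_2(\sext+1)$ whose pivot spacings obey $\tfrac14\cdot\tfrac{n}{2^\ell}\leq\nl{i}{\ell},\nr{i}{\ell}\leq 2\cdot\tfrac{n}{2^\ell}$, which is exactly the two-sided balance needed to make all $\b{i}{\ell}\asymp 1$. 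The $1/\SC$ factor in the final bound therefore arises from the \emph{cardinality} $\sext$ of this extended support, not from the tree depth $\dmax\lesssim\log(n/\SC)$ as you write; the depth only supplies one logarithmic factor. This padding step is the technical centerpiece of the proof and needs to be made explicit for the argument to close.
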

While Subsection~\ref{subsec:proofs:mw} is dedicated to the formal proof of the previous result, we use the remainder of the current subsection to give an overview of the main arguments.
The starting point for our proof is a now fairly standard polar bound that can be traced back to \citeauthor{stojnic2009gordon}~\cite{stojnic2009gordon}; see also~\cite{chandrasekaran2012geometry,amelunxen2014edge,tropp2014convex}. It allows us to show that
\begin{equation}
\label{eq:main:polar}
	\effdim[\conic]{\descset{\lnorm{\TV(\cdot)}[1], \grtr}} \leq \inf_{\tau > 0} \mean{}_{\gaussian} [ \inf_{\dv \in \F} \lnorm{\gaussian - \tau \H \TV^\T \dv}^2 ],
\end{equation}
where $\gaussian \distributed \Normdistr{\vnull}{\I{n}}$, $\F  \coloneqq  \{ \dv \in \R^N \suchthat \dv_{\ssupp} = [\sign(\TV \grtr)]_{\ssupp}, \lnorm{\dv_{\setcompl\ssupp}}[\infty] \leq 1, \ssupp = \supp (\TV\grtr) \}$, and $\H \in \R^{n \times n}$ is an arbitrary orthogonal matrix (see Step~\hyperref[eq:proof:step1]{1} in Subsection~\ref{subsec:proofs:mw}).\footnote{From now on, we omit `Subsection~\ref{subsec:proofs:mw}' when referring to Step~\hyperref[eq:proof:step1]{1}--\hyperref[eq:proof:step5]{5} therein.}

The inner optimization problem on the right-hand side of \eqref{eq:main:polar} forms a box-con\-strained least-squares problem, which does not possess a closed form solution. In fact, the most challenging part of our proof is the \emph{explicit} construction of a dual vector $\bar{\dv} \in \F$ that yields a good ansatz to this problem. 
In the standard form, i.e., when $\H = \I{n}$, this appears to be a hopeless endeavor due to the coupled dependencies of the variables that are caused by $\TV^\T$. 
Hence, a key step of our approach is to come up with an appropriate orthogonal transform $\H$ that allows us to ``decouple'' these dependencies. Interestingly, the Haar wavelet transform turns out to be a good candidate for this task (see Step~\hyperref[eq:proof:step2a]{2}); cf.~\cite{Needell2013,Needell2013b}, where Haar wavelets play a crucial role as well, yet in a very different manner. 
The simplification due to the Haar matrix is visualized in Figure~\ref{fig:roadmap:1}, where we have plotted the resulting matrix $\H\TV^\T \in \R^{n \times N}$. Along the partition of $\H\TV^\T$ into multiple scales, it is possible to loosely identify a binary tree-like structure whose root vertex is formed by the center of the second row (note that the first row consists of zeros only). Indeed, when permuting the columns of $\H\TV^\T$ according to the induced dyadic ordering, we obtain a lower triangular matrix, such as shown in Figure~\ref{fig:roadmap:2}. This particularly simple structure allows us to significantly facilitate the inner optimization problem on the right-hand side of \eqref{eq:main:polar}. By exploiting the induced multilevel structure, we are eventually able to come up with a meaningful choice of $\bar{\dv} \in \F$ (see Step~\hyperref[eq:proof:step3a]{3(a)}), which leads to a fairly general upper bound for the conic mean width (see Step~\hyperref[eq:proof:step3b]{3(b)}).

Unfortunately, the argument just outlined is only valid for ideal signals with equidistantly distributed jump discontinuities. Only in this case, the gradient support is consistent with the binary tree-like structure described above, which is in fact a crucial feature, since every $\dv \in \F$ is fixed on the gradient support. Hence, another key component of our proof is the adaption to an arbitrary singularity pattern (see Step~\hyperref[eq:proof:step2a]{2}). To this end, we take a signal-dependent point of view by first representing the gradient support of $\grtr$ as a binary tree, which is then completed with the remaining off-support elements (see Step~\hyperref[eq:proof:step2a]{2(a)}). Note that the resulting tree is not necessarily \emph{perfect}, i.e., not all interior vertices have two children or the leaves might be at different levels.
Based on this construction, we then design an appropriate signal-dependent transform $\H$, which is known as \emph{non-dyadic Haar matrix}~\cite{gupta_non-dyadic_2010} (see Step~\hyperref[eq:proof:step2b]{2(b)}). 

Finally, in Step~\hyperref[eq:proof:step4]{4}, the balance of the constructed binary tree is related to the separation constant~$\SC$ for the underlying signal $\grtr$. While this step requires a certain technical effort, it allows for a simple conclusion: the closer the jumps of $\grtr$ are to an equidistant pattern, the more balanced is the binary tree. A combination of Step~\hyperref[eq:proof:step3b]{3(b)} and Step~\hyperref[eq:proof:step4]{4} eventually leads to the desired mean-width bound of Theorem~\ref{thm:results:mwbound} (see Step~\hyperref[eq:proof:step5]{5}).

\begin{figure}
	\centering
	\begin{subfigure}[t]{0.45\textwidth}
		\centering
		\includegraphics[width=1\textwidth]{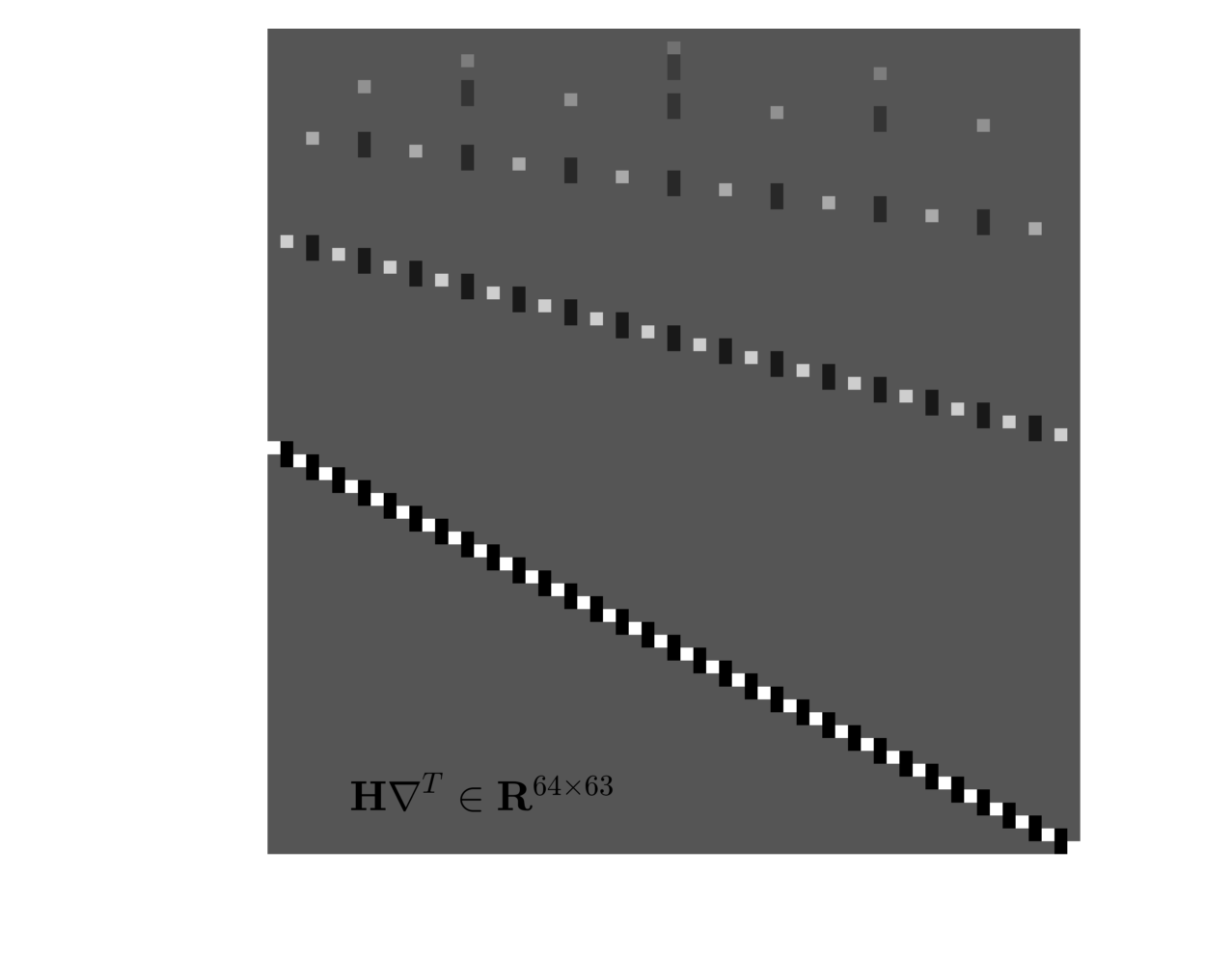}
		\caption{}
		\label{fig:roadmap:1}
	\end{subfigure}%
	\qquad
	\begin{subfigure}[t]{0.45\textwidth}
		\centering
		\includegraphics[width=1\textwidth]{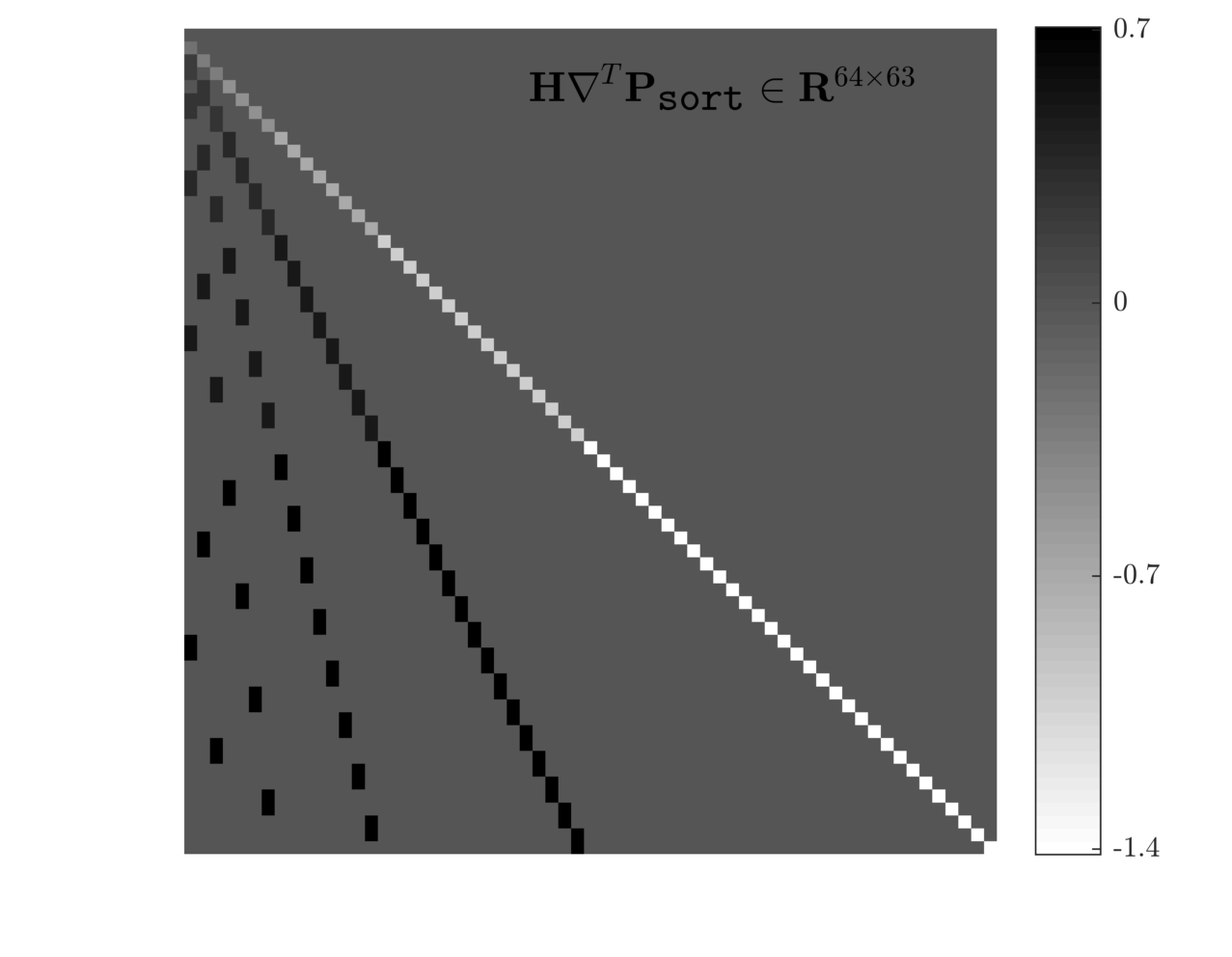}
		\caption{}
		\label{fig:roadmap:2}
	\end{subfigure}%
	\caption{\textbf{Visualization of the transformation matrix $\H\TV^\T$.} Subfigure~\subref{fig:roadmap:1} shows the matrix $\H\TV^\T \in \R^{n \times N}$ in dimension $n=64$, where $\H  \in \R^{n \times n}$ is defined as a Haar wavelet basis with $6$ decomposition levels. In Subfigure~\subref{fig:roadmap:2}, the columns of $\H\TV^\T$ are permuted appropriately.}
	\label{fig:roadmap}
\end{figure}

\subsection{TV Minimization for Stable and Robust Recovery}
\label{subsec:results:stable}

The mean-width bound for $\s$-gradient-sparse signals in Theorem~\ref{thm:results:mwbound} allows for a natural extension to stable and robust recovery.  We first briefly describe these two phenomena and then derive a generalization of Theorem~\ref{thm:results:exact} that takes them into account.

\emph{Robustness} refers to the observation that signal recovery is not too sensitive to measurement noise. In this work, we consider the (standard) setup of \emph{adversarial noise}, i.e., we assume that the measurements are given by
\begin{equation}\label{eq:results:measnoisy}
\y = \A \grtr + \noise,
\end{equation}
where $\noise \in \R^m$ models (possibly deterministic) noise, satisfying $\lnorm{\noise} \leq \noiseparam$ for some $\noiseparam \geq 0$.
In order to keep record of such perturbations, the vanilla TV minimization problem \eqref{eq:intro:tv-1} is adapted as follows:
\begin{equation}\label{eq:results:stable:tvdenois}\tag{$\text{TV-1$_\noiseparam$}$}
	\min_{\x \in \R^n} \lnorm{\TV\x}[1] \quad \text{subject to \quad $\lnorm{\A \x - \y} \leq \noiseparam$.}
\end{equation}
We note that the $\l{2}$-constraint in \eqref{eq:results:stable:tvdenois} ensures that the ground truth signal $\grtr$ remains a feasible point of the convex program.

\emph{Stability} addresses the recovery of signals that are only approximately gradient-sparse, i.e., $\TV\grtr$ possesses only a few dominant coefficients but could have many other small coefficients. In such a case, there is clearly no hope for a perfect recovery of $\grtr$ from significantly undersampled measurements. This is reflected by the fact that the assumptions of Theorem~\ref{thm:results:exact} only depend on the number of non-zero coefficients of $\TV\grtr$ but not on their magnitude. In particular, if $\TV\grtr$ is too densely populated, then \eqref{eq:results:exact:meas} could render a vacuous bound, or even worse, the separation condition \eqref{eq:results:exact:n} might be violated.
However, if $\grtr$ is close to a vector $\grtrsparse$ that is exactly gradient-sparse, it can be expected that \eqref{eq:results:stable:tvdenois}, or \eqref{eq:intro:tv-1} in the noiseless case, is stable under such small model inaccuracies and approximate recovery still succeeds with very few measurements.
The following guarantee for stable and robust recovery makes this claim precise. It is an instance of a more general approach to stable recovery, which was developed in \cite[Subsec.~6.1]{genzel2017cosparsity}.
\begin{proposition}[\protect{\cite[Cor.~6.6]{genzel2017cosparsity}}]\label{prop:results:stable:general}
	Let $\grtr \in \R^n$ be an arbitrary signal vector. Moreover, fix a vector $\grtrsparse \in \R^n$ with $\lnorm{\TV\grtr}[1] = \lnorm{\TV\grtrsparse}[1]$. For $\probsuccess > 0$ and $R > 0$, we assume that $\A \in \R^{m \times n}$ is a standard Gaussian random matrix with
	\begin{equation}\label{eq:results:stable:general:meas}
	m > m_0 \coloneqq \Big(\tfrac{R + 1}{R} \cdot \big[\meanwidth[\conic]{\descset{\lnorm{\TV(\cdot)}[1], \grtrsparse}} + 1\big] + \probsuccess \Big)^2 + 1.
	\end{equation}
	Then with probability at least $1 - e^{-\probsuccess^2/2}$, every minimizer $\solu \in \R^n$ of \eqref{eq:results:stable:tvdenois} with noisy input \eqref{eq:results:measnoisy} satisfies
	\begin{equation}\label{eq:results:stable:general:bound}
	\lnorm{\solu - \grtr} \leq R \lnorm{\grtr - \grtrsparse} + \frac{2\noiseparam}{\pospart{\sqrt{m-1} - \sqrt{m_0-1}}}.
	\end{equation}
	In the noiseless case, i.e., $\noiseparam = 0$, we particularly have that $\lnorm{\solu - \grtr} \leq R \lnorm{\grtr - \grtrsparse}$.
\end{proposition}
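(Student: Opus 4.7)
The plan is to quantify the geometric picture behind Proposition~\ref{prop:results:exact_geo} via Gordon's ``escape-through-a-mesh'' inequality applied to the descent cone of $\lnorm{\TV(\cdot)}[1]$ at $\grtrsparse$, combined with a triangle-inequality argument that absorbs the model mismatch $\vec{e}' := \grtr - \grtrsparse$. Feasibility of $\grtr$ (since $\lnorm{\A\grtr - \y} = \lnorm{\noise} \leq \noiseparam$) and optimality of $\solu$ yield $\lnorm{\TV\solu}[1] \leq \lnorm{\TV\grtr}[1] = \lnorm{\TV\grtrsparse}[1]$, so $\solu - \grtrsparse$ lies in $\descset{\lnorm{\TV(\cdot)}[1], \grtrsparse}$ and, if non-zero, its normalization belongs to $\desccone{\lnorm{\TV(\cdot)}[1], \grtrsparse} \cap \S^{n-1}$.

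I would split the analysis into two cases. If $\lnorm{\solu - \grtr} \leq R \lnorm{\grtr - \grtrsparse}$, the first term in \eqref{eq:results:stable:general:bound} already dominates and nothing further needs to be shown. Otherwise, set $\v := \solu - \grtr$, so that $\lnorm{\vec{e}'} < \lnorm{\v}/R$, which by the triangle inequality implies the norm comparison
\[
\tfrac{R-1}{R} \lnorm{\v} \leq \lnorm{\v + \vec{e}'} \leq \tfrac{R+1}{R} \lnorm{\v},
\]
while feasibility of both $\solu$ and $\grtr$ gives $\lnorm{\A\v} \leq \lnorm{\A\solu - \y} + \lnorm{\y - \A\grtr} \leq 2\noiseparam$. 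The purpose of these estimates is to turn any lower bound on $\lnorm{\A(\v + \vec{e}')}/\lnorm{\v + \vec{e}'}$ into an upper bound on $\lnorm{\v}$, losing at most the factor $\tfrac{R+1}{R}$ advertised in~\eqref{eq:results:stable:general:meas}.

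The required lower bound is furnished by the Gaussian escape-through-a-mesh inequality (as in \cite[Cor.~3.5]{tropp2014convex} and its refinements for analysis-based priors in \cite{genzel2017cosparsity}): with probability at least $1 - e^{-\probsuccess^2/2}$,
\[
\inf_{\vec{z} \in \desccone{\lnorm{\TV(\cdot)}[1], \grtrsparse} \cap \S^{n-1}} \lnorm{\A\vec{z}} \geq \sqrt{m-1} - \meanwidth[\conic]{\descset{\lnorm{\TV(\cdot)}[1], \grtrsparse}} - \probsuccess.
\]
Instantiated at $\vec{z} := (\v + \vec{e}')/\lnorm{\v + \vec{e}'}$ and combined with $\lnorm{\A(\v + \vec{e}')} \leq 2\noiseparam + \lnorm{\A\vec{e}'}$, this yields a scalar inequality from which $\lnorm{\v}$ can be extracted. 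Lipschitz/chi concentration applied to the single fixed vector $\vec{e}'$ provides $\lnorm{\A\vec{e}'} \lesssim \sqrt{m}\lnorm{\vec{e}'}$ with high probability; using $\lnorm{\vec{e}'}/\lnorm{\v + \vec{e}'} < 1/(R-1)$ from Case~2 lets this contribution be folded into the mean-width term as the additive ``$+1$'' in~\eqref{eq:results:stable:general:meas}, and undoing the $(R+1)/R$ norm distortion between $\lnorm{\v + \vec{e}'}$ and $\lnorm{\v}$ produces \eqref{eq:results:stable:general:bound}.

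The main obstacle is the clean bookkeeping of the constants $\tfrac{R+1}{R}$ and the additive ``$+1$''. A naive approach that instead enlarges the descent cone by a $1/R$-ball would spoil the bound, since the mean width of such a Minkowski sum picks up a parasitic $\sqrt{n}/R$ term from $\meanwidth{B(0,1/R)}$. The fix, as above, is to normalize by $\lnorm{\v + \vec{e}'}$ so that $\vec{z}$ lies \emph{exactly} on $\desccone{\lnorm{\TV(\cdot)}[1], \grtrsparse} \cap \S^{n-1}$, and then pay only a scalar triangle-inequality overhead when translating between $\lnorm{\v + \vec{e}'}$ and $\lnorm{\v}$. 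The noiseless specialization $\lnorm{\solu - \grtr} \leq R\lnorm{\grtr - \grtrsparse}$ then follows immediately by setting $\noiseparam = 0$.
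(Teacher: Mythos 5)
Your plan to normalize $\vec{z} := (\v + \vec{e}')/\lnorm{\v + \vec{e}'}$ so that $\vec{z}$ lies exactly on $\desccone{\lnorm{\TV(\cdot)}[1], \grtrsparse} \intersec \S^{n-1}$ creates a gap that cannot be closed. Gordon then lower-bounds $\lnorm{\A(\v + \vec{e}')}$, but feasibility only controls $\lnorm{\A\v} \leq 2\noiseparam$, so you must account for $\lnorm{\A\vec{e}'}$. For a fixed $\vec{e}'$ this concentrates at scale $\sqrt{m}\lnorm{\vec{e}'}$ --- \emph{not} at the $O(1)$ scale of the conic mean width --- so the ratio $\lnorm{\A\vec{e}'}/\lnorm{\v + \vec{e}'} \lesssim \sqrt{m}/(R-1)$ cannot be ``folded into the additive $+1$.'' After subtraction your effective denominator becomes roughly $\sqrt{m-1} - \meanwidth[\conic]{\descset{\lnorm{\TV(\cdot)}[1],\grtrsparse}} - \probsuccess - \sqrt{m}/(R-1)$, which is non-positive for every $R \leq 2$ regardless of $m$ and imposes an additional $m$-dependent restriction beyond $m > m_0$ even when $R > 2$. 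Further symptoms of the same problem: the passage from $\lnorm{\v + \vec{e}'}$ to $\lnorm{\v}$ costs the factor $R/(R-1)$, not the $(R+1)/R$ that appears in \eqref{eq:results:stable:general:meas}; the argument is void for $R \leq 1$, although the proposition holds for all $R>0$; and the separate concentration event for $\lnorm{\A\vec{e}'}$ forces a union bound that degrades the probability below $1 - e^{-\probsuccess^2/2}$.

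The repair is to normalize $\v$ itself and let the \emph{set} absorb the mismatch rather than the width. In your Case~2 ($\lnorm{\v} > R\lnorm{\vec{e}'}$, with $\vec{e}' \neq \vnull$) write
\begin{equation}
\frac{\v}{\lnorm{\v}} = \frac{\v + \vec{e}'}{\lnorm{\v}} - \frac{\vec{e}'}{\lnorm{\v}},
\end{equation}
where the first summand lies in $\desccone{\lnorm{\TV(\cdot)}[1], \grtrsparse}$ with norm $< (R+1)/R$, and the second lies on the ray segment $[0,1/R)\cdot \hat{\vec{e}}'$ with $\hat{\vec{e}}' := \vec{e}'/\lnorm{\vec{e}'}$. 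Hence $\v/\lnorm{\v}$ belongs to the \emph{deterministic} spherical set
\begin{equation}
T := \Big[ \Big(\desccone{\lnorm{\TV(\cdot)}[1], \grtrsparse} \intersec \big\{\vec{w} \in \R^n \suchthat \lnorm{\vec{w}} \leq \tfrac{R+1}{R}\big\}\Big) - \big[0,\tfrac{1}{R}\big]\hat{\vec{e}}' \Big] \intersec \S^{n-1}.
\end{equation}
Additivity of mean width under Minkowski sums gives $\meanwidth{T} \leq \tfrac{R+1}{R}\meanwidth[\conic]{\descset{\lnorm{\TV(\cdot)}[1],\grtrsparse}} + \tfrac{1}{R}\mean{}_{\gaussianuniv \distributed \Normdistr{0}{1}}[\pospart{\gaussianuniv}] \leq \tfrac{R+1}{R}\big(\meanwidth[\conic]{\descset{\lnorm{\TV(\cdot)}[1],\grtrsparse}}+1\big)$, up to the small $O(1)$ discrepancy between cone-in-ball and cone-on-sphere widths, which the ``$+1$'' also absorbs. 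Your instinct that a full $\ell^2$-ball of radius $1/R$ would be fatal is correct; the one-dimensional segment costs only $\tfrac{1}{R\sqrt{2\pi}}$. A single application of Gordon to $T$ then lower-bounds $\lnorm{\A\v}/\lnorm{\v}$ directly, and $\lnorm{\A\v} \leq 2\noiseparam$ yields $\lnorm{\v} \leq 2\noiseparam/\pospart{\sqrt{m-1} - \sqrt{m_0-1}}$ in Case~2, on a single event of probability at least $1 - e^{-\probsuccess^2/2}$, for all $R > 0$. This is the structure of \cite[Subsec.~6.1]{genzel2017cosparsity}, which the present paper cites rather than reproves.
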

The previous result extends Proposition~\ref{prop:results:exact_geo} by the following intuitive trade-off concerning stability: instead of evaluating the conic mean width at the actual signal vector $\grtr$, we rather evaluate it at a well-chosen \emph{surrogate vector} $\grtrsparse \in \R^n$. If $\grtrsparse$ is exactly gradient-sparse, this can lead to a significantly weaker requirement on the number of measurements $m$ in \eqref{eq:results:stable:general:meas}. The price to pay is an additional error term in the bound of \eqref{eq:results:stable:general:bound}, which scales with the Euclidean distance between $\grtr$ and $\grtrsparse$. In other words, Proposition~\ref{prop:results:stable:general} enables a ``barter trade'' between the number of required measurements and the desired recovery accuracy; in this context, $R > 0$ serves as an additional oversampling parameter, which allows us to further balance this trade-off.
Furthermore, regarding robustness to measurement noise, we note that the dependence on the noise parameter in \eqref{eq:results:stable:general:bound} is standard, e.g., see~\cite[Cor.~3.5]{tropp2014convex}.

Nevertheless, the statement of Proposition~\ref{prop:results:stable:general} still remains rather uninformative as long as the surrogate vector $\grtrsparse$ is left unspecified.
According to our bound for the conic mean width in Theorem~\ref{thm:results:mwbound}, it is a natural strategy to select $\grtrsparse$ as an exactly gradient-sparse vector that is close to $\grtr$.
Indeed, the following stable and robust recovery guarantee is an application of Proposition~\ref{prop:results:stable:general}, stating an error bound that explicitly depends on the $\s$ most dominant jump discontinuities of $\grtr$.
As the definition of the corresponding $\grtrsparse$ involves some additional technicalities, we refer to the proof in Subsection~\ref{subsec:proofs:stable} for more details.
\begin{theorem}[Stable and robust recovery via TV minimization]\label{thm:results:stable}
	Let $\grtr \in \R^n$ be an arbitrary signal vector. Moreover, fix a gradient-sparsity level $\s > 0$ and assume that $\cardinality{\supp(\TV\grtr)} \geq \s$. We introduce the following notation:
	\begin{thmproperties}
		\item
		Let $\ssupp \subset [N] = [n-1]$ be a (possibly non-unique) index set with $\cardinality{\ssupp} = \s$ such that $[\TV\grtr]_{\ssupp} \in \R^\s$ contains the $\s$ largest entries of $\TV\grtr$ in magnitude, i.e., it corresponds to a \emph{best $\s$-term approximation} to~$\TV\grtr$ (with respect to the $\l{1}$-norm).
		\item 
		Let $\proj{\ssupp} \in \R^{N \times N}$ and $\proj{\setcompl\ssupp} = \I{N} - \proj{\ssupp} \in \R^{N \times N}$ be the coordinate projections onto $\ssupp$ and $\setcompl\ssupp$, respectively.
		\item 
		Let $\psinv{\TV} \in \R^{n \times N}$ be the pseudo-inverse of $\TV$, which satisfies $\TV \psinv{\TV} = \I{N}$ and $\psinv{\TV} \TV = \I{n} - \tfrac{1}{n}\1_n \1_n^\T$.
		\item 
		Let $\SC > 0$ be a separation constant for any signal vector with gradient support $\ssupp$.\footnote{Note that the definition of the separation constant (see Definition~\ref{def:results:msc}) only depends on the gradient support of a signal vector but not on its actual coefficients.}
	\end{thmproperties}
	We assume that $\SC \geq 8\s / n$, and for $\probsuccess > 0$, let $\A \in \R^{m \times n}$ be a standard Gaussian random matrix with
	\begin{equation}\label{eq:results:stable:meas}
	m \gtrsim \frac{1}{\SC} \cdot \s\log^2(n) + \probsuccess^2.
	\end{equation}
	Then with probability at least $1 - e^{-\probsuccess^2/2}$, every minimizer $\solu \in \R^n$ of \eqref{eq:results:stable:tvdenois} with noisy input \eqref{eq:results:measnoisy} satisfies
	\begin{equation}\label{eq:results:stable:bound}
	\lnorm{\solu - \grtr} \lesssim \tau(\grtr) \lnorm[\big]{\psinv{\TV} \proj{\setcompl\ssupp}\TV\grtr} + \big(\tau(\grtr) - 1 \big) \lnorm[\big]{\grtr - \sp{\tfrac{1}{n}\1_n}{\grtr} \1_n} + \frac{\noiseparam}{\sqrt{m}},
	\end{equation}
	where $\tau(\grtr) \coloneqq \lnorm{\TV\grtr}[1] / \lnorm{\proj{\ssupp}\TV\grtr}[1] \geq 1$.
\end{theorem}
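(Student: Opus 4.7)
The plan is to invoke Proposition \ref{prop:results:stable:general} for a carefully chosen surrogate vector $\grtrsparse \in \R^n$ and then combine it with the mean-width bound from Theorem \ref{thm:results:mwbound}. The surrogate must satisfy two constraints: it must be exactly $\s$-gradient-sparse with $\supp(\TV\grtrsparse) = \ssupp$ (so that the separation constant $\SC$ is available), and it must obey the norm-matching condition $\lnorm{\TV\grtrsparse}[1] = \lnorm{\TV\grtr}[1]$. A natural candidate is obtained by rescaling the best-$\s$-term approximation of $\TV\grtr$ in the gradient domain: set
\begin{equation}
	\grtrsparse \coloneqq \tau(\grtr) \cdot \psinv{\TV}\proj{\ssupp}\TV\grtr + \sp{\tfrac{1}{n}\1_n}{\grtr} \1_n.
\end{equation}
Using the identity $\TV\psinv{\TV} = \I{N}$ and $\TV\1_n = \vnull$, one verifies that $\TV\grtrsparse = \tau(\grtr) \proj{\ssupp}\TV\grtr$, so $\supp(\TV\grtrsparse) = \ssupp$ and $\lnorm{\TV\grtrsparse}[1] = \tau(\grtr) \lnorm{\proj{\ssupp}\TV\grtr}[1] = \lnorm{\TV\grtr}[1]$, as required.

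Next I would estimate $\lnorm{\grtr - \grtrsparse}$ in the form demanded by the right-hand side of \eqref{eq:results:stable:bound}. Invoking the identity $\psinv{\TV}\TV\grtr = \grtr - \sp{\tfrac{1}{n}\1_n}{\grtr}\1_n$ and the decomposition $\psinv{\TV}\TV\grtr = \psinv{\TV}\proj{\ssupp}\TV\grtr + \psinv{\TV}\proj{\setcompl\ssupp}\TV\grtr$, a short rearrangement yields
\begin{equation}
	\grtr - \grtrsparse = \psinv{\TV}\proj{\setcompl\ssupp}\TV\grtr - (\tau(\grtr) - 1)\, \psinv{\TV}\proj{\ssupp}\TV\grtr.
\end{equation}
Applying the triangle inequality and bounding $\lnorm{\psinv{\TV}\proj{\ssupp}\TV\grtr}$ by $\lnorm{\psinv{\TV}\TV\grtr} + \lnorm{\psinv{\TV}\proj{\setcompl\ssupp}\TV\grtr}$ leads to
\begin{equation}
	\lnorm{\grtr - \grtrsparse} \leq \tau(\grtr)\, \lnorm[\big]{\psinv{\TV}\proj{\setcompl\ssupp}\TV\grtr} + (\tau(\grtr) - 1)\, \lnorm[\big]{\grtr - \sp{\tfrac{1}{n}\1_n}{\grtr}\1_n},
\end{equation}
which exactly reproduces the first two summands of \eqref{eq:results:stable:bound}.

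The remaining task is to translate Proposition \ref{prop:results:stable:general} into the sampling-rate assumption \eqref{eq:results:stable:meas}. Since $\grtrsparse$ is $\s$-gradient-sparse with support $\ssupp$ and $\SC$-separated by assumption, Theorem \ref{thm:results:mwbound} yields $\effdim[\conic]{\descset{\lnorm{\TV(\cdot)}[1], \grtrsparse}} \lesssim \SC^{-1} \s \log^2(n)$. Choosing $R = 1$ (any fixed constant suffices), condition \eqref{eq:results:stable:general:meas} on the number of measurements reduces to $m \gtrsim \SC^{-1} \s \log^2(n) + \probsuccess^2$, matching \eqref{eq:results:stable:meas}. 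Moreover, possibly strengthening the hidden constant so that $m \geq 4 m_0$, one has $\pospart{\sqrt{m-1} - \sqrt{m_0-1}} \gtrsim \sqrt{m}$, which turns the noise term in \eqref{eq:results:stable:general:bound} into $\noiseparam/\sqrt{m}$ up to a universal constant. Combining these estimates with the bound for $\lnorm{\grtr - \grtrsparse}$ derived above produces \eqref{eq:results:stable:bound}.

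The conceptually delicate step is the construction of $\grtrsparse$: one must simultaneously preserve the gradient support, match the $\l{1}$-norm of $\TV\grtr$ (so that Proposition \ref{prop:results:stable:general} applies), and keep control over $\lnorm{\grtr - \grtrsparse}$. The rescaling factor $\tau(\grtr)$ is what makes all three compatible, and it is also the only reason for the factor $(\tau(\grtr) - 1)$ appearing in \eqref{eq:results:stable:bound}; everything else reduces to the two standard inputs, namely the abstract stability estimate of Proposition \ref{prop:results:stable:general} and the mean-width bound of Theorem \ref{thm:results:mwbound}.
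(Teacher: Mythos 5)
Your proof is correct and follows essentially the same route as the paper: you pick the same surrogate $\grtrsparse = \tau(\grtr)\,\psinv{\TV}\proj{\ssupp}\TV\grtr + \sp{\tfrac{1}{n}\1_n}{\grtr}\1_n$ (the paper introduces a free constant $\lambda\1_n$ and only fixes $\lambda = \sp{\tfrac{1}{n}\1_n}{\grtr}/\tau(\grtr)$ at the end, but that is the identical vector), apply Proposition~\ref{prop:results:stable:general} with $R=1$, and bound $m_0$ via Theorem~\ref{thm:results:mwbound}. The only cosmetic difference is in bounding $\lnorm{\grtr - \grtrsparse}$: the paper splits the difference directly into the two terms $(\tau(\grtr)-1)(\grtr - \sp{\tfrac{1}{n}\1_n}{\grtr}\1_n)$ and $\tau(\grtr)\psinv{\TV}\proj{\setcompl\ssupp}\TV\grtr$ and applies one triangle inequality, whereas you use an intermediate split involving $\psinv{\TV}\proj{\ssupp}\TV\grtr$ and an extra triangle inequality, but both arrive at exactly the same final estimate.
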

The significance of the error estimate in \eqref{eq:results:stable:bound} depends on the ratio $\tau(\grtr)$, which measures how well $\grtr$ can be ``compressed'' by an $\s$-gradient-sparse signal. 
Indeed, the more the gradient coefficients $\TV\grtr$ concentrate on $\ssupp$, the closer $\tau(\grtr)$ is to $1$. 

Despite the dependence on $\tau(\grtr)$, the first error term in \eqref{eq:results:stable:bound} is dominated by $\lnorm{\psinv{\TV} \proj{\setcompl\ssupp}\TV\grtr}$, which captures the size of the remaining gradient coefficients $\proj{\setcompl\ssupp}\TV\grtr$. 
The second error term, on the other hand, depends on $\lnorm{\grtr - \sp{\tfrac{1}{n}\1_n}{\grtr} \1_n}$, which can be seen as the energy of the centered version of $\grtr$.\footnote{The inner product $\sp{\tfrac{1}{n}\1_n}{\grtr}$ simply computes the arithmetic mean of $\grtr$, which is then subtracted entrywise.} While this quantity is independent of $\ssupp$, its contribution to the total recovery error gets smaller as $\tau(\grtr)$ approaches $1$. Nevertheless, we suspect that the presence of this term is an artifact of our proof, which is caused by the technical assumption $\lnorm{\TV\grtr}[1] = \lnorm{\TV\grtrsparse}[1]$ in Proposition~\ref{prop:results:stable:general}.

Although the bound of Theorem~\ref{thm:results:stable} is likely to be non-tight, it exhibits the typical features of stable recovery:
the size of the first two error terms in \eqref{eq:results:stable:bound} depends continuously (but non-linearly) on the entries of $\grtr$, and if $\grtr$ is exactly $\s$-gradient-sparse, they both vanish.

We close our discussion with some additional remarks on Theorem~\ref{thm:results:stable}:
\begin{remark}
	\begin{rmklist}
		\item \label{rmk:results:stable:optimal}
		Theorem~\ref{thm:results:stable} relies on a natural choice of $\grtrsparse$ in Proposition~\ref{prop:results:stable:general}, which leads to explicit error terms in \eqref{eq:results:stable:bound}.
		However, one could easily obtain a tighter, but less informative, error bound from Proposition~\ref{prop:results:stable:general} by optimizing the right-hand side of \eqref{eq:results:stable:general:bound} over $\grtrsparse$. Indeed, for a fixed separation constant $\SC > 0$, the following error bound holds true under the same hypotheses as in Theorem~\ref{thm:results:stable}:
		\begin{equation}
		\lnorm{\solu - \grtr} \lesssim \inf \Big\{ \lnorm{\grtr - \grtrsparse} \suchthat \substack{\text{$\grtrsparse \in \R^n$ $\s$-gradient-sparse and} \\ \text{$\SC$-separated with $\lnorm{\TV\grtr}[1] = \lnorm{\TV\grtrsparse}[1]$} } \Big\} + \frac{\noiseparam}{\sqrt{m}}.
		\end{equation}
		Hence, the recovery accuracy is essentially determined by the error of the best $\s$-gradient-sparse approximation to $\grtr$ (with respect to the $\l{2}$-norm) with some additional constraints.
		
		\item 
		There is an additional trade-off which has not been taken into account in Theorem~\ref{thm:results:stable}, but which could be of importance when choosing the underlying surrogate vector $\grtrsparse$: In Theorem~\ref{thm:results:stable}, the set $\ssupp$ is selected according to the $\s$ largest entries of $\TV \grtr$ in magnitude. However, in view of the sampling rate promoted by \eqref{eq:results:stable:meas}, it might be beneficial to select $\ssupp$ in such a way that the resulting separation constant $\SC$ gets enlarged. By accepting a possibly less accurate reconstruction, this can lead to a significant decrease in the required number of measurements. Such a refinement might be of particular importance in an asymptotic-order regime, e.g., when $\grtr$ arises from the discretization of a function that is more complicated than a piecewise constant signal.
		\item \label{rmk:results:stable:literature}
		We wish to point out that the above approach to stable recovery is rather different from common strategies in the compressed sensing literature. While we rely on the \emph{Euclidean geometry} of the descent set $\descset{\lnorm{\TV(\cdot)}[1], \grtr}$ around $\grtr$, most standard stability results build upon \emph{uniform} recovery conditions, such as variants of the stable and robust nullspace property, e.g., see~\cite{Needell2013,Needell2013b,cai_guarantees_2015}.
		These findings suggest that, instead of \eqref{eq:results:stable:bound}, we may expect a bound of the form
		\begin{equation}\label{eq:results:stable:literature:bound}
		\lnorm{\solu - \grtr} \lesssim \frac{\lnorm{\TV\grtr - [\TV\grtr]_{\ssupp}}[1]}{\sqrt{s}} + \frac{\noiseparam}{\sqrt{m}}.
		\end{equation}
		Such a bound appears to be simpler than \eqref{eq:results:stable:bound} and it is consistent with the standard compressibility theory (for sparsity in orthonormal bases), e.g., see~\cite[Thm.~4.22]{foucart2013cs}.
		However, it is not entirely clear to us whether such a statement can be achieved within the signal-dependent setup of this paper, and if so, we suspect that different proof techniques would be required.
		We also point out that the worst-case (uniform) analysis of~\cite{cai_guarantees_2015} requires significantly more Gaussian measurements than \eqref{eq:results:stable:meas} in order to achieve a stability guarantee that is similar to \eqref{eq:results:stable:literature:bound}. \qedhere
	\end{rmklist}\label{rmk:results:stable}
\end{remark}

\subsection{Limitations and Possible Refinements}
\label{subsec:results:limitations}

In this part, we discuss some limitations of our recovery results that are particularly related to the concept of $\SC$-separation.
According to the statement of Corollary~\ref{cor:results:exact:cont}, the required sampling rate scales logarithmically with $n$ when considering natural (discretized) signals.
While such a behavior is desirable and significantly better than what can be achieved in the worst case (cf.~Theorem~\ref{thm:cai}), the dependence on the gradient-sparsity $\s$ can still be suboptimal in certain cases.
In fact, there exist classes of gradient-sparse signals whose separation constants cannot be controlled independently of the number of jump discontinuities.
Let us consider two illustrative examples:
\begin{example}
	\begin{rmklist}
		\item\label{ex:results:limitations:rndjumps}
		\textbf{Random jumps.} Let $\contgrtr \colon \intvopcl{0}{1} \to \R$ be a piecewise continuous function defined according to \eqref{eq:results:contmsc:signal} with $\s$ jump discontinuities that are selected independently at random from the uniform distribution on $\intvop{0}{1}$.
		One can show that with high probability, the maximal separation constant $\cont\SC > 0$ for $\contgrtr$ satisfies $\cont\SC \asymp \s^{-1}$, so that the sampling-rate bound \eqref{eq:results:exact:cont:meas} turns into
		\begin{equation}\label{eq:results:limitations:rndjumps:meas}
		m \gtrsim \s^2\log^2(n) + \probsuccess^2.
		\end{equation}
		\item\label{ex:results:limitations:expjumps}
		\textbf{Densifying jumps.} It is possible to construct even worse (deterministic) examples than in part~\ref{ex:results:limitations:rndjumps}. For instance, let $\contgrtr \colon \intvopcl{0}{1} \to \R$ be a piecewise continuous function defined according to \eqref{eq:results:contmsc:signal} such that
		\begin{equation}
		\cont\jump_i = 1 - \frac{1}{2^i}, \quad i = 1, \dots, \s.
		\end{equation}
		Then the maximal separation constant for $\contgrtr$ is $\cont\SC = 2^{-\s}(\s+1)$, so that the sampling-rate bound \eqref{eq:results:exact:cont:meas} turns into
		\begin{equation}\label{eq:results:limitations:expjumps:meas}
		m \gtrsim 2^\s\log^2(n) + \probsuccess^2.
		\end{equation}
		This example of ``exponentially densifying'' jumps pushes the applicability of Corollary~\ref{cor:results:exact:cont} to the limits, in the sense that the required resolution level scales exponentially with $\s$, i.e., $n \geq 16\s/\cont\SC \gtrsim 2^\s$. \qedhere
	\end{rmklist}\label{ex:results:limitations}
\end{example}
We emphasize that these two examples are still instances of natural signals, which do not correspond to the worst-case scenarios for discrete signals (see Subsection~\ref{subsec:results:signals}); in particular, the sampling rates in \eqref{eq:results:limitations:rndjumps:meas} and \eqref{eq:results:limitations:expjumps:meas} scale logarithmically with $n$.
However, we suspect that the (asymptotic) dependence on $\s$ is suboptimal in both cases; in fact, it is not clear to us whether a simple asymptotic-order bound is meaningful in the pathological situation of densifying jumps.
More generally, Example~\ref{ex:results:limitations} demonstrates the limitations of the $\SC$-separation property: The size of the separation constant is determined by the distance of the two closest jump discontinuities. Thus, it is very sensitive to outliers and does not capture the ``average'' distribution of the jump discontinuities, which might be much more benign in the above cases.
In other words, $\SC$-separation is a \emph{local} feature of a signal and does not reflect the \emph{global} structure of its gradient support.

On the other hand, our proof strategy in Subsection~\ref{subsec:proofs:mw} leaves certain space for refinements that could allow for a sampling-rate bound that reflects the global structure of the gradient support.
To be more specific, we observe that the bounds for $\Lambda^{L_0}$ and $\Lambda^{\natural}$ in \eqref{eq:proofs:mw:pivoterr_sums} can be easily improved as follows:
\begin{align}
\sum_{(\ell,i) \in \Lambda^{L_0}} \e{i}{\ell}
&\leq \cardinality{\Lambda^{L_0}} + (2 \tau c_{L_0})^2 \sum_{(\ell,i) \in \Lambda^{L_0}} (\b{i}{\ell})^2, \\*
\sum_{(\ell,i) \in \Lambda^\natural} \e{i}{\ell}
&\leq \cardinality{\Lambda^\natural} + (\tau  c_{L_0})^2 \sum_{(\ell,i) \in \Lambda^\natural} (\b{i}{\ell})^2. \label{eq:results:limitations:pivoterr_sums}
\end{align}
Following this path, it remains to control the ``average'' balance parameters $\sum_{(\ell,i) \in \Lambda^{L_0}} (\b{i}{\ell})^2$ and $\sum_{(\ell,i) \in \Lambda^\natural} (\b{i}{\ell})^2$ instead of $\b{\max}{\Lambda^{L_0}}$ and $\b{\max}{\Lambda^\natural}$, respectively.
This would particularly require a substantial adaption of Step~\hyperref[eq:proof:step4]{4} in Subsection~\ref{subsec:proofs:mw}, coming along with different technical difficulties.
A precise analysis would result in various conditions that depend on the average separation of the jump discontinuities of $\grtr$, rather than the distance of the two closest jumps.
In a prototypical form, such a condition would read as follows:
\begin{equation}\label{eq:results:limitations:msc_refined}
\tfrac{1}{\s+1} \sum_{i = 1}^{\s+1} \frac{n}{\abs{\jump_i - \jump_{i-1}}} \leq \frac{\s+1}{\tilde\SC},
\end{equation}
where $\tilde\SC > 0$ plays the role of a separation parameter, similarly to the separation constant $\SC$ in Definition~\ref{def:results:msc}.
In particular, it is not hard to see that the gradient support of a $\SC$-separated signal satisfies \eqref{eq:results:limitations:msc_refined} with $\tilde\SC \geq \SC$, so that this condition can be seen as a relaxation of the~$\SC$-separation.
However, in order to achieve (exact) recovery with $m \gtrsim \tilde{\SC}^{-1} \cdot \s \cdot \PolyLog(n)$ measurements, we expect that \eqref{eq:results:limitations:msc_refined} needs to be complemented by more technical assumptions, based on our multi-level tree representation of the gradient support.
Working out the details in that respect is by far not straightforward and would go beyond the scope of this paper. 


\revision{
\begin{remark}
 In the context of signal denoising, refinements closely related to~\eqref{eq:results:limitations:msc_refined} have been shown by Guntuboyina et al.~\cite{guntuboyina20}. Their work provides comprehensive risk bounds for (higher order) TV-denoising (also referred to as trend filtering), improving previous results in the literature, e.g., see~\cite{dalalyan17}. It turns out that~\eqref{eq:results:limitations:msc_refined} can be additionally sharpened by summing only over constant pieces of $\grtr$, for which the sign of $\TV \grtr$ changes. 
\end{remark}
}

\section{Conclusion and Outlook}
\label{sec:conclusion}

In this work, we have addressed the task of recovering piecewise constant signals from compressed Gaussian measurements via TV minimization~\eqref{eq:intro:tv-1}. From a conceptual point of view, our key achievement can be summarized as follows: 
Already in the ``simple'' case of TV minimization in 1D, the usage of a \emph{signal-dependent} complexity measure becomes indispensable.
Indeed, according to Theorem~\ref{thm:cai}, uniform recovery of all $\s$-gradient-sparse signals in $\R^n$ necessarily leads to the so-called $\sqrt{\s n}$-bottleneck---a pessimistic sampling rate that corresponds to the worst-case behavior on this signal class. 
The non-uniform approach carried out in Theorem~\ref{thm:results:exact} allows us to break this complexity barrier, showing that recovery of a $\SC$-separated signal is possible with $m\gtrsim \SC^{-1} \cdot \s \log^2(n)$ measurements. 

This sampling-rate bound is commonly regarded as an \emph{asymptotic-order} statement: such bounds are still non-asymptotic in the sense that the model parameters $m$, $n$, and $\s$ are finite; however, they may involve unknown constants and do not yield a literal description of the phase transition of \eqref{eq:intro:tv-1} (cf.~Proposition~\ref{prop:results:exact_geo} and \eqref{eq:results:exact_geo:lower}). Hence, the recovery guarantees presented in this work rather obtain their meaning by revealing the interplay of the model parameters in high dimensions.
With that in mind, recall that our main result of Theorem~\ref{thm:results:exact} becomes especially relevant when the separation constant $\SC$ is only mildly depending on $\s$ and $n$. More formally, for a fixed value of $\SC \in \intvopcl{0}{1}$, one might consider the following class of signals:
\begin{equation}
\gssignals{\SC} \coloneqq \big\{ \grtr \in \R^n \suchthat n \in \N, \text{$\grtr$ is $\SC$-separated with $\cardinality{\supp(\TV\grtr)} = \s > 0$ and $\SC \geq 8\s/n$} \big\}.
\end{equation}
Then, Theorem~\ref{thm:results:exact} states that every signal $\grtr \in \gssignals{\SC}$ can be recovered with high probability as long as $m \geq C \cdot \s \log^2(n)$, where the constant $C>0$ is independent of all other variable parameters. Consequently, $\gssignals{\SC}$ forms a subclass of those gradient-sparse signals that would enjoy the desired sampling rate of $m \gtrsim \s \cdot \PolyLog (n)$; in particular, it encompasses discretizations of piecewise constant functions with sufficiently well-separated jumps. At the same time, pathological examples like dense jumps or the densifying jumps of Example~\ref{ex:results:limitations}\ref{ex:results:limitations:expjumps} are excluded. 


With the set $\gssignals{\SC}$ at hand, we can now also make the distinction between signal-dependent and non-uniform recovery more precise: \emph{Signal dependence} refers to the fact that there exist subclasses of gradient-sparse vectors that differ significantly in their associated sampling rate, e.g., dense-jump signals versus equidistant-jump signals (in $\gssignals{1}$). \emph{Non-uniformity} means that successful recovery of a fixed signal is highly probable for a random draw of the measurement matrix $\A$. In contrast, uniform guarantees are concerned with the probability that an entire class of signals can be recovered for a random draw of $\A$. Hence, a non-uniform result allows (at least in principle) that the small exceptional set of measurement matrices for which recovery fails may depend on the considered signal.
In this light, our work provides rigorous evidence that signal dependence is an actual phenomenon for TV minimization in 1D. On the other hand, although quite natural, the non-uniformity of our result might just be an artifact of our general proof strategy. We suspect that a more advanced proof could lead to a guarantee similar to Theorem~\ref{thm:results:exact} that holds \emph{uniformly} over an appropriate subclass of gradient-sparse vectors, such as $\gssignals{\SC}$. Apart from that, we emphasize that a signal-dependent theory is not required for ordinary sparse recovery from Gaussian measurements: although the optimal number of measurements for uniform and non-uniform recovery differ slightly, they both take the form $m \gtrsim \s\cdot \PolyLog (n)$ for $s$-sparse vectors, cf.~\cite[Notes~of~Chap.~9]{foucart2013cs}.

In our view, the signal-dependent bounds established in this work are only the ``tip of the iceberg'' of a much more general phenomenon in compressed sensing. 
Indeed, the findings of~\cite{genzel2017cosparsity} indicate that a similar observation applies to $\l{1}$-analysis minimization in general.
Compared to~\cite{genzel2017cosparsity}, we do not aim at non-asymptotic descriptions of the phase transition and restrict ourselves to the seemingly simple case of $\TV$ as analysis operator; but in fact, these simplifications allow for an informative asymptotic-order bound that reflects the intuitive signal-dependent behavior of~\eqref{eq:intro:tv-1}.
In that sense, the present work provides a qualitative version of the conclusions of~\cite{genzel2017cosparsity}.
More generally, we believe that for a thorough understanding of many popular recovery methods, it is crucial to go beyond generic signal models such as plain sparsity. 
It appears beneficial to identify problem-dependent signal classes, whose elements satisfy specific structural properties that are associated with the success or failure of a recovery method under study. While our work and~\cite{genzel2017cosparsity} show that this approach is fruitful for analysis-based compressed sensing, we expect that it is also essential for the synthesis formulation~\cite{boyer2020synthesis} as well as more recent approaches, such as denoising-based regularization~\cite{bouman2013,oymak2016sharpmse}.
From a technical perspective, the non-uniform geometric framework outlined in Subsection~\ref{subsec:results:mw} has proven very useful in this context, but certainly, there might be other effective strategies. Having said this, we conclude our discussion by posing the following question: Does the signal-dependent construction of the orthogonal transform $\H \in \R^{n \times n}$ in the polar bound~\eqref{eq:main:polar} constitute only a special case of a general methodology that enables estimates of the conic mean width of more complicated regularizers?

%

\section{Proofs}
\label{sec:proofs}

This part is dedicated to the proofs of the results from Section~\ref{sec:results}.

\subsection{Proof of Proposition~\ref{prop:results:discrcont}}
\label{subsec:proofs:discrcont}

	For every $i \in [\s]$, there exists $j_i \in [N]$ such that
	\begin{equation}\label{eq:proofs:discrcont:jumpdiscr}
	\frac{j_i}{n} \leq \cont\jump_i < \frac{j_i+1}{n}.
	\end{equation}
	The resulting indices satisfy $j_1 < \dots < j_\s$, due to the $\cont\SC$-separation of $\contgrtr$ and the assumption $\cont\SC \geq (\s+1)/n$; indeed, otherwise, there would exist $i \in [\s]$ such that
	\begin{equation}
	\abs{\cont\jump_{i} - \cont\jump_{i-1}} < \frac{1}{n} \leq \frac{\cont\SC}{\s+1}.
	\end{equation}
	Consequently, we have that $x_{j_i}^\ast \neq x_{j_i+1}^\ast$ for all $i \in [\s]$ and $x_{j}^\ast = x_{j+1}^\ast$ for all $j \in [N] \setminus \{j_1, \dots, j_\s\}$. Therefore, $\supp(\TV\grtr) = \{\jump_1, \dots, \jump_\s\} = \{j_1, \dots, j_\s\}$.
	Finally, with $j_0 \coloneqq 0$ and $j_{\s+1} \coloneqq n$, we obtain
	\begin{equation}
	\min_{i \in [\s+1]} \frac{\abs{\jump_i - \jump_{i-1}}}{n} = \min_{i \in [\s+1]} \frac{j_i - j_{i-1}}{n} \stackrel{\eqref{eq:proofs:discrcont:jumpdiscr}}{>} \cont\jump_i - \frac{1}{n} - \cont\jump_{i-1} \geq \frac{\cont\SC}{\s+1} - \frac{1}{n} = \frac{\cont\SC + \frac{\s+1}{n}}{\s+1},
	\end{equation}
	which implies that $\SC \coloneqq \cont\SC + \frac{\s+1}{n}$ is a valid separation constant for $\grtr$. \qed

\subsection{Proof of Theorem \ref{thm:results:mwbound}}
\label{subsec:proofs:mw}

Throughout the proof, we agree on the following assumptions and notational conventions:
\begin{itemize}
	\item
	Let $\grtr \in \R^n$ be a $\SC$-separated signal with $\s>0$ jump discontinuities. The gradient support of $\grtr$ is denoted by $\ssupp \coloneqq \supp(\TV\grtr)$, satisfying $\cardinality{\ssupp} = \s$. 
	Moreover, according to the hypothesis of Theorem~\ref{thm:results:mwbound}, we assume that $\SC \geq 8\s / n$.
	Following the visualization of Figure~\ref{fig:proofs:mw:face_edges}, we call the $n$ entries of $\grtr \in \R^n$ the \emph{nodes}, and the $N = n-1$ entries of $\TV \grtr \in \R^N$ the \emph{faces} of~$\grtr$.  
	This distinction will prove particularly useful when representing the gradient support of~$\grtr$ as a binary tree in Step~\hyperref[eq:proof:step2a]{2(a)}.
	\item 
	Let $\Supp \subset [N]$ be any superset of the gradient support $\ssupp$, i.e., $\ssupp \subset \Supp$, and set $\sext \coloneqq \cardinality{\Supp}$.
	The purpose of $\Supp$ is primarily of technical nature, but it will be handy for the construction of binary tree structures that are almost perfect (see Step~\hyperref[eq:proof:step2a]{2(a)} and Step~\hyperref[eq:proof:step4]{4}).
	Intuitively, one can interpret $\Supp$ as an artificially extended gradient support, where $\jump \in \Supp \setminus \ssupp$ corresponds to a ``ghost jump'' of height $0$ at the $\jump$-th node.
	The set $\Supp$ is left unspecified up to Step~\hyperref[eq:proof:step4]{4}, where it will be selected such that $\sext \asymp \s / \SC$. Until then, it might be helpful to simply assume that $\Supp$ is equal to $\ssupp$ (and therefore $\sext = \s$). Finally, we set $L_0  \coloneqq  \ceil{ \log_2(\sext+1) } \in \N$.
\end{itemize}

\revision{
As the proof of Theorem~\ref{thm:results:mwbound} turns out to be somewhat lengthy and intricate, we have decided to split it into several substeps; see the roadmap in Subsection~\ref{subsec:results:mw} for a non-technical overview. In particular, the construction of the signal-dependent tree in Step~\hyperref[eq:proof:step2a]{2(a)} requires some technical overhead that might obfuscate the general strategy. 
Hence, in each substep, we also consider the special case of a \emph{dyadic signal} as a recurring example. 
This will assist in conveying the intuitive proof idea, as the involved quantities simplify considerably.
A dyadic signal $\grtr \in \R^n$ has equidistant jump discontinuities with  $\s = 2^{L_0} - 1$ and $n = 2^L$ for some $L_0 \leq L \in \N$ (see Figure~\ref{fig:trees:1} for an example). In this case, the extended gradient support can be chosen as  $\Supp = \ssupp = \supp(\TV\grtr) = \{ i \cdot 2^{L - L_0} \suchthat i = 1, \dots, \s \}$. 
}

\subsubsection{Step 1: A Polar Bound for the Conic Mean Width}
\label{eq:proof:step1}

We begin with a well-known polarity argument, relating the conic mean width of an arbitrary convex function to the expected difference of a standard Gaussian random vector to its subdifferential:
\begin{proposition}[\protect{\cite[Prop.~4.1]{amelunxen2014edge}}] \label{prop:proofs:mw:polargeneral}
	Let $f \colon \R^n \to \R$ be a convex function and let $\grtr \in \R^n$. Then, the \emph{subdifferential} of $f$ at $\grtr$, given by 
	\begin{equation}
	\subd f(\grtr) \coloneqq \big\{ \v \in \R^n \suchthat f(\x) \geq f(\grtr) + \sp{\v}{ \x-\grtr} \text{ for all } \x \in \R^n \big\},
	\end{equation}
	is well-defined. If $\subd f(\grtr)$ is non-empty, compact, and does not contain the origin, we have that
	\begin{equation} \label{eq:proofs:mw:polargeneral:bound}
	\effdim[\conic]{\descset{f,\grtr}} \leq \inf_{\tau > 0} \mean{}_{\gaussian} [ \inf_{\v \in \subd f(\grtr)} \lnorm{\gaussian - \tau \v }^2 ],
	\end{equation}
	where $\gaussian \distributed \Normdistr{\vnull}{\I{n}}$.
\end{proposition}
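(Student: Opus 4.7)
The approach is the classical polarity argument underlying non-uniform compressed sensing guarantees: I plan to pass from the descent cone to its polar, which is tightly linked to the subdifferential, and then exploit the Moreau decomposition to turn the conic mean width into an expected squared distance. Throughout, let $C \coloneqq \clos{\desccone{f,\grtr}}$ denote the closed descent cone of $f$ at $\grtr$, and let $C^\circ$ be its polar.

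My first step will be to establish the convex-analytic identity
\begin{equation*}
C^{\circ} = \cone(\subd f(\grtr)).
\end{equation*}
This is obtained by unpacking the definition of $\subd f(\grtr)$ together with positive homogeneity of the conic hull: a vector $\v$ lies in the polar of the descent cone iff $\sp{\v}{\x - \grtr} \leq 0$ for every $\x$ with $f(\x) \leq f(\grtr)$, which by the first-order characterization of convexity is equivalent to $\v \in \cone(\subd f(\grtr))$. The hypotheses that $\subd f(\grtr)$ is non-empty, compact, and avoids the origin guarantee that this conic hull is a non-trivial closed convex cone; in particular, $\grtr$ is not a minimizer of $f$ and $C$ is a proper subcone of $\R^n$. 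Closure plays no role here, since the polar of a convex cone and its closure coincide, so the identity transfers back to the (not necessarily closed) cone from Definition~\ref{def:results:descentcone}.

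My second step bounds the conic mean width by the expected squared distance to $C^\circ$. The Moreau decomposition for the closed convex cone $C$ yields $\gaussian = \proj{C}(\gaussian) + \proj{C^\circ}(\gaussian)$ with orthogonal summands, whence $\dist{\gaussian}{C^\circ} = \lnorm{\proj{C}(\gaussian)}$. Combining the support-function identity $\lnorm{\proj{C}(\gaussian)} = \sup_{\v \in C \intersec \S^{n-1}} \sp{\gaussian}{\v}$ (valid for any closed convex cone, as $\Pi_C(\gaussian)/\lnorm{\proj{C}(\gaussian)}$ attains the supremum and Cauchy--Schwarz provides the matching upper bound) with Jensen's inequality gives
\begin{equation*}
\effdim[\conic]{\descset{f,\grtr}} = \big(\mean{}_{\gaussian}[\lnorm{\proj{C}(\gaussian)}]\big)^2 \leq \mean{}_{\gaussian}[\lnorm{\proj{C}(\gaussian)}^2] = \mean{}_{\gaussian}[\distsq{\gaussian}{C^\circ}].
\end{equation*}

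For the final step I will use that $\tau \cdot \subd f(\grtr) \subset \cone(\subd f(\grtr)) = C^\circ$ for every $\tau > 0$, so that
\begin{equation*}
\distsq{\gaussian}{C^\circ} \leq \distsq{\gaussian}{\tau \cdot \subd f(\grtr)} = \inf_{\v \in \subd f(\grtr)} \lnorm{\gaussian - \tau \v}^2.
\end{equation*}
Taking expectations and then the infimum over $\tau > 0$ delivers the claimed bound \eqref{eq:proofs:mw:polargeneral:bound}. I do not expect any serious obstacle: the polar identity is standard convex analysis, the width-to-distance step is a one-line consequence of Moreau and Jensen, and the $\tau$-scaling is elementary. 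The only point requiring mild care is ensuring that the cone $C^\circ$ is genuinely closed and convex so that Moreau applies, which is precisely the purpose of the compactness and origin-exclusion assumptions on $\subd f(\grtr)$.
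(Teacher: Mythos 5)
The paper cites this proposition verbatim from Amelunxen et al.\ and gives no proof of its own, so there is nothing to compare against except the cited source; your argument does reproduce the standard polarity--Moreau route from that reference. The one inaccuracy worth flagging is the claimed pointwise identity $\lnorm{\proj{C}(\gaussian)} = \sup_{\v \in C \cap \S^{n-1}}\sp{\gaussian}{\v}$: this fails whenever $\gaussian$ lies in the interior of $C^\circ$, where the left side is zero but the right side is strictly negative (the normalization $\Pi_C(\gaussian)/\lnorm{\Pi_C(\gaussian)}$ you invoke is then undefined). The correct relation is $\lnorm{\proj{C}(\gaussian)} = \pospart{\sup_{\v \in C \cap \S^{n-1}}\sp{\gaussian}{\v}}$, so your ``$=$'' in $\effdim[\conic]{\descset{f,\grtr}} = (\mean{}_{\gaussian}\lnorm{\proj{C}(\gaussian)})^2$ should be ``$\leq$''. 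Since that is precisely the direction needed, the bound survives. Similarly, for the final step you only need the inclusion $\tau\,\subd f(\grtr) \subset C^\circ$, which follows immediately from the subgradient inequality; the reverse inclusion $C^\circ \subset \cone(\subd f(\grtr))$ (which genuinely requires the compactness and origin-exclusion hypotheses to close the conic hull) is dispensable here, so you could streamline Step~1 to the one-line forward inclusion. With those corrections noted, the proof is sound and is essentially the argument in \cite[Prop.~4.1]{amelunxen2014edge}.
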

We point out that the bound of \eqref{eq:proofs:mw:polargeneral:bound} can be loose in general.
But in our case of interest, where $f(\cdot) = \lnorm{\TV (\cdot)}[1]$, it has recently been shown that \eqref{eq:proofs:mw:polargeneral:bound} is tight up to an additive constant~\cite{zhang_precise_2016}.
This observation is particularly useful in conjunction with the following lemma, which provides an explicit expression for the subdifferential of $\lnorm{\TV (\cdot)}[1]$.
\begin{lemma}[\protect{\cite[Lem.~6.11]{genzel2017cosparsity}}]\label{lem:proofs:mw:subd}
	We have that
	\begin{equation}\label{eq:proofs:mw:subd}
	\subd{\lnorm{\TV (\cdot)}[1]}(\grtr) = \TV^\T \cdot \underbrace{\big( \sign(\TV\grtr)  + \{  \dv \in \R^N \suchthat \supp(\dv) \subset \ssuppc, \lnorm{\dv}[\infty] \leq 1 \} \big)}_{\eqqcolon \F}.
	\end{equation}
	In particular, $\emptyset \neq \subd{\lnorm{\TV (\cdot)}[1]}(\grtr) \subset \TV^\T \intvcl{-1}{1}^N$ and $\vnull \not\in \subd{\lnorm{\TV (\cdot)}[1]}(\grtr)$.
\end{lemma}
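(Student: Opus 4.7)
The plan is to derive the formula via the classical chain rule for subdifferentials under a linear composition, and then verify the three auxiliary properties separately. Observe that $\lnorm{\TV(\cdot)}[1] = g \circ \TV$, where $g \coloneqq \lnorm{\cdot}[1]$ is a convex, finite-valued function on $\R^N$. Since $g$ is continuous everywhere, the relative interior constraint qualification in Rockafellar's chain rule is trivially satisfied, yielding
\begin{equation}
\subd{(g \circ \TV)}(\grtr) = \TV^\T \cdot \subd g(\TV \grtr).
\end{equation}
The first step would be to cite this fact (e.g.,\ Theorem~23.9 in Rockafellar's \emph{Convex Analysis}) and substitute the known expression for the $\l{1}$-norm subdifferential at the point $\u = \TV \grtr$.

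Next, I would recall that for $\u \in \R^N$ with support $\ssupp$, the subdifferential of $\lnorm{\cdot}[1]$ is the set of all $\dv \in \R^N$ whose entries on $\ssupp$ equal $\sign(u_j)$ and whose entries on $\ssuppc$ lie in $\intvcl{-1}{1}$. A direct rearrangement presents this set as $\sign(\u) + \{\vec{w} \in \R^N \suchthat \supp(\vec{w}) \subset \ssuppc,\ \lnorm{\vec{w}}[\infty] \leq 1\}$, which is exactly the set $\F$ defined in \eqref{eq:proofs:mw:subd}. Applying this to $\u = \TV\grtr$ (so that $\ssupp = \supp(\TV\grtr)$ as in our notation) completes the identification.

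For the three additional claims, non-emptiness is immediate since $\sign(\TV\grtr) \in \F$, hence $\TV^\T \sign(\TV\grtr) \in \subd{\lnorm{\TV(\cdot)}[1]}(\grtr)$. The inclusion $\subd{\lnorm{\TV(\cdot)}[1]}(\grtr) \subset \TV^\T \intvcl{-1}{1}^N$ follows because every $\dv \in \F$ satisfies $\lnorm{\dv}[\infty] \leq 1$ by construction. Finally, to see that $\vnull \not\in \subd{\lnorm{\TV(\cdot)}[1]}(\grtr)$, I would note that $\TV \in \R^{N\times n}$ with $N = n-1$ has full row rank (its rows are manifestly linearly independent), so $\TV^\T$ is injective. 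Any $\dv \in \F$ satisfies $\dv_{\ssupp} = \sign(\TV\grtr)_{\ssupp} \neq \vnull$ (since $\s > 0$ forces $\ssupp \neq \emptyset$), hence $\dv \neq \vnull$ and consequently $\TV^\T \dv \neq \vnull$. Overall, the argument is essentially bookkeeping; the only subtle point is the injectivity of $\TV^\T$, which saves us from the nontrivial kernel of $\TV$ (spanned by $\1_n$) playing any role on the dual side.
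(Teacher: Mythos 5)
Your proof is correct and uses the standard approach for establishing the subdifferential of an $\l{1}$-analysis regularizer: the chain rule $\subd{(g \circ \TV)}(\grtr) = \TV^\T \subd{g}(\TV\grtr)$ for the polyhedral (hence finite-valued) function $g = \lnorm{\cdot}[1]$, followed by the elementary description of $\subd{\lnorm{\cdot}[1]}$. Note that the paper itself does not include a proof of this lemma but cites it from \cite{genzel2017cosparsity}, so there is no ``paper proof'' to compare against; your argument is the natural one and is almost certainly what the cited source does. The bookkeeping for the three auxiliary claims is all fine; in particular, you correctly isolate the genuine subtlety, namely that $\TV^\T$ is injective because $\TV \in \R^{(n-1)\times n}$ has full row rank, so that $\vnull \notin \TV^\T \F$ follows from $\vnull \notin \F$, which in turn uses $\s > 0$ (an assumption in force throughout Subsection~\ref{subsec:proofs:mw}).
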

Since the extended gradient support $\Supp$ satisfies $\Supp \supset \ssupp$ and $[\sign(\TV\grtr)]_\jump = 0 \in \intvcl{-1}{1}$ for all $\jump \in \Supp\setminus \ssupp$, we observe that
\begin{equation} \label{eq:proofs:mw:feasibleset}
\Fext  \coloneqq  \{ \dv \in \R^N \suchthat \dv_\Supp = [\sign(\TV \grtr)]_\Supp, \lnorm{\dv_{\setcompl\Supp}}[\infty] \leq 1 \} \subset \F,
\end{equation}
and therefore $\TV^\T \Fext \subset \subd{\lnorm{\TV (\cdot)}[1]}(\grtr)$. A combination of this fact with Proposition~\ref{prop:proofs:mw:polargeneral} and Lemma~\ref{lem:proofs:mw:subd} leads to the following bound for the conic mean width, which forms the basis for the proof of Theorem~\ref{thm:results:mwbound}; note that the matrix $\H \in \R^{n \times n}$ allows for an orthogonal change of variables, which will play a crucial role in the subsequent proof steps.
\begin{lemma} \label{lem:proofs:mw:unitarypolar}
	Let $\H \in \R^{n \times n}$ be an orthogonal matrix. Then, we have
	\begin{equation} \label{eq:proofs:mw:unitarypolar}
	\effdim[\conic]{\descset{\lnorm{\TV(\cdot)}[1], \grtr}} \leq \inf_{\tau > 0} \mean{}_{\gaussian} [ \inf_{\dv \in \Fext} \lnorm{\gaussian - \tau \H \TV^\T \dv}^2 ],
	\end{equation}
	where $\gaussian \distributed \Normdistr{\vnull}{\I{n}}$ and $\Fext$ is given by \eqref{eq:proofs:mw:feasibleset}.
\end{lemma}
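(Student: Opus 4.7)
The plan is to assemble the bound in \eqref{eq:proofs:mw:unitarypolar} by chaining together the three ingredients already established in Step~\hyperref[eq:proof:step1]{1}: the polar bound for the conic mean width from Proposition~\ref{prop:proofs:mw:polargeneral}, the explicit form of the subdifferential of $\lnorm{\TV(\cdot)}[1]$ from Lemma~\ref{lem:proofs:mw:subd}, and the inclusion $\TV^\T \Fext \subset \subd{\lnorm{\TV(\cdot)}[1]}(\grtr)$ inherited from $\Fext \subset \F$. The orthogonal change of variables induced by $\H$ is handled separately via rotational invariance of the standard Gaussian distribution.

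Concretely, I would first apply Proposition~\ref{prop:proofs:mw:polargeneral} with $f = \lnorm{\TV(\cdot)}[1]$; the hypotheses (non-empty, compact subdifferential that excludes the origin) are exactly the conclusions of Lemma~\ref{lem:proofs:mw:subd}, so the proposition yields
\begin{equation}
\effdim[\conic]{\descset{\lnorm{\TV(\cdot)}[1], \grtr}} \leq \inf_{\tau > 0} \mean{}_{\gaussian}\Big[ \inf_{\v \in \subd \lnorm{\TV(\cdot)}[1](\grtr)} \lnorm{\gaussian - \tau \v}^2 \Big]
\end{equation}
with $\gaussian \distributed \Normdistr{\vnull}{\I{n}}$. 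Next I would restrict the inner infimum to the strictly smaller set $\TV^\T \Fext$, which can only increase its value; parametrizing $\v = \TV^\T \dv$ with $\dv \in \Fext$ gives the upper bound $\inf_{\dv \in \Fext} \lnorm{\gaussian - \tau \TV^\T \dv}^2$ in place of the infimum over the full subdifferential.

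The remaining step is to insert the orthogonal matrix $\H$. Since $\H$ is orthogonal, $\lnorm{\cdot}$ satisfies $\lnorm{\w} = \lnorm{\H \w}$ for every $\w \in \R^n$, and applying this with $\w = \H^\T \gaussian - \tau \TV^\T \dv$ yields the pointwise identity
\begin{equation}
\lnorm{\H^\T \gaussian - \tau \TV^\T \dv}^2 = \lnorm{\gaussian - \tau \H \TV^\T \dv}^2
\end{equation}
for every $\dv \in \Fext$. Taking the infimum over $\dv \in \Fext$ on both sides and then the expectation, and finally using that $\H^\T \gaussian$ has the same standard Gaussian distribution as $\gaussian$ (by rotational invariance of $\Normdistr{\vnull}{\I{n}}$), transfers the inequality to the right-hand side of \eqref{eq:proofs:mw:unitarypolar}, completing the argument.

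I do not expect any real obstacle here; the lemma is essentially a bookkeeping statement that packages Proposition~\ref{prop:proofs:mw:polargeneral} and Lemma~\ref{lem:proofs:mw:subd} into a form amenable to a later signal-dependent choice of the orthogonal change of basis $\H$. The only points requiring a line of justification are (i) the inclusion $\Fext \subset \F$, which follows directly from $\Supp \supset \ssupp$ together with $[\sign(\TV\grtr)]_\jump = 0 \in \intvcl{-1}{1}$ for $\jump \in \Supp \setminus \ssupp$, and (ii) the distributional identity $\H^\T \gaussian \distributed \gaussian$, which is standard.
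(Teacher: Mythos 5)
Your proposal is correct and follows essentially the same route as the paper's proof: invoke Proposition~\ref{prop:proofs:mw:polargeneral} with $f = \lnorm{\TV(\cdot)}[1]$, use Lemma~\ref{lem:proofs:mw:subd} both to verify the hypotheses and to identify the subdifferential as $\TV^\T\F$, relax via the inclusion $\Fext \subset \F$, and then insert $\H$ by combining orthogonal invariance of the norm with rotational invariance of the standard Gaussian. The only cosmetic difference is that you substitute $\H^\T\gaussian$ for $\gaussian$ and then push $\H$ through the norm, whereas the paper applies $\H$ inside the norm first and then uses $\H\gaussian \distributed \gaussian$; these are equivalent.
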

\begin{proof}
	From Proposition~\ref{prop:proofs:mw:polargeneral}, Lemma~\ref{lem:proofs:mw:subd}, and~\eqref{eq:proofs:mw:feasibleset}, it follows that 
	\begin{align}
	\effdim[\conic]{\descset{\lnorm{\TV(\cdot)}[1], \grtr}}
	&\leq \inf_{\tau > 0} \mean{}_{\gaussian} [ \inf_{\substack{\dv \in \R^N, \\ \dv_{\ssupp} = \vnull, \lnorm{\dv_{\ssuppc}}[\infty] \leq 1}} \lnorm{\gaussian - \tau \TV^\T \sign(\TV \grtr) - \tau \TV^\T \dv}^2 ] \\*
	&= \inf_{\tau > 0} \mean{}_{\gaussian}[ \inf_{\dv \in \F} \lnorm{ \gaussian - \tau \TV^\T \dv}^2 ] \\
	&\leq \inf_{\tau > 0} \mean{}_{\gaussian}[ \inf_{\dv \in \Fext} \lnorm{ \gaussian - \tau \TV^\T \dv}^2 ] \\
	&= \inf_{\tau > 0} \mean{}_{\gaussian}[ \inf_{\dv \in \Fext} \lnorm{ \H \gaussian - \tau \H \TV^\T \dv}^2 ]\\
	&= \inf_{\tau > 0} \mean{}_{\gaussian}[ \inf_{\dv \in \Fext} \lnorm{ \gaussian - \tau \H \TV^\T \dv}^2 ],
	\end{align}
	where we have used in the fourth line that $\H$ is orthogonal, and in the last line that standard Gaussian random vectors are rotationally invariant.
\end{proof}

\subsubsection{Step 2(a): Signal-Dependent Trees}
\label{eq:proof:step2a}

In order to continue with the simple polar bound \eqref{eq:proofs:mw:unitarypolar} from Lemma~\ref{lem:proofs:mw:unitarypolar}, we need to handle a least-squares problem with box-constraint set $\Fext$.
At first sight, this might seem less challenging than calculating the conic mean width explicitly, but unfortunately, the inner optimization problem on the right-hand side of \eqref{eq:proofs:mw:unitarypolar},
\begin{equation}\label{eq:proofs:mw:boxconstr}
\inf_{\dv \in \Fext} \lnorm{\gaussian - \tau \H \TV^\T \dv}^2,
\end{equation}
has no closed-form solution ($\gaussian$ and $\tau$ are both fixed here).
For the simplest choice $\H = \I{n}$, it turns out that the variables in $\dv$ are strongly coupled by the action of $\TV^\T$, so that it is unclear how to \emph{explicitly} construct a good approximate solution to \eqref{eq:proofs:mw:boxconstr}.
As already sketched in roadmap subsequent to Theorem~\ref{thm:results:mwbound}, the key step of our proof is to design a matrix $\H \in \R^{n \times n}$ that allows us to decouple these dependencies and thereby to derive a meaningful upper bound for \eqref{eq:proofs:mw:boxconstr}.

Such a construction of $\H$ requires some technical effort. The aim of the current substep is to arrange the extended gradient support set $\Supp$ according to a specific binary tree.
The multi-level structure of the resulting tree then leads to the desired transform matrix $\H$ in Step~\hyperref[eq:proof:step2b]{2(b)}, which corresponds to a signal-dependent (non-dyadic) Haar wavelet basis.

First of all, we need several general definitions for our tree representation of $\Supp$. While the used notation might not be standard in the literature, it is consistent with the common terminology in graph theory.
\begin{definition}[Vertex] \label{def:proofs:mw:vertex}
	We call a tuple $(\ell,i) \in \{ (k, j) \suchthat k \in \N,\ j \in [2^{k-1}] \}$ a \emph{vertex} at \emph{level} $\ell$, and $(1,1)$ is referred to as the \emph{root vertex}.
	If $(\ell, i)$ is a non-root vertex, we denote by
	\begin{equation}
	\parent(\ell, i)  \coloneqq  \begin{cases}
	(\ell-1, \tfrac{i+1}{2}), &\text{if $i$ odd}, \\
	(\ell-1, \tfrac{i}{2}), & \text{if $i$ even},
	\end{cases}
	\end{equation}
	the \emph{parent} of $(\ell,i)$. Whenever $(k,j) = \parent(\ell,i)$ is well-defined, we call $\lchild(k,j) \coloneqq (\ell,i)$ the \emph{left child} of $(k,j)$ if $i$ is odd and $\rchild(k,j) \coloneqq (\ell,i)$ the \emph{right child} of $(k,j)$ if $i$ is even. As shorthand notation, we also define $\child(k,j)  \coloneqq  \{\lchild(k,j), \rchild(k,j)\}$.
	Finally, for $\ell \in \N$, we denote the \emph{$\ell$-th level set} by $\lambda_{\ell} \coloneqq \{ (\ell, i) \suchthat  i\in[2^{\ell-1}] \}$.
\end{definition}
From these definitions, it is clear that every parent and every left (right) child are also vertices. Every vertex has exactly one left (right) child and every non-root vertex has exactly one parent.
\begin{definition}[Binary tree]
	We call a set of vertices $\Lambda$ a \emph{binary tree} (or simply a \emph{tree}) if it contains the root vertex and $\parent(\ell,i) \in \Lambda$ holds true for every non-root vertex $(\ell,i) \in \Lambda$. The \emph{depth} of a tree~$\Lambda$ is defined as $\depth(\Lambda)  \coloneqq  \max \{\ell \suchthat (\ell,i) \in \Lambda \} \in \N \union \{+\infty\}$, and by
	\begin{equation}
	\ceil{ (\ell,i) }  \coloneqq  \bigg\{ (k,j) \in \Lambda \suchthat (k,j) = (\underbrace{\parent \circ \dots \circ \parent}_{\text{$d$ times}})(\ell,i) \text{ for some } d \in \N  \bigg\}
	\end{equation}
	we define the \emph{ancestors} of a vertex $(\ell,i) \in \Lambda$.
\end{definition}
We note that a tree $\Lambda$ of depth $L  \coloneqq  \depth(\Lambda) < \infty$ always satisfies the inequality $\cardinality{\Lambda} \leq 2^L-1$.

Next, we iteratively generate two signal-dependent trees $\Lambda$ and $\Lambda^{L_0} \subset \Lambda$ such that each vertex of~$\Lambda$ and~$\Lambda^{L_0}$ corresponds to (exactly) one element of $[N]$ and $\Supp$, respectively.
\revision{It is useful to bear in mind that the elements of the introduced subsets of $[N]$ correspond to faces of $\grtr$; see also Figure~\ref{fig:proofs:mw:face_edges}.
The intuitive idea behind Iteration~\ref{def:proofs:mw:signaltree} below is a simple bipartition scheme: given a (consecutive) subset of faces $\SuppS{i}{\ell} \subset [N]$ at level $\ell$, bisect it at a pivot face $\p{i}{\ell} \in \SuppS{i}{\ell}$ and proceed recursively with the left and right section. The resulting pivot faces form the desired tree $\Lambda$; see Figure~\ref{fig:trees} for an illustration by means of two examples.}

\revision{
\begin{example}[Dyadic signals I]
\label{ex:dyadic_1}
This construction is best understood in the special case of a dyadic signal $\grtr \in \R^{n}$ with $n = 2^L$. Here, $\Lambda$ is a perfect binary tree of depth $\depth(\Lambda) = L$, which corresponds to the faces of $\grtr$.
Furthermore, $\Lambda^{L_0} \subset \Lambda$ is simply the (perfect) subtree of the first  $L_0 =  \log_2(\sext+1)$ levels, corresponding to the gradient support $\Supp = \ssupp$; see Figure~\ref{fig:trees:1} for an illustration. In this case, the \emph{pivot faces} of Iteration~\ref{def:proofs:mw:signaltree} can be explicitly represented by $\p{i}{\ell} = (i-1)\cdot 2^{L-\ell-1} + 2^{L-\ell}$ and we have that $\cardinality{\SuppS{i}{\ell}} = 2^{L-\ell+1} -1$. 
 
For arbitrary gradient-sparse signals, the balance of $\Lambda$ reflects how well the jump discontinuities $\Supp$ of $\grtr$ are separated: the more $\Supp$ deviates from being equidistant, the less $\Lambda$ is balanced; see Figure~\ref{fig:trees:2} for an example.
\end{example}
}


\begin{figure}
 \centering
 \begin{subfigure}[t]{1\textwidth}
 \centering
 \begin{tikzpicture}[thick,scale=0.75, every node/.style={scale=0.7}]
  
  \definecolor{myo}{rgb}{1,0.46,0.15}
  \definecolor{myb}{rgb}{0.35,0.46,0.8}
  \definecolor{myb2}{rgb}{0.81,0.84,0.94}
  \definecolor{Sset}{gray}{0.5}
  \definecolor{myt}{rgb}{0.95,0.75,0.3}
  \definecolor{Sset2}{rgb}{0.35,0.46,0.8}
  
  \tikzstyle{signaldots}=[mark size=1.6pt,color=myb,rotate=-90]
  \tikzstyle{jumpdots}=[mark size=2pt,color=gray,rotate=-90]
  \tikzstyle{jumpdotsS}=[mark size=2pt,color=myo,rotate=-90] 
  \tikzstyle{treeS}=[mark size=2pt,color=myo,rotate=-90]
  \tikzstyle{treeSc}=[mark size=2pt,color=gray,rotate=-90]
  
  \draw[step=1.0,gray,ultra thin,dashed] (1,2) grid (15,6);
  \draw[ultra thin,dashed,gray!50] (0,6) -- (1,6);
  \draw[ultra thin,dashed,gray!50] (13,6) -- (16,6);
  \draw[thin] (1,6) -- (15,6);
  
  \draw[Sset2,->] (2,3) -- (2,5.9);
  \node[color=Sset2] at (1.7,5.5) {\small $\p{1}{3}$};
  
  \foreach \i in {1,...,4} {
          \node[mark size=1.6pt,color=Sset2,rotate=-90] at (\i-0.5,-1.2) {\pgfuseplotmark{square*}};;
    }
  \draw[color=Sset2,dashed,ultra thin] (0.5,-1.2) --(3.5,-1.2);
  \node[color=Sset2] at (3.7,-0.85) { $\Q{1}{3}$};

  \node[jumpdots] at (1,6) {\pgfuseplotmark{triangle*}};
  \node[color=gray] at (1,6.3) {1};
  \node[jumpdots] at (2,6) {\pgfuseplotmark{triangle*}};
  \node[color=gray] at (2,6.3) {2};
  \node[jumpdots] at (3,6) {\pgfuseplotmark{triangle*}};
  \node[color=gray] at (3,6.3) {3};
  \node[jumpdots] at (5,6) {\pgfuseplotmark{triangle*}};
  \node[color=gray] at (5,6.3) {5};
  \node[jumpdots] at (6,6) {\pgfuseplotmark{triangle*}};
  \node[color=gray] at (6,6.3) {6};
  \node[jumpdots] at (7,6) {\pgfuseplotmark{triangle*}};
  \node[color=gray] at (7,6.3) {7};
  \node[jumpdots] at (9,6) {\pgfuseplotmark{triangle*}};
  \node[color=gray] at (9,6.3) {9};
  \node[jumpdots] at (10,6) {\pgfuseplotmark{triangle*}};
  \node[color=gray] at (10,6.3) {10};
  \node[jumpdots] at (11,6) {\pgfuseplotmark{triangle*}};
  \node[color=gray] at (11,6.3) {11};
  \node[jumpdots] at (13,6) {\pgfuseplotmark{triangle*}};
  \node[color=gray] at (13,6.3) {13};
  \node[jumpdots] at (14,6) {\pgfuseplotmark{triangle*}};
  \node[color=gray] at (14,6.3) {14};
  \node[jumpdots] at (15,6) {\pgfuseplotmark{triangle*}};
  \node[color=gray] at (15,6.3) {$15= N$};
  
  \node[jumpdotsS] at (4,6) {\pgfuseplotmark{triangle*}};
  \node[color=gray] at (4,6.3) {4}; 
  \node[jumpdotsS] at (8,6) {\pgfuseplotmark{triangle*}};
  \node[color=gray] at (8,6.3) {8}; 
  \node[jumpdotsS] at (12,6) {\pgfuseplotmark{triangle*}};
  \node[color=gray] at (12,6.3) {12}; 
  
  \node[mark size=2pt,color=gray!50,rotate=-90] at (0,6) {\pgfuseplotmark{triangle*}};
  \node[color=gray!50] at (0,6.3) {0}; 
  \node[mark size=2pt,color=gray!50,rotate=-90] at (16,6) {\pgfuseplotmark{triangle*}};
  \node[color=gray!50] at (16,6.3) {16};

  \draw (4,4) -- (8,5);
  \draw (12,4) -- (8,5);
  \draw (12,4) -- (14,3);
  \draw (12,4) -- (10,3);
  \draw (4,4) -- (2,3);
  \draw (4,4) -- (6,3);
  \draw (2,3) -- (1,2);
  \draw (2,3) -- (3,2);
  \draw (6,3) -- (7,2);
  \draw (6,3) -- (5,2);
  \draw (10,3) -- (9,2);
  \draw (10,3) -- (11,2);
  \draw (14,3) -- (15,2);
  \draw (14,3) -- (13,2);

  \node[treeS] at (8,5) {\pgfuseplotmark{*}};
  \node[color=gray] at (8.4,5.2) {\small $(1,1)$}; 
  \node[treeS] at (4,4) {\pgfuseplotmark{*}};
  \node[color=gray] at (3.6,4.2) {\small $(2,1)$}; 
  \node[treeS] at (12,4) {\pgfuseplotmark{*}};
  \node[color=gray] at (12.4,4.2) {\small $(2,2)$};   
  \node[color=gray] at (14,3) {\pgfuseplotmark{*}};
  \node[color=gray] at (14.4,3.2) {\small $(3,4)$}; 
  \node[treeSc] at (2,3) {\pgfuseplotmark{*}};
  \node[color=gray] at (1.6,3.2) {\small $(3,1)$};
  \node[treeSc] at (6,3) {\pgfuseplotmark{*}};
  \node[color=gray] at (6.4,3.2) {\small $(3,2)$};
  \node[treeSc] at (1,2) {\pgfuseplotmark{*}};
  \node[color=gray] at (0.6,2.2) {\small $(4,1)$};
  \node[treeSc] at (3,2) {\pgfuseplotmark{*}};
  \node[color=gray] at (3.4,2.2) {\small $(4,2)$};
  \node[treeSc] at (5,2) {\pgfuseplotmark{*}};
  \node[color=gray] at (4.6,2.2) {\small $(4,3)$};
  \node[treeSc] at (7,2) {\pgfuseplotmark{*}};
  \node[color=gray] at (7.4,2.2) {\small $(4,4)$};
  \node[treeSc] at (11,2) {\pgfuseplotmark{*}};
  \node[color=gray] at (11.4,2.2) {\small $(4,6)$}; 
  \node[treeSc] at (13,2) {\pgfuseplotmark{*}};
  \node[color=gray] at (12.6,2.2) {\small $(4,7)$}; 
  \node[treeSc] at (15,2) {\pgfuseplotmark{*}};
  \node[color=gray] at (14.4,2.15) {\small $(4,8)$}; 
  \node[treeSc] at (10,3) {\pgfuseplotmark{*}};
  \node[color=gray] at (9.6,3.2) {\small $(3,3)$};
  \node[treeSc] at (9,2) {\pgfuseplotmark{*}};
  \node[color=gray] at (8.6,2.2) {\small $(4,5)$}; 
  
  \draw[color=myb2,dashed] (0,8.5) -- (4,8.5);
  \draw[color=myb2,dashed] (4,8.5) -- (4,7.5);
  \draw[color=myb2,dashed] (4,7.5) -- (8,7.5);
  \draw[color=myb2,dashed] (8,7.5) -- (8,6.8);
  \draw[color=myb2,dashed] (8,6.8) -- (12,6.8);
  \draw[color=myb2,dashed] (12,6.8) -- (12,8.9);
  \draw[color=myb2,dashed] (12,8.9) -- (16,8.9);
  \node[signaldots] at (0.5,8.5) {\pgfuseplotmark{square*}};
  \node[signaldots] at (1.5,8.5) {\pgfuseplotmark{square*}};
  \node[signaldots] at (2.5,8.5) {\pgfuseplotmark{square*}};
  \node[signaldots] at (3.5,8.5) {\pgfuseplotmark{square*}};
  \node[signaldots] at (4.5,7.5) {\pgfuseplotmark{square*}};
  \node[signaldots] at (5.5,7.5) {\pgfuseplotmark{square*}};
  \node[signaldots] at (6.5,7.5) {\pgfuseplotmark{square*}};
  \node[signaldots] at (7.5,7.5) {\pgfuseplotmark{square*}};
  \node[signaldots] at (8.5,6.8) {\pgfuseplotmark{square*}};
  \node[signaldots] at (9.5,6.8) {\pgfuseplotmark{square*}};
  \node[signaldots] at (10.5,6.8) {\pgfuseplotmark{square*}};
  \node[signaldots] at (11.5,6.8) {\pgfuseplotmark{square*}};
  \node[signaldots] at (12.5,8.9) {\pgfuseplotmark{square*}};
  \node[signaldots] at (13.5,8.9) {\pgfuseplotmark{square*}};
  \node[signaldots] at (14.5,8.9) {\pgfuseplotmark{square*}};
  \node[signaldots] at (15.5,8.9) {\pgfuseplotmark{square*}};
  \node[color=myb] at (0.5,8.2) {\small $x^\ast_1$};
  \node[color=myb] at (1.5,8.2) {\small $x^\ast_2$};
  \node[color=myb] at (2.5,8.2) {\small \dots};
  \node[color=myb] at (15.5,8.6) {\small $x^\ast_{16}$};
  
  \node[color=gray] at (16,5.05) {\small $\ell = 1$};
  \node[color=gray] at (16,4.05) {\small $\ell = 2$};
  \node[color=gray] at (16,3.05) {\small $\ell = 3$};
  \node[color=gray] at (16,2.05) {\small $\ell = 4$};

  \node[color=gray] at (16,0.95) {\small $\ell = 1$};
  \node[color=gray] at (16,-0.05) {\small $\ell = 2$};
  \node[color=gray] at (16,-1.05) {\small $\ell = 3$};
  \node[color=gray] at (16,-2.05) {\small $\ell = 4$};

  \foreach \i in {1,...,15} {
          \node[mark size=2pt,color=Sset,rotate=-90] at (\i,1) {\pgfuseplotmark{triangle*}};;
    }
  \draw[color=Sset,dashed,ultra thin] (1,1) --(15,1);  
  \node[mark size=2pt,color=myo,rotate=-90] at (8,1) {\pgfuseplotmark{triangle*}};
  \node[mark size=2pt,color=myo,rotate=-90] at (12,1) {\pgfuseplotmark{triangle*}};
  \node[mark size=2pt,color=myo,rotate=-90] at (4,1) {\pgfuseplotmark{triangle*}};
  \node[color=myo] at (8,1.4) { $\SuppS{1}{1}$};
  
  \foreach \i in {1,...,7} {
          \node[mark size=2pt,color=Sset,rotate=-90] at (\i,0) {\pgfuseplotmark{triangle*}};;
    }
  \draw[color=Sset,dashed,ultra thin] (1,0) --(7,0);
  \node[mark size=2pt,color=myo,rotate=-90] at (4,0) {\pgfuseplotmark{triangle*}};
  \node[color=myo] at (4,0.4) { $\SuppS{1}{2}$};
  
  \foreach \i in {9,...,15} {
          \node[mark size=2pt,color=Sset,rotate=-90] at (\i,0) {\pgfuseplotmark{triangle*}};;
    }
  \draw[color=Sset,dashed,ultra thin] (9,0) --(15,0);
  \node[mark size=2pt,color=myo,rotate=-90] at (12,0) {\pgfuseplotmark{triangle*}};
  \node[color=myo] at (12,0.4) { $\SuppS{2}{2}$};
  
  \foreach \i in {1,...,3} {
          \node[mark size=2pt,color=Sset,rotate=-90] at (\i,-1) {\pgfuseplotmark{triangle*}};;
    }
  \draw[color=Sset,dashed,ultra thin] (1,-1) --(3,-1); 
  \node[color=Sset] at (2,-0.6) { $\SuppS{1}{3}$};
  
   \foreach \i in {5,...,7} {
          \node[mark size=2pt,color=Sset,rotate=-90] at (\i,-1) {\pgfuseplotmark{triangle*}};;
    }
  \draw[color=Sset,dashed,ultra thin] (5,-1) --(7,-1);
  \node[color=Sset] at (6,-0.6) { $\SuppS{2}{3}$};
  
  \foreach \i in {9,...,11} {
          \node[mark size=2pt,color=Sset,rotate=-90] at (\i,-1) {\pgfuseplotmark{triangle*}};;
    }
  \draw[color=Sset,dashed,ultra thin] (9,-1) --(11,-1);
  \node[color=gray] at (10,-0.6) { $\SuppS{3}{3}$};
  
   \foreach \i in {13,...,15} {
          \node[mark size=2pt,color=Sset,rotate=-90] at (\i,-1) {\pgfuseplotmark{triangle*}};;
    }
  \draw[color=Sset,dashed,ultra thin] (13,-1) --(15,-1);
  \node[color=Sset] at (14,-0.6) { $\SuppS{4}{3}$};
  
  \node[mark size=2pt,color=Sset,rotate=-90] at (1,-2) {\pgfuseplotmark{triangle*}};
  \node[color=Sset] at (1,-1.6) { $\SuppS{1}{4}$};
  
  \node[mark size=2pt,color=Sset,rotate=-90] at (3,-2) {\pgfuseplotmark{triangle*}};
  \node[color=Sset] at (3,-1.6) { $\SuppS{2}{4}$};

  \node[mark size=2pt,color=Sset,rotate=-90] at (5,-2) {\pgfuseplotmark{triangle*}};
  \node[color=Sset] at (5,-1.6) { $\SuppS{3}{4}$};
  
  \node[mark size=2pt,color=Sset,rotate=-90] at (7,-2) {\pgfuseplotmark{triangle*}};
  \node[color=Sset] at (7,-1.6) { $\SuppS{4}{4}$};
  
  \node[mark size=2pt,color=Sset,rotate=-90] at (13,-2) {\pgfuseplotmark{triangle*}};
  \node[color=Sset] at (13,-1.6) { $\SuppS{7}{4}$};
  
  \node[mark size=2pt,color=Sset,rotate=-90] at (15,-2) {\pgfuseplotmark{triangle*}};
  \node[color=Sset] at (15,-1.6) { $\SuppS{8}{4}$};
  
  \node[mark size=2pt,color=Sset,rotate=-90] at (9,-2) {\pgfuseplotmark{triangle*}};
  \node[color=Sset] at (9,-1.6) { $\SuppS{5}{4}$};
    
  \node[mark size=2pt,color=Sset,rotate=-90] at (11,-2) {\pgfuseplotmark{triangle*}};
  \node[color=Sset] at (11,-1.6) { $\SuppS{6}{4}$};
  
  \draw[black!30!green,->,thin,dashed] plot [smooth] coordinates {(6,3) (6.4,4.0) (8,5)};
  \node[color=black!30!green] at (7,4.0) {\small $\pr{2}{3}$};
  
  \draw[black!30!green,->,thin,dashed] plot [smooth] coordinates {(13,2) (13.0,3.0) (12,4)};
  \node[color=black!30!green] at (12.7,2.7) {\small $\pl{7}{4}$};

    %
    %
    %
    %
    %
  \end{tikzpicture}
   \caption{}
   \label{fig:trees:1}
   \end{subfigure}
  
  \vspace{1em}
  \begin{subfigure}[t]{1\textwidth}
  \centering
  \begin{tikzpicture}[thick,scale=0.75, every node/.style={scale=0.75}]
  
  \definecolor{myo}{rgb}{1,0.46,0.15}
  \definecolor{myb}{rgb}{0.35,0.46,0.8}
  \definecolor{myb2}{rgb}{0.81,0.84,0.94}
  \definecolor{Sset}{gray}{0.5}
  \definecolor{myt}{rgb}{0.95,0.75,0.3}
  \definecolor{Sset2}{rgb}{0.35,0.46,0.8}
  
  \tikzstyle{signaldots}=[mark size=1.6pt,color=myb,rotate=-90]
  \tikzstyle{jumpdots}=[mark size=2pt,color=gray,rotate=-90]
  \tikzstyle{jumpdotsS}=[mark size=2pt,color=myo,rotate=-90] 
  \tikzstyle{treeS}=[mark size=2pt,color=myo,rotate=-90]
  \tikzstyle{treeSc}=[mark size=2pt,color=gray,rotate=-90]
  
  \draw[step=1.0,gray,ultra thin,dashed] (1,1) grid (15,6);
  \draw[ultra thin,dashed,gray!50] (0,6) -- (1,6);
  \draw[ultra thin,dashed,gray!50] (15,6) -- (16,6);
  \draw[thin] (1,6) -- (15,6);
  
  \draw[Sset2,->] (2,3) -- (2,5.9);
  \node[color=Sset2] at (1.7,5.5) {\small $\p{1}{3}$};
  
  \foreach \i in {1,...,4} {
          \node[mark size=1.6pt,color=Sset2,rotate=-90] at (\i-0.5,-2.2) {\pgfuseplotmark{square*}};;
    }
  \draw[color=Sset2,dashed,ultra thin] (0.5,-2.2) --(3.5,-2.2);
  \node[color=Sset2] at (3.7,-1.85) { $\Q{1}{3}$};

  \node[jumpdots] at (1,6) {\pgfuseplotmark{triangle*}};
  \node[color=gray] at (1,6.3) {1};
  \node[jumpdots] at (2,6) {\pgfuseplotmark{triangle*}};
  \node[color=gray] at (2,6.3) {2};
  \node[jumpdots] at (3,6) {\pgfuseplotmark{triangle*}};
  \node[color=gray] at (3,6.3) {3};
  \node[jumpdots] at (5,6) {\pgfuseplotmark{triangle*}};
  \node[color=gray] at (5,6.3) {5};
  \node[jumpdots] at (6,6) {\pgfuseplotmark{triangle*}};
  \node[color=gray] at (6,6.3) {6};
  \node[jumpdots] at (9,6) {\pgfuseplotmark{triangle*}};
  \node[color=gray] at (9,6.3) {9};
  \node[jumpdots] at (10,6) {\pgfuseplotmark{triangle*}};
  \node[color=gray] at (10,6.3) {10};
  \node[jumpdots] at (11,6) {\pgfuseplotmark{triangle*}};
  \node[color=gray] at (11,6.3) {11};
  \node[jumpdots] at (13,6) {\pgfuseplotmark{triangle*}};
  \node[color=gray] at (13,6.3) {13};
  \node[jumpdots] at (14,6) {\pgfuseplotmark{triangle*}};
  \node[color=gray] at (14,6.3) {14};
  \node[jumpdots] at (15,6) {\pgfuseplotmark{triangle*}};
  \node[color=gray] at (15,6.3) {$15= N$};
  
  \node[jumpdotsS] at (4,6) {\pgfuseplotmark{triangle*}};
  \node[color=gray] at (4,6.3) {4}; 
  \node[jumpdotsS] at (7,6) {\pgfuseplotmark{triangle*}};
  \node[color=gray] at (7,6.3) {7}; 
  \node[jumpdotsS] at (8,6) {\pgfuseplotmark{triangle*}};
  \node[color=gray] at (8,6.3) {8}; 
  \node[jumpdotsS] at (12,6) {\pgfuseplotmark{triangle*}};
  \node[color=gray] at (12,6.3) {12}; 
  
  \node[mark size=2pt,color=gray!50,rotate=-90] at (0,6) {\pgfuseplotmark{triangle*}};
  \node[color=gray!50] at (0,6.3) {0}; 
  \node[mark size=2pt,color=gray!50,rotate=-90] at (16,6) {\pgfuseplotmark{triangle*}};
  \node[color=gray!50] at (16,6.3) {16};

  \draw (4,4) -- (7,5);
  \draw (8,4) -- (7,5);
  \draw (8,4) -- (12,3);
  \draw (4,4) -- (2,3);
  \draw (4,4) -- (5,3);
  \draw (2,3) -- (1,2);
  \draw (2,3) -- (3,2);
  \draw (5,3) -- (6,2);
  \draw (12,3) -- (10,2);
  \draw (12,3) -- (14,2);
  \draw (10,2) -- (9,1);
  \draw (10,2) -- (11,1);
  \draw (13,1) -- (14,2);
  \draw (15,1) -- (14,2);
  
  \node[treeS] at (7,5) {\pgfuseplotmark{*}};
  \node[color=gray] at (7.4,5.2) {\small $(1,1)$}; 
  \node[treeS] at (4,4) {\pgfuseplotmark{*}};
  \node[color=gray] at (3.6,4.2) {\small $(2,1)$}; 
  \node[treeS] at (8,4) {\pgfuseplotmark{*}};
  \node[color=gray] at (8.4,4.2) {\small $(2,2)$};   
  \node[treeS] at (12,3) {\pgfuseplotmark{*}};
  \node[color=gray] at (12.4,3.2) {\small $(3,4)$}; 
  \node[treeSc] at (2,3) {\pgfuseplotmark{*}};
  \node[color=gray] at (1.6,3.2) {\small $(3,1)$};
  \node[treeSc] at (5,3) {\pgfuseplotmark{*}};
  \node[color=gray] at (5.4,3.2) {\small $(3,2)$};
  \node[treeSc] at (1,2) {\pgfuseplotmark{*}};
  \node[color=gray] at (0.6,2.2) {\small $(4,1)$};
  \node[treeSc] at (3,2) {\pgfuseplotmark{*}};
  \node[color=gray] at (3.4,2.2) {\small $(4,2)$};
  \node[treeSc] at (6,2) {\pgfuseplotmark{*}};
  \node[color=gray] at (6.4,2.2) {\small $(4,4)$};
  \node[treeSc] at (10,2) {\pgfuseplotmark{*}};
  \node[color=gray] at (9.6,2.2) {\small $(4,7)$}; 
  \node[treeSc] at (14,2) {\pgfuseplotmark{*}};
  \node[color=gray] at (14.4,2.2) {\small $(4,8)$}; 
  \node[treeSc] at (13,1) {\pgfuseplotmark{*}};
  \node[color=gray] at (12.5,1.2) {\small $(5,15)$};
  \node[treeSc] at (15,1) {\pgfuseplotmark{*}};
  \node[color=gray] at (14.25,1.15) {\small $(5,16)$};
  \node[treeSc] at (9,1) {\pgfuseplotmark{*}};
  \node[color=gray] at (8.5,1.2) {\small $(5,13)$};
  \node[treeSc] at (11,1) {\pgfuseplotmark{*}};
  \node[color=gray] at (11.5,1.2) {\small $(5,14)$};
  
  \draw[color=myb2,dashed] (0,8.5) -- (4,8.5);
  \draw[color=myb2,dashed] (4,8.5) -- (4,7.5);
  \draw[color=myb2,dashed] (4,7.5) -- (7,7.5);
  \draw[color=myb2,dashed] (7,7.5) -- (7,6.8);
  \draw[color=myb2,dashed] (7,6.8) -- (8,6.8);
  \draw[color=myb2,dashed] (8,6.8) -- (8,7.9);
  \draw[color=myb2,dashed] (8,7.9) -- (12,7.9);
  \draw[color=myb2,dashed] (12,7.9) -- (12,8.9);
  \draw[color=myb2,dashed] (12,8.9) -- (16,8.9);
  \node[signaldots] at (0.5,8.5) {\pgfuseplotmark{square*}};
  \node[signaldots] at (1.5,8.5) {\pgfuseplotmark{square*}};
  \node[signaldots] at (2.5,8.5) {\pgfuseplotmark{square*}};
  \node[signaldots] at (3.5,8.5) {\pgfuseplotmark{square*}};
  \node[signaldots] at (4.5,7.5) {\pgfuseplotmark{square*}};
  \node[signaldots] at (5.5,7.5) {\pgfuseplotmark{square*}};
  \node[signaldots] at (6.5,7.5) {\pgfuseplotmark{square*}};
  \node[signaldots] at (7.5,6.8) {\pgfuseplotmark{square*}};
  \node[signaldots] at (8.5,7.9) {\pgfuseplotmark{square*}};
  \node[signaldots] at (9.5,7.9) {\pgfuseplotmark{square*}};
  \node[signaldots] at (10.5,7.9) {\pgfuseplotmark{square*}};
  \node[signaldots] at (11.5,7.9) {\pgfuseplotmark{square*}};
  \node[signaldots] at (12.5,8.9) {\pgfuseplotmark{square*}};
  \node[signaldots] at (13.5,8.9) {\pgfuseplotmark{square*}};
  \node[signaldots] at (14.5,8.9) {\pgfuseplotmark{square*}};
  \node[signaldots] at (15.5,8.9) {\pgfuseplotmark{square*}};
  \node[color=myb] at (0.5,8.2) {\small $x^\ast_1$};
  \node[color=myb] at (1.5,8.2) {\small $x^\ast_2$};
  \node[color=myb] at (2.5,8.2) {\small \dots};
  \node[color=myb] at (15.5,8.6) {\small $x^\ast_{16}$};
  
  \node[color=gray] at (16,5.05) {\small $\ell = 1$};
  \node[color=gray] at (16,4.05) {\small $\ell = 2$};
  \node[color=gray] at (16,3.05) {\small $\ell = 3$};
  \node[color=gray] at (16,2.05) {\small $\ell = 4$};
  \node[color=gray] at (16,1.05) {\small $\ell = 5$};

  \node[color=gray] at (16,0.05) {\small $\ell = 1$};
  \node[color=gray] at (16,-1.05) {\small $\ell = 2$};
  \node[color=gray] at (16,-2.05) {\small $\ell = 3$};
  \node[color=gray] at (16,-3.05) {\small $\ell = 4$};
  \node[color=gray] at (16,-4.05) {\small $\ell = 5$};

  \foreach \i in {1,...,15} {
          \node[mark size=2pt,color=Sset,rotate=-90] at (\i,0) {\pgfuseplotmark{triangle*}};;
    }
  \draw[color=Sset,dashed,ultra thin] (1,0) --(15,0);  
  \node[mark size=2pt,color=myo,rotate=-90] at (7,0) {\pgfuseplotmark{triangle*}};
  \node[mark size=2pt,color=myo,rotate=-90] at (8,0) {\pgfuseplotmark{triangle*}};
  \node[mark size=2pt,color=myo,rotate=-90] at (4,0) {\pgfuseplotmark{triangle*}};
  \node[mark size=2pt,color=myo,rotate=-90] at (12,0) {\pgfuseplotmark{triangle*}};
  \node[color=myo] at (8,0.4) { $\SuppS{1}{1}$};
  
  \foreach \i in {1,...,6} {
          \node[mark size=2pt,color=Sset,rotate=-90] at (\i,-1) {\pgfuseplotmark{triangle*}};;
    }
  \draw[color=Sset,dashed,ultra thin] (1,-1) --(6,-1);
  \node[mark size=2pt,color=myo,rotate=-90] at (4,-1) {\pgfuseplotmark{triangle*}};
  \node[color=myo] at (3.5,-0.6) { $\SuppS{1}{2}$};
  
  \foreach \i in {8,...,15} {
          \node[mark size=2pt,color=Sset,rotate=-90] at (\i,-1) {\pgfuseplotmark{triangle*}};;
    }
  \draw[color=Sset,dashed,ultra thin] (8,-1) --(15,-1);
  \node[mark size=2pt,color=myo,rotate=-90] at (8,-1) {\pgfuseplotmark{triangle*}};
  \node[mark size=2pt,color=myo,rotate=-90] at (12,-1) {\pgfuseplotmark{triangle*}};
  \node[color=myo] at (11.5,-0.6) { $\SuppS{2}{2}$};
  
  \foreach \i in {1,...,3} {
          \node[mark size=2pt,color=Sset,rotate=-90] at (\i,-2) {\pgfuseplotmark{triangle*}};;
    }
  \draw[color=Sset,dashed,ultra thin] (1,-2) --(3,-2); 
  \node[color=Sset] at (2,-1.6) { $\SuppS{1}{3}$};
  
   \foreach \i in {5,...,6} {
          \node[mark size=2pt,color=Sset,rotate=-90] at (\i,-2) {\pgfuseplotmark{triangle*}};;
    }
  \draw[color=Sset,dashed,ultra thin] (5,-2) --(6,-2);
  \node[color=Sset] at (5.5,-1.6) { $\SuppS{2}{3}$};
  
  \foreach \i in {9,...,15} {
          \node[mark size=2pt,color=Sset,rotate=-90] at (\i,-2) {\pgfuseplotmark{triangle*}};;
    }
  \draw[color=Sset,dashed,ultra thin] (9,-2) --(15,-2);
  \node[mark size=2pt,color=myo,rotate=-90] at (12,-2) {\pgfuseplotmark{triangle*}};
  \node[color=myo] at (12,-1.6) { $\SuppS{4}{3}$};
  
  \node[mark size=2pt,color=Sset,rotate=-90] at (1,-3) {\pgfuseplotmark{triangle*}};
  \node[color=Sset] at (1,-2.6) { $\SuppS{1}{4}$};
  
  \node[mark size=2pt,color=Sset,rotate=-90] at (3,-3) {\pgfuseplotmark{triangle*}};
  \node[color=Sset] at (3,-2.6) { $\SuppS{2}{4}$};

  \node[mark size=2pt,color=Sset,rotate=-90] at (6,-3) {\pgfuseplotmark{triangle*}};
  \node[color=Sset] at (6,-2.6) { $\SuppS{4}{4}$};
  
  \foreach \i in {9,...,11} {
          \node[mark size=2pt,color=Sset,rotate=-90] at (\i,-3) {\pgfuseplotmark{triangle*}};;
    }
  \draw[color=Sset,dashed,ultra thin] (9,-3) --(11,-3);
  \node[color=Sset] at (10,-2.6) { $\SuppS{7}{4}$};
  
  \foreach \i in {13,...,15} {
          \node[mark size=2pt,color=Sset,rotate=-90] at (\i,-3) {\pgfuseplotmark{triangle*}};;
    }
  \draw[color=Sset,dashed,ultra thin] (13,-3) --(15,-3);  
  \node[color=Sset] at (14,-2.6) { $\SuppS{8}{4}$};
  
  \node[mark size=2pt,color=Sset,rotate=-90] at (9,-4) {\pgfuseplotmark{triangle*}};
  \node[color=Sset] at (9,-3.6) { $\SuppS{13}{5}$};
    
  \node[mark size=2pt,color=Sset,rotate=-90] at (11,-4) {\pgfuseplotmark{triangle*}};
  \node[color=Sset] at (11,-3.6) { $\SuppS{14}{5}$};
  
  \node[mark size=2pt,color=Sset,rotate=-90] at (13,-4) {\pgfuseplotmark{triangle*}};
  \node[color=Sset] at (13,-3.6) { $\SuppS{15}{5}$};
  
  \node[mark size=2pt,color=Sset,rotate=-90] at (15,-4) {\pgfuseplotmark{triangle*}};
  \node[color=Sset] at (15,-3.6) { $\SuppS{16}{5}$};
  
  \draw[black!30!green,->,thin,dashed] plot [smooth] coordinates {(6,2) (6.4,4.0) (7,5)};
  \node[color=black!30!green] at (7,4.0) {\small $\pr{4}{4}$};
  
  \draw[black!30!green,->,thin,dashed] plot [smooth] coordinates {(13,1) (13.0,2.0) (12,3)};
  \node[color=black!30!green] at (12.7,1.7) {\small $\pl{15}{5}$};

  \draw[color=myb2,dashed] (0.7,-4) -- (1.3,-4);
  \node[signaldots] at (1,-4) {\pgfuseplotmark{square*}};
  \node[color=myb] at (2,-4) {$\grtr \in \R^n$};
  
  \node[mark size=2pt,color=myo,rotate=-90] at (1,-4.5) {\pgfuseplotmark{triangle*}};
  \node[color=myo] at (2,-4.5) {$\Supp \subset [N]$};
  
  \node[mark size=2pt,color=gray,rotate=-90] at (3,-4) {\pgfuseplotmark{triangle*}};
  \node[color=gray] at (4,-4) {$\setcompl{\Supp}\subset [N]$};
  
  \node[mark size=2pt,color=myo,rotate=-90] at (3,-4.5) {\pgfuseplotmark{*}};
  \node[color=myo] at (3.7,-4.5) {$\Lambda^{L_0}$};
  
  \node[mark size=2pt,color=myo,rotate=-90] at (5.1,-4) {\pgfuseplotmark{*}};
  \node[mark size=2pt,color=gray,rotate=-90] at (5.35,-4) {\pgfuseplotmark{*}};
  \node at (5.8,-4) {$\Lambda$};
  
  \draw[thin] (0.5,-5) rectangle (6.2,-3.5);
  \end{tikzpicture}
  \caption{}
  \label{fig:trees:2}
  \end{subfigure}
  \caption{\revision{\textbf{Examples of signal-dependent trees.} Subfigure \subref{fig:trees:1} and  \subref{fig:trees:2} illustrate the binary tree construction of Iteration~\ref{def:proofs:mw:signaltree} for two different signals $\grtr \in \R^{16}$ with $3$ and $4$ jump discontinuities, respectively (note that we have $\Supp = \ssupp$ in both cases). The first signal is dyadic (i.e., the jumps are equidistant), resulting in a perfect tree; see Example~\ref{ex:dyadic_1}. At the bottom of each subfigure, we have visualized all subsets $\SuppS{i}{\ell} \subset [N]$, while for the sake of clarity, we only show the set $\Q{1}{3} \subset [n]$ as an example (see Definition~\ref{def:proofs:mw:signaltree:Q}). In fact, displaying every set $\Q{i}{\ell}$ would be redundant because they are closely related to their counterparts $\SuppS{i}{\ell}$; see Proposition~\ref{prop:proofs:mw:tree_properties:Q}\ref{prop:proofs:mw:tree_properties:Q_S}.}}
  \label{fig:trees}
\end{figure}
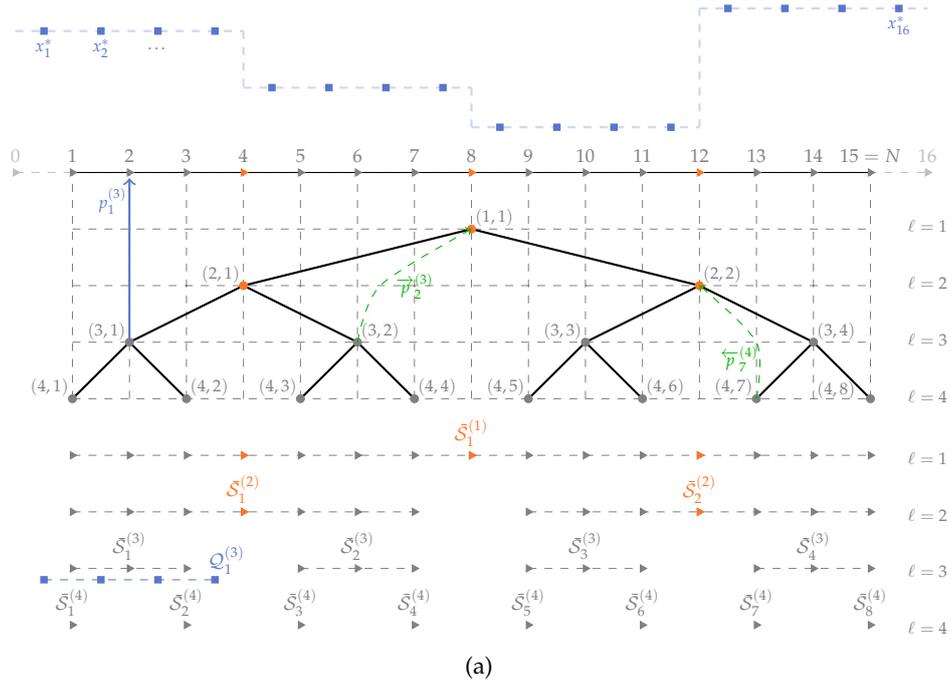
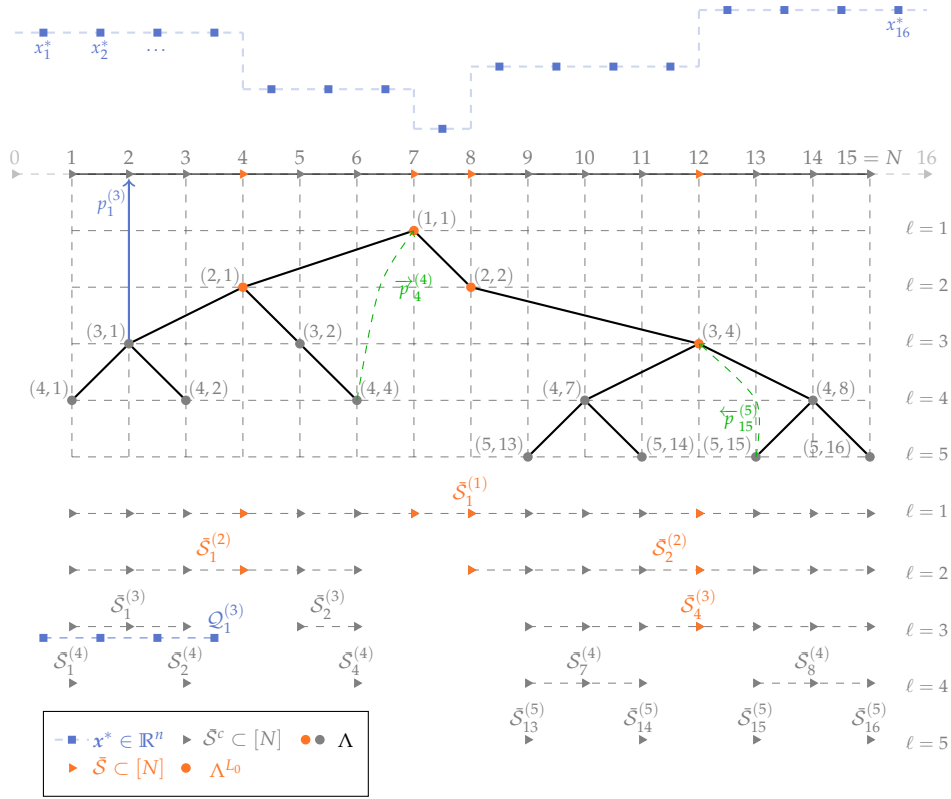

\begin{iteration}[Signal-dependent tree]\label{def:proofs:mw:signaltree}
	Let $\SuppS{1}{1}  \coloneqq  [N]$ denote the set of all faces of $\grtr$. We iteratively define for $\ell=1,2,\dots$ and $i \in [2^{\ell-1}]$ the following:
	If $\SuppS{i}{\ell} \neq \emptyset$, then select the \emph{pivot face} at vertex~$(\ell,i)$ as\footnote{By $\median(\cdot)$ we denote the \emph{lower median} of finite non-empty sets, i.e., if $M = \{v_1, \dots, v_d\}$ with $v_1 < \dots < v_d$, then $\median(M) = v_{\floor{(d+1)/2}}$.}
	\begin{equation} \label{eq:proofs:mw:signaltree:privot}
	\p{i}{\ell} \coloneqq \begin{cases}
	\median(\SuppS{i}{\ell} \intersec \Supp), & \text{ if } \SuppS{i}{\ell} \intersec \Supp \neq \emptyset, \\
	\median(\SuppS{i}{\ell}), & \text{ otherwise}.
	\end{cases}
	\end{equation}
	In order to invoke the next iteration step, we set
	\begin{align}
	\SuppS{j}{k}  \coloneqq  \big\{ \jump \in \SuppS{i}{\ell} \suchthat \jump < \p{i}{\ell} \big\} & \quad \text{ if } (k,j) = \lchild(\ell,i), \text{ and} \\*
	\SuppS{j}{k}  \coloneqq   \big\{ \jump \in \SuppS{i}{\ell} \suchthat \jump > \p{i}{\ell} \big\} & \quad \text{ if } (k,j) = \rchild(\ell,i).
	\end{align}
	The iteration ends at some level $L \in \N$, if $\SuppS{i}{L+1} = \emptyset$ for all $(L+1,i) \in \lambda_{L+1}$. We finally define $\Lambda  \coloneqq  \{ (\ell,i) \text{ is a vertex} \suchthat \SuppS{i}{\ell} \neq \emptyset \}$ and $\Lambda^{L_0}  \coloneqq  \{(\ell,i) \in \Lambda \suchthat \p{i}{\ell} \in \Supp \}$. In particular, $\Lambda$ and $\Lambda^{L_0}$ are binary trees with finite depths $L  \coloneqq  \depth(\Lambda)$ and $L_0 = \depth(\Lambda^{L_0})$, respectively.
\end{iteration}

We now show several important properties of the signal-dependent tree constructed in Iteration~\ref{def:proofs:mw:signaltree}, most importantly, that $\Lambda$ and $\Lambda^{L_0}$ can be indeed identified with the faces $[N]$ and the gradient support $\Supp$, respectively.
\begin{proposition} \label{prop:proofs:mw:tree_properties}
	Let $\Lambda$ and $\Lambda^{L_0}$ be the trees constructed in Iteration~\ref{def:proofs:mw:signaltree}. Then the following holds true:
	\begin{thmlist}
		\item\label{prop:proofs:mw:tree_properties:bijective}
		The map $p \colon \Lambda \to [N],\ (\ell,i) \mapsto \p{i}{\ell}$ is a bijection, i.e., every vertex of $\Lambda$ is mapped to exactly one of the $N$ faces of $\grtr$.
		Similarly, the restricted map $p\restrict_{\Lambda^{L_0}} \colon \Lambda^{L_0} \to \Supp$ is also a bijection.
		
		\item\label{prop:proofs:mw:tree_properties:supp_control}
		For every $(\ell,i) \in \Lambda$, we have that $\cardinality{\SuppS{i}{\ell} \intersec \Supp} \leq \floor[\big]{ \sext / 2^{\ell-1} }$.
		
		\item\label{prop:proofs:mw:tree_properties:supporttree}
		There exists $i \in [2^{L_0-1}]$ such that $\SuppS{i}{L_0} \intersec \Supp \neq \emptyset$, but $\SuppS{i}{L_0+1} \intersec \Supp = \emptyset$ for all $i \in [2^{L_0}]$.
		
		\item\label{prop:proofs:mw:tree_properties:offsupp_control}
		For every $i \in [2^{L_0}]$, we have that $\cardinality{\SuppS{i}{L_0+1}} \leq N-\sext$.
		
		\item\label{prop:proofs:mw:tree_properties:depth}
		Iteration~\ref{def:proofs:mw:signaltree} terminates after at most $\ceil{\log_2(\sext+1)} + \ceil{\log_2(n-\sext)}$ steps.
	\end{thmlist}
\end{proposition}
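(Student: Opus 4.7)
The plan is to establish the five claims in the order given, treating them as a cascade where each assertion feeds into the next. Two structural facts drive the entire argument: by construction of Iteration~\ref{def:proofs:mw:signaltree}, the children's sets $\SuppS{j}{k}$ together with the pivot $\p{i}{\ell}$ form a partition of the parent set $\SuppS{i}{\ell}$; and the lower-median rule~\eqref{eq:proofs:mw:signaltree:privot} ensures that whenever $\SuppS{i}{\ell} \intersec \Supp \neq \emptyset$, the pivot lies in $\Supp$ and a routine computation with the median shows that each of the two children inherits at most $\lfloor \lvert\SuppS{i}{\ell} \intersec \Supp\rvert / 2 \rfloor$ support elements (handling odd and even sizes separately).

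For~\ref{prop:proofs:mw:tree_properties:bijective}, I would prove by induction on $\ell$ that the collection of all pivots generated up to level $\ell$ together with the sets $\{\SuppS{i}{\ell+1}\}_i$ is a partition of $[N]$. Injectivity of $(\ell,i) \mapsto \p{i}{\ell}$ then follows at once, and surjectivity uses the eventual termination of the iteration (from~\ref{prop:proofs:mw:tree_properties:depth}). For the restricted map to $\Supp$, observe that any $\jump \in \Supp$ sitting in some $\SuppS{i}{\ell}$ either is the pivot there, or is routed to a child whose support-intersection still contains $\jump$; by the median rule, the latter child has its pivot in $\Supp$ as well, and since the support-intersection sizes strictly decrease along any branch, $\jump$ must eventually be chosen as a pivot, thereby yielding a vertex in $\Lambda^{L_0}$.

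Claim~\ref{prop:proofs:mw:tree_properties:supp_control} is then an induction on $\ell$ using the median halving identity sketched above, starting from $|\SuppS{1}{1} \intersec \Supp| = \sext$ and exploiting $\lfloor \lfloor k/2^{\ell-1} \rfloor / 2 \rfloor = \lfloor k/2^\ell \rfloor$. From this,~\ref{prop:proofs:mw:tree_properties:supporttree} follows from two complementary observations: the ``no intersection at level $L_0+1$'' part is immediate, since~\ref{prop:proofs:mw:tree_properties:supp_control} gives the bound $\lfloor \sext/2^{L_0} \rfloor = 0$ by the choice of $L_0 = \lceil \log_2(\sext+1) \rceil$; and the existence of some $i$ with $\SuppS{i}{L_0}\intersec \Supp \neq \emptyset$ is a cardinality argument, since $|\Lambda^{L_0}| = \sext$ by~\ref{prop:proofs:mw:tree_properties:bijective} while a binary tree of depth at most $L_0-1$ contains at most $2^{L_0-1}-1 < \sext$ vertices. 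Claim~\ref{prop:proofs:mw:tree_properties:offsupp_control} is then immediate, as~\ref{prop:proofs:mw:tree_properties:supporttree} forces $\SuppS{i}{L_0+1} \subset [N] \setminus \Supp$, which has cardinality $N - \sext$.

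Finally, for~\ref{prop:proofs:mw:tree_properties:depth}, I would iterate the median halving estimate on the plain sets (without the $\Supp$-intersection) beyond level $L_0+1$: starting from size at most $N - \sext$, after $k$ further levels the sizes are bounded by $\lfloor (N-\sext)/2^k \rfloor$, which vanishes once $k \geq \lceil \log_2(N-\sext+1) \rceil = \lceil \log_2(n-\sext) \rceil$. Added to $L_0 = \lceil \log_2(\sext+1) \rceil$, this yields the stated depth bound. The main obstacle I expect lies in~\ref{prop:proofs:mw:tree_properties:bijective}, where maintaining the partition invariant cleanly through the iteration is somewhat delicate (in particular, one must argue that the iteration actually covers $\Supp$ bijectively rather than just injectively); the remaining claims are essentially careful floor-function bookkeeping layered on top of this foundation.
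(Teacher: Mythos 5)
Your approach mirrors the paper's proof very closely, running through (a)--(e) in the same order and using the same structural facts (the partition-of-$[N]$ invariant, median halving, cardinality and depth bookkeeping). That said, there is one genuine logical flaw to fix, plus a minor stylistic point.

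The flaw is in (a): you appeal to the depth bound of (e) to justify that the iteration terminates, which in turn is what gives surjectivity of $p$ onto $[N]$. But your proof of (e) starts from the ``size at most $N-\sext$'' bound, which is (d), which relies on (c), which relies on (a). As written, this is a circular chain. The repair is easy and does not require (e) at all: since each step of Iteration~\ref{def:proofs:mw:signaltree} removes the pivot $\p{i}{\ell}$ from $\SuppS{i}{\ell}$ before forming the children, the cardinalities strictly decrease along every branch of the tree, so the iteration terminates after at most $N$ levels by a trivial well-ordering argument. Replace the citation of (e) with this observation (this is also what the paper's terse ``every element of $[N]$ is picked exactly once'' is implicitly using) and the order of dependencies becomes acyclic again.

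On the restricted map $p\restrict_{\Lambda^{L_0}}\colon \Lambda^{L_0}\to\Supp$: you trace a support element $\jump\in\Supp$ through the tree, showing it is eventually chosen as a pivot by a vertex whose pivot lies in $\Supp$. This is correct (and your intermediate claim that a child inheriting an element of $\Supp$ must itself have pivot in $\Supp$ is exactly the median rule), but it is unnecessarily long: once bijectivity of $p\colon\Lambda\to[N]$ is established, the definition $\Lambda^{L_0}=\{(\ell,i)\in\Lambda\suchthat \p{i}{\ell}\in\Supp\}=p^{\mo}(\Supp)$ makes the restricted map a bijection onto $\Supp$ immediately. The remaining parts (b)--(e) are correct, and in places more explicit than the paper's write-up---the floor identity $\floor{\floor{k/2^{\ell-1}}/2}=\floor{k/2^\ell}$ in (b), and the explicit termination count $k\geq\ceil{\log_2(N-\sext+1)}=\ceil{\log_2(n-\sext)}$ in (e), are both good additions.
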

\begin{proof}
	\begin{prooflist}
		\item
		By construction, every element of $[N]$ is picked exactly once as pivot element. As the tree keeps record of all pivot elements, the claim follows directly.
		
		\item
		First, we note that $\cardinality{\SuppS{1}{1} \intersec \Supp} = \cardinality{\Supp} = \sext$. Now, let $(\ell,i) \in \Lambda$ be such that $\SuppS{i}{\ell} \intersec \Supp \neq \emptyset$. Then $\p{i}{\ell}$ is the median of $\SuppS{i}{\ell} \intersec \Supp$, and by construction, we have that $\cardinality{\SuppS{j}{k} \intersec \Supp} \leq \cardinality{\SuppS{i}{\ell} \intersec \Supp} / 2$ for $(k,j) \in \child(\ell,i)$. The claim now follows by induction.
		
		\item
		Towards a contradiction, assume that $\SuppS{i}{L_0} \intersec \Supp = \emptyset$ for all $i \in [2^{L_0-1}]$. Then, $\depth(\Lambda^{L_0}) \leq L_0-1$, and by the definition of $L_0$, we obtain $\cardinality{\Lambda^{L_0}} \leq 2^{L_0-1} - 1 < 2^{\log_2(\sext+1)} - 1 = \sext$. This contradicts the fact that $p\restrict_{\Lambda^{L_0}} \colon \Lambda^{L_0} \to \Supp$ is a bijection with $\cardinality{\Lambda^{L_0}} = \cardinality{\Supp} = \sext$.
		
		To see the second claim, we note that $\cardinality{\SuppS{i}{L_0+1} \intersec \Supp} \leq \floor{ \sext / 2^{\ceil{ \log_2(\sext+1)}} } < 1$ due to \ref{prop:proofs:mw:tree_properties:supp_control}, implying that the set is empty for every $i \in [2^{L_0}]$.
		
		\item
		Since $\SuppS{i}{L_0+1} \intersec \Supp = \emptyset$ due to \ref{prop:proofs:mw:tree_properties:supporttree}, we have $\SuppS{i}{L_0+1} \disjunion \Supp \subset [N]$ and therefore $\cardinality{\SuppS{i}{L_0+1}} + \cardinality{\Supp} \leq N$, which yields the claim.
		
		\item
		From \ref{prop:proofs:mw:tree_properties:offsupp_control}, we have an upper bound for $\cardinality{\SuppS{i}{L_0+1}}$. The claim now follows from the fact that we partition these sets along their median; cf.~\ref{prop:proofs:mw:tree_properties:supp_control}. \qedhere
	\end{prooflist}
\end{proof}

To simplify the presentation of the subsequent proof steps, we introduce further notation that is related to the above tree construction.
The following definition makes precise what is meant by neighboring faces, namely the closest previously selected pivot faces.
Based on that, we define specific subsets of nodes, denoted by $\Q{i}{\ell} \subset [n]$. As we will see later in Proposition~\ref{prop:proofs:mw:tree_properties:Q}\ref{prop:proofs:mw:tree_properties:Q_S}, each~$\Q{i}{\ell}$ can be seen as the nodal counterpart of $\SuppS{i}{\ell} \subset [N]$; see again Figure~\ref{fig:trees} for an illustration.

\begin{definition}\label{def:proofs:mw:signaltree:Q}
	Using the notation of Iteration~\ref{def:proofs:mw:signaltree}, we define the (\emph{left} and \emph{right}) \emph{neighbor faces} at vertex $(\ell,i) \in \Lambda$ by
	\begin{align} 
	\pl{i}{\ell} & \coloneqq  \max \Big(\big\{ \p{j}{k} \suchthat (k,j) \in \ceil{(\ell,i)}, \ \p{j}{k} < \p{i}{\ell} \big\} \union \{0\}\Big), \text{ and} \\*
	\pr{i}{\ell} & \coloneqq  \min \Big(\big\{ \p{j}{k} \suchthat (k,j) \in \ceil{(\ell,i)}, \ \p{j}{k} > \p{i}{\ell} \big\} \union \{n\}\Big). \label{eq:proofs:mw:signaltree:neighbors}
	\end{align}
	We denote by $\Q{i}{\ell} \subset [n]$ the nodes that are enclosed by $\pl{i}{\ell}$ and $\pr{i}{\ell}$, i.e.,
	\begin{equation}
		\Q{i}{\ell}  \coloneqq  \big\{ \pl{i}{\ell}+1,\dots,\pr{i}{\ell} \big\} \neq \emptyset.
	\end{equation}
	Moreover, by splitting $\Q{i}{\ell}$ at the pivot face $\p{i}{\ell}$, we define the subsets
	\begin{equation}
	\Ql{i}{\ell}  \coloneqq  \big\{ \pl{i}{\ell}+1,\dots,\p{i}{\ell} \big\} \neq \emptyset
	\end{equation}
	and
	\begin{equation}
	\Qr{i}{\ell}  \coloneqq  \big\{ \p{i}{\ell} +1 , \dots, \pr{i}{\ell}\big\} \neq \emptyset,
	\end{equation}
	so that $\Q{i}{\ell}  =  \Ql{i}{\ell} \disjunion \Qr{i}{\ell}$ for $(\ell,i) \in \Lambda$. 
\end{definition}

\revision{
\begin{example}[Dyadic signals II]
 In the special case of a dyadic signal, the quantities of Definition~\ref{def:proofs:mw:signaltree:Q} can again be explicitly determined. The left and right neighbor faces are given by $\pl{i}{\ell} = (i-1) \cdot 2^{L-\ell+1}$ and $\pr{i}{\ell} = i \cdot 2^{L-\ell+1}$. Furthermore, we have that $\cardinality{\Ql{i}{\ell}} = \cardinality{\Qr{i}{\ell}} = 2^{L-\ell}$. 
\end{example}
}

The following proposition relates $\Q{i}{\ell} \subset [n]$ to $\SuppS{i}{\ell} \subset [N]$ and states several cardinality relationships. In particular, it reveals that the children sets $\Ql{i}{\ell}$ and $\Qr{i}{\ell}$ are consistent with the recursive tree construction of Iteration~\ref{def:proofs:mw:signaltree}.
\begin{proposition} \label{prop:proofs:mw:tree_properties:Q}
	Let $\Lambda$ be the tree constructed in Iteration~\ref{def:proofs:mw:signaltree}. Then the following holds true:
	\begin{thmlist}
		
		\item\label{prop:proofs:mw:tree_properties:Q_S}
		We have that $\SuppS{i}{\ell} = \{\pl{i}{\ell}+1, \dots, \pr{i}{\ell}-1\} = \Q{i}{\ell} \setminus \{\pr{i}{\ell}\}$. 
		In particular, the following cardinality relationships hold true:
		\begin{align}
		\cardinality{\SuppS{i}{\ell}} + 1 &= \cardinality{\Q{i}{\ell}} = \cardinality{\Ql{i}{\ell}} + \cardinality{\Qr{i}{\ell}}, \\*
		\cardinality{\SuppS{j}{k}} + 1 &= \cardinality{\Ql{i}{\ell}} \quad  \text{if $(k,j) = \lchild(\ell,i)$,} \\*
		\cardinality{\SuppS{j}{k}} + 1 &= \cardinality{\Qr{i}{\ell}} \quad  \text{if $(k,j) = \rchild(\ell,i)$.}
		\end{align}
		Moreover, if $(k,j) = \lchild(\ell,i)$ and $\SuppS{j}{k} \intersec \Supp = \emptyset$, we have
		\begin{equation}\label{eq:proofs:mw:tree_properties:Q_S:card_iter}
		\frac{\cardinality{\Ql{i}{\ell}} - 1}{2} \leq \cardinality{\Ql{j}{k}}, \cardinality{\Qr{j}{k}} \leq \frac{\cardinality{\Ql{i}{\ell}} + 1}{2}.
		\end{equation}
		An analogous statement holds true if $(k,j) = \rchild(\ell,i)$.
		
		\item\label{prop:proofs:mw:tree_properties:recursive_inclusion}
		Let $(\ell,i), (k,j) \in \Lambda$ be such that $(\ell,i) \neq (k,j)$ and $k \leq \ell$. Then exactly one of the following statements holds true: either (i)  $\Q{i}{\ell} \intersec \Q{j}{k} = \emptyset$, or (ii) $\Q{i}{\ell} \subset \Ql{j}{k}$, or (iii) $\Q{i}{\ell} \subset \Qr{j}{k}$.
	\end{thmlist}
\end{proposition}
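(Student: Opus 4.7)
The plan is to derive both parts from a careful bookkeeping of how the neighbor faces $\pl{i}{\ell}, \pr{i}{\ell}$ and the subsets $\SuppS{i}{\ell}$ evolve as one descends the tree constructed in Iteration~\ref{def:proofs:mw:signaltree}. The key identity underlying everything is
\begin{equation}
\SuppS{i}{\ell} = \{\pl{i}{\ell}+1, \dots, \pr{i}{\ell}-1\},
\end{equation}
which I would prove by induction on the level~$\ell$. The base case $\ell=1$ is immediate, since $\SuppS{1}{1}=[N]$ while $\pl{1}{1}=0$ and $\pr{1}{1}=n$ by Definition~\ref{def:proofs:mw:signaltree:Q}. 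For the inductive step, suppose the identity holds at $(\ell,i) \in \Lambda$ and consider $(k,j) = \lchild(\ell,i)$. By Iteration~\ref{def:proofs:mw:signaltree}, $\SuppS{j}{k} = \{\jump \in \SuppS{i}{\ell} \suchthat \jump < \p{i}{\ell}\}$, which equals $\{\pl{i}{\ell}+1, \dots, \p{i}{\ell}-1\}$ by the inductive hypothesis. Since $(\ell,i)$ has just been added as an ancestor and its pivot lies to the right of all subsequent selections on the left branch, Definition~\ref{def:proofs:mw:signaltree:Q} forces $\pl{j}{k} = \pl{i}{\ell}$ and $\pr{j}{k} = \p{i}{\ell}$, so the identity propagates. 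The right-child case is fully symmetric.

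Given this identity, the cardinality relations in part~\ref{prop:proofs:mw:tree_properties:Q_S} follow by direct counting from the definitions of $\Q{i}{\ell}$, $\Ql{i}{\ell}$, and $\Qr{i}{\ell}$. For the balance bound~\eqref{eq:proofs:mw:tree_properties:Q_S:card_iter}, I use the hypothesis $\SuppS{j}{k} \intersec \Supp = \emptyset$: Iteration~\ref{def:proofs:mw:signaltree} then selects $\p{j}{k}$ as the lower median of the whole set $\SuppS{j}{k}$, which splits it into two halves whose cardinalities differ by at most one. Combined with the first cardinality relation (so that $\cardinality{\SuppS{j}{k}} + 1 = \cardinality{\Ql{i}{\ell}}$), this yields the claimed two-sided bound on $\cardinality{\Ql{j}{k}}$ and $\cardinality{\Qr{j}{k}}$.

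For part~\ref{prop:proofs:mw:tree_properties:recursive_inclusion}, I would distinguish two cases using the tree structure of $\Lambda$. Case~1: $(k,j)$ is a proper ancestor of $(\ell,i)$. The unique descending path from $(k,j)$ to $(\ell,i)$ starts with a step to either $\lchild(k,j)$ or $\rchild(k,j)$; assume the former. Iterating the first identity down this path, every subsequent $\SuppS{\cdot}{\cdot}$ lies inside $\{\pl{j}{k}+1, \dots, \p{j}{k}-1\}$, which forces $\pl{i}{\ell} \geq \pl{j}{k}$ and $\pr{i}{\ell} \leq \p{j}{k}$, whence $\Q{i}{\ell} \subset \{\pl{j}{k}+1, \dots, \p{j}{k}\} = \Ql{j}{k}$. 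Descending instead to $\rchild(k,j)$ yields $\Q{i}{\ell} \subset \Qr{j}{k}$ by symmetry. Case~2: $(k,j)$ and $(\ell,i)$ are incomparable. Then, using $k \leq \ell$ and $(\ell,i)\neq(k,j)$, they share a most recent common ancestor $(k',j') \in \Lambda$ for which $(k,j)$ and $(\ell,i)$ lie in different subtrees. Applying Case~1 to each shows that $\Q{i}{\ell}$ and $\Q{j}{k}$ are contained in the disjoint sets $\Ql{j'}{k'}$ and $\Qr{j'}{k'}$ (in some order), hence they are themselves disjoint. Mutual exclusivity of (i)--(iii) is then automatic, since $\Q{i}{\ell} \neq \emptyset$ and $\Ql{j}{k} \intersec \Qr{j}{k} = \emptyset$.

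The main obstacle I anticipate is the careful bookkeeping in the inductive step for the key identity, namely verifying that exactly one of $\pl{\cdot}{\cdot}$ and $\pr{\cdot}{\cdot}$ is updated when descending to a child and that the updated value is precisely the parent's pivot, while the other neighbor face is inherited unchanged. Once this is made rigorous via Definition~\ref{def:proofs:mw:signaltree:Q}, all remaining assertions reduce to elementary set-theoretic counting together with the median-selection rule used in Iteration~\ref{def:proofs:mw:signaltree}.
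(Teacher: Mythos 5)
Your proposal is correct and takes essentially the same approach as the paper: both prove the key interval identity $\SuppS{i}{\ell} = \{\pl{i}{\ell}+1,\dots,\pr{i}{\ell}-1\}$ by induction on the level, derive the cardinality and balance statements from this representation together with the lower-median rule, and establish part~(b) via a case split on whether $(k,j)$ is an ancestor of $(\ell,i)$ or the two are incomparable (the paper additionally normalizes to $\p{i}{\ell}<\p{j}{k}$, but the underlying argument using the last common ancestor is the same). The bookkeeping point you flag yourself---that when descending to $\lchild(\ell,i)$ exactly the right-neighbor face updates to $\p{i}{\ell}$ while $\pl{j}{k}=\pl{i}{\ell}$ is inherited---is indeed the crux of the induction and is exactly what the paper glosses over with ``easily shown by induction.''
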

\begin{proof}
\begin{prooflist}
	\item
	The relation $\SuppS{i}{\ell} = \{\pl{i}{\ell}+1, \dots, \pr{i}{\ell}-1\}$ is easily shown by an induction, making use of the fact that $\SuppS{i}{\ell}$ does not contain any other pivot element than $\p{i}{\ell}$ which was selected in the $\ell$-th iteration. The cardinality relationships are consequences of this representation of $\SuppS{i}{\ell}$.
	The estimate of \eqref{eq:proofs:mw:tree_properties:Q_S:card_iter} follows from
	\begin{equation}
	\cardinality{\Ql{j}{k}} + \cardinality{\Qr{j}{k}} = \cardinality{\Q{j}{k}} = \cardinality{\SuppS{j}{k}} + 1 = \cardinality{\Ql{i}{\ell}}
	\end{equation}
	and that $\abs[\big]{\cardinality{\Ql{j}{k}} - \cardinality{\Qr{j}{k}}} \leq 1$, which is in turn due to
	\begin{equation}
	\p{j}{k} = \median\big(\SuppS{j}{k}\big) = \median\big(\big\{\pl{j}{k}+1, \dots, \pr{j}{k}-1\big\}\big).
	\end{equation}
	
	\item
	First, we assume that $\p{i}{\ell} < \p{j}{k}$. Let us now consider the case that $(k,j) \not\in \ceil{(\ell,i)}$: Then, the set of \emph{common ancestors} of the vertices $(\ell,i)$ and $(k,j)$ is given by $\mathfrak{C}  \coloneqq  \ceil{(\ell,i)} \intersec \ceil{(k,j)}$ and let $(\ell', i')  \coloneqq  \argmax_{(\tilde\ell,\tilde{i}) \in \mathfrak{C}} \tilde\ell$ be the \emph{last common ancestor} of these vertices. Using this notation, we observe that
	\begin{equation}
	\pr{i}{\ell} \leq \p{i'}{\ell'} \leq \pl{j}{k} < \pl{j}{k}+1,
	\end{equation}
	which implies the claim of (i). If $(k,j) \in \ceil{(\ell,i)}$, it follows from \eqref{eq:proofs:mw:signaltree:neighbors} that $\pl{i}{\ell} \geq \pl{j}{k}$ and $\pr{i}{\ell} \leq \p{j}{k}$. In particular, we have $\Q{i}{\ell} \subset \Ql{j}{k}$ by definition, which implies the claim of (ii).
	
	If $\p{i}{\ell} > \p{j}{k}$, the claims of (i) and (iii) follow by a similar line of argument, respectively. The claims of (i)--(iii) are obviously mutually exclusive. \qedhere
\end{prooflist}
\end{proof}

\subsubsection{Step 2(b): Non-Dyadic Haar Matrix}
\label{eq:proof:step2b}

Based on the tree construction of the previous substep, we now design an orthogonal matrix $\H \in \R^{n \times n}$ that has a favorable structure with regard to the box-constrained optimization problem in \eqref{eq:proofs:mw:boxconstr}. For this purpose, we introduce the constants
\begin{equation}\label{eq:proofs:mw:d-parameters}
\d{i}{\ell}  \coloneqq  \sqrt{\frac{\nl{i}{\ell} + \nr{i}{\ell}}{\nl{i}{\ell} \cdot \nr{i}{\ell}}}, \qquad
\dl{i}{\ell}  \coloneqq  \frac{\nr{i}{\ell}}{\nl{i}{\ell} + \nr{i}{\ell}}, \qquad
\dr{i}{\ell}  \coloneqq  \frac{\nl{i}{\ell}}{\nl{i}{\ell} + \nr{i}{\ell}},
\end{equation}
for $(\ell,i) \in \Lambda$, where $\nl{i}{\ell}  \coloneqq  \cardinality{\Ql{i}{\ell}} \geq 1$ and $\nr{i}{\ell}  \coloneqq  \cardinality{ \Qr{i}{\ell} } \geq 1$. In particular, note that $\dl{i}{\ell}, \dr{i}{\ell} \in \intvop{0}{1}$ and $\dl{i}{\ell} + \dr{i}{\ell} = 1$.

Moreover, we set $p((0,0)) \coloneqq \p{0}{0}  \coloneqq  n$. Then, $p \colon \Lambda \union \{(0,0)\} \to [n]$ is a bijection, and we may use $\Lambda \union \{(0,0)\}$ as new indexing set for the rows of $\H$, i.e., the $\p{i}{\ell}$-th row of $\H$ is denoted by $\h{i}{\ell}$ for all $(\ell,i) \in \Lambda \union \{(0,0)\}$.

The following construction of a \emph{non-dyadic Haar matrix} is based on the work of \citeauthor{gupta_non-dyadic_2010} \ \cite{gupta_non-dyadic_2010}. Of particular relevance to our problem setup is the fact that $\H \TV^\T$ has only a few non-zero entries that are explicitly known. More specifically, we will exploit the tree-like structure of $\H \TV^\T$ later in Step~\hyperref[eq:proof:step3a]{3}, in order to decouple the dependencies in \eqref{eq:proofs:mw:boxconstr} and thereby to iteratively construct an appropriate ansatz.

\revision{
\begin{example}[Dyadic signals III]
\label{ex:dyadic_3}
 The current proof step simplifies considerably in the case of a dyadic signal. Indeed, the previously introduced constants are then given by $\d{i}{\ell} = \sqrt{2^{\ell-L+1}}$ and $\dl{i}{\ell} = \dr{i}{\ell} = \tfrac{1}{2}$. In particular, the non-dyadic Haar matrix in Proposition~\ref{prop:proofs:mw:haarofbinary} below becomes the standard Haar matrix. We refer to Figure~\ref{fig:roadmap} for a visualization of the tree structure induced by $\H \TV^\T$ in the dyadic case. 
\end{example}
}

\begin{proposition} \label{prop:proofs:mw:haarofbinary}
	Let $\Lambda$ be the signal-dependent tree constructed according to Iteration~\ref{def:proofs:mw:signaltree}. Then, there exists an orthogonal matrix $\H \in \R^{n \times n}$ that satisfies the following: 
	
	\begin{thmlist}
		\item \label{prop:proofs:mw:haarofbinary:zero} We have that $\h{0}{0} \TV^\T = \vnull^\T$.
		
		\item \label{prop:proofs:mw:haarofbinary:tree} For every $(\ell, i) \in \Lambda$, we have that
		\begin{equation} \label{eq:proofs:mw:haarofbinary:htv_p}
		[\h{i}{\ell} \TV^\T]_{\p{i}{\ell}} = - \d{i}{\ell}.
		\end{equation}
		Furthermore, if $\pl{i}{\ell} > 0$, then 
		\begin{equation} \label{eq:proofs:mw:haarofbinary:htv_pl}
		[\h{i}{\ell} \TV^\T]_{\pl{i}{\ell}} = \d{i}{\ell} \cdot \dl{i}{\ell},
		\end{equation}
		and if $\pr{i}{\ell} < n$, then
		\begin{equation} \label{eq:proofs:mw:haarofbinary:htv_pr}
		[\h{i}{\ell} \TV^\T]_{\pr{i}{\ell}} = \d{i}{\ell} \cdot \dr{i}{\ell}.
		\end{equation}
		Finally, if $\jump \not \in \big\{ \p{i}{\ell}, \pl{i}{\ell}, \pr{i}{\ell} \big\}$, we have that
		\begin{equation}
		[\h{i}{\ell} \TV^\T]_\jump = 0.
		\end{equation}
	\end{thmlist}
\end{proposition}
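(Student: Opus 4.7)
My plan is to exhibit an explicit formula for each row of $\H$ (a signal-adapted Haar wavelet associated with the tree $\Lambda$), verify that it gives rise to the stated action under right multiplication by $\TV^\T$, and then check orthonormality using the tree properties of Proposition~\ref{prop:proofs:mw:tree_properties:Q}.

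\textbf{Step 1 (Definition of the rows).} I index the rows of $\H$ by $\Lambda \cup \{(0,0)\}$ via the bijection $p$; recall that $|\Lambda| = N = n-1$ by Proposition~\ref{prop:proofs:mw:tree_properties:bijective}, so this does produce exactly $n$ rows. For the ``scaling'' row I set $\h{0}{0} \coloneqq \tfrac{1}{\sqrt{n}} \1_n^\T$. For each $(\ell,i) \in \Lambda$, I define a Haar-like wavelet supported on $\Q{i}{\ell}$ that is constant on $\Ql{i}{\ell}$ and on $\Qr{i}{\ell}$:
\begin{equation}
[\h{i}{\ell}]_j \coloneqq \begin{cases} \phantom{-}\d{i}{\ell}\cdot \dl{i}{\ell}, & j \in \Ql{i}{\ell}, \\ -\d{i}{\ell}\cdot \dr{i}{\ell}, & j \in \Qr{i}{\ell}, \\ \phantom{-}0, & \text{otherwise.} \end{cases}
\end{equation}
The ratio of the two non-zero values is engineered so that $[\h{i}{\ell}]_j$ has zero sum (mass balance between left and right children), and the common factor $\d{i}{\ell}$ is precisely the normalization needed to make $\lnorm{\h{i}{\ell}} = 1$: indeed, $(\d{i}{\ell}\dl{i}{\ell})^2 \nl{i}{\ell} + (\d{i}{\ell}\dr{i}{\ell})^2 \nr{i}{\ell} = (\d{i}{\ell})^2 \nl{i}{\ell}\nr{i}{\ell} / (\nl{i}{\ell}+\nr{i}{\ell}) = 1$ by \eqref{eq:proofs:mw:d-parameters}.

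\textbf{Step 2 (Action of $\TV^\T$).} Claim (a) is trivial: $\h{0}{0}\TV^\T$ is a constant vector of differences of constants. For claim (b), I use that $[\h{i}{\ell}\TV^\T]_j = [\h{i}{\ell}]_{j+1} - [\h{i}{\ell}]_j$ for $j \in [N]$. Since $\h{i}{\ell}$ takes only three distinct values on $\{1,\dots,n\}$ (zero outside $\Q{i}{\ell}$, $\d{i}{\ell}\dl{i}{\ell}$ on $\Ql{i}{\ell}$, $-\d{i}{\ell}\dr{i}{\ell}$ on $\Qr{i}{\ell}$), this difference vanishes except at the three boundaries. At $j = \p{i}{\ell}$ we cross from $\Ql{i}{\ell}$ to $\Qr{i}{\ell}$ and obtain $-\d{i}{\ell}\dr{i}{\ell} - \d{i}{\ell}\dl{i}{\ell} = -\d{i}{\ell}$, giving \eqref{eq:proofs:mw:haarofbinary:htv_p}; at $j = \pl{i}{\ell}$ (provided $\pl{i}{\ell} > 0$, i.e., the face is inside $[N]$) we cross from $0$ to $\d{i}{\ell}\dl{i}{\ell}$, yielding \eqref{eq:proofs:mw:haarofbinary:htv_pl}, and analogously at $\pr{i}{\ell}$ for \eqref{eq:proofs:mw:haarofbinary:htv_pr}. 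All remaining indices lie strictly inside one of the three regions, so the corresponding entry is $0$.

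\textbf{Step 3 (Orthonormality of $\H$).} This is the only step that requires a genuine argument, and it is where the tree structure enters. The scaling row $\h{0}{0}$ is orthogonal to every $\h{i}{\ell}$, $(\ell,i) \in \Lambda$, because the two non-zero values in $\h{i}{\ell}$ were chosen so that the entries sum to zero. For two distinct vertices $(\ell,i), (k,j) \in \Lambda$ with (WLOG) $k \leq \ell$, I invoke Proposition~\ref{prop:proofs:mw:tree_properties:recursive_inclusion}: either the supports $\Q{i}{\ell}$ and $\Q{j}{k}$ are disjoint, in which case $\sp{\h{i}{\ell}}{\h{j}{k}} = 0$ trivially, or $\Q{i}{\ell}$ is entirely contained in $\Ql{j}{k}$ or $\Qr{j}{k}$. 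In the latter case, $\h{j}{k}$ is \emph{constant} on the support of $\h{i}{\ell}$, so the inner product factors as that constant times $\sum_s [\h{i}{\ell}]_s = 0$. Combined with the unit-norm calculation from Step~1, this establishes that the $n$ rows form an orthonormal system, hence $\H$ is orthogonal.

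The main obstacle is really just bookkeeping around the indices $\pl{i}{\ell}$ and $\pr{i}{\ell}$ at the boundaries $0$ and $n$ (where no jump in $\h{i}{\ell}$ occurs and the corresponding formula in (b) is vacuous). Everything else is a direct consequence of the careful choice of constants in \eqref{eq:proofs:mw:d-parameters} together with the nested/disjoint structure of the sets $\Q{i}{\ell}$ guaranteed by Proposition~\ref{prop:proofs:mw:tree_properties:Q}\ref{prop:proofs:mw:tree_properties:recursive_inclusion}.
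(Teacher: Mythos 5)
Your proposal is correct and matches the paper's proof almost verbatim: you define the rows of $\H$ via the same non-dyadic Haar formula $[\h{i}{\ell}]_\jump = \d{i}{\ell}(\dl{i}{\ell}\indset{\Ql{i}{\ell}}(\jump) - \dr{i}{\ell}\indset{\Qr{i}{\ell}}(\jump))$, compute $\h{i}{\ell}\TV^\T$ as the forward difference of the row to locate its support at $\{\p{i}{\ell},\pl{i}{\ell},\pr{i}{\ell}\}$, and establish orthonormality via the zero-sum property combined with the disjoint/nested support dichotomy of Proposition~\ref{prop:proofs:mw:tree_properties:Q}\ref{prop:proofs:mw:tree_properties:recursive_inclusion}.
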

\begin{proof}
	We define the row of $\H$ according to the index set above. More precisely, let
	\begin{equation}
	\h{0}{0} \coloneqq \frac{1}{\sqrt{n}}\1_n ,
	\end{equation}
	and for $(\ell,i) \in \Lambda$, let
	\begin{equation} \label{eq:proofs:mw:haarofbinary:defH}
	[\h{i}{\ell}]_\jump \coloneqq \d{i}{\ell} \cdot \begin{cases}
	+ \dl{i}{\ell}, & \text{if } \jump \in \Ql{i}{\ell},\\
	- \dr{i}{\ell}, & \text{if } \jump \in \Qr{i}{\ell}, \\
	0, & \text{otherwise}.
	\end{cases}
	\end{equation}
	
	The claim of \ref{prop:proofs:mw:haarofbinary:zero} is obvious, since $\h{0}{0}$ is a constant vector. To prove the claims of \ref{prop:proofs:mw:haarofbinary:tree}, we note that $\h{i}{\ell} \TV^\T = (\TV (\h{i}{\ell})^\T)^\T$, i.e., the finite difference operator applied to the rows of $\H$. Recalling the definition of $\Ql{i}{\ell}$ and $\Qr{i}{\ell}$, it is not hard to see that the construction of \eqref{eq:proofs:mw:haarofbinary:defH} implies $\supp(\h{i}{\ell} \TV^\T) \subset \{\p{i}{\ell}, \pl{i}{\ell}, \pr{i}{\ell}\}$. The jump discontinuity at $\p{i}{\ell}$ amounts to
	\begin{equation}
	[\h{i}{\ell} \TV^\T]_{\p{i}{\ell}} = \d{i}{\ell} \cdot (- \dr{i}{\ell} - \dl{i}{\ell}) = -\d{i}{\ell},
	\end{equation}
	which shows \eqref{eq:proofs:mw:haarofbinary:htv_p}. The proof of \eqref{eq:proofs:mw:haarofbinary:htv_pl} and \eqref{eq:proofs:mw:haarofbinary:htv_pr} works similarly.
	
	It remains to show that $\H$ is orthogonal. To this end, let $(\ell,i), (k,j) \in \Lambda$. If $(\ell,i) = (k,j)$, we have
	\begin{align}
		\sp{\h{i}{\ell}}{\h{j}{k}}
		&= \big(\d{i}{\ell}\big)^2 \cdot \Big[\nl{i}{\ell} \big(\dl{i}{\ell}\big)^2 + \nr{i}{\ell} \big(\dr{i}{\ell}\big)^2\Big] \\*
		&= \frac{\nl{i}{\ell} + \nr{i}{\ell}}{\nl{i}{\ell} \cdot \nr{i}{\ell}} \cdot
		\frac{\nl{i}{\ell} \big(\nr{i}{\ell}\big)^2 + \nr{i}{\ell} \big(\nl{i}{\ell}\big)^2}{\big(\nl{i}{\ell} + \nr{i}{\ell}\big)^2} = 1.
	\end{align}
	Finally, let $(\ell,i) \neq (k,j)$ and assume without loss of generality that $k \leq \ell$. Then Proposition~\ref{prop:proofs:mw:tree_properties:Q}\ref{prop:proofs:mw:tree_properties:recursive_inclusion} yields that either $\supp(\h{i}{\ell}) \intersec \supp(\h{j}{k}) = \emptyset$, implying $\sp{ \h{i}{\ell}}{ \h{j}{k} } = 0$, or that $\supp(\h{i}{\ell})$ is entirely contained in $\Ql{j}{k}$ or $\Qr{j}{k}$. Since $\h{j}{k}$ takes a constant value when restricted to each of these sets, we have that
	\begin{equation}
		\sp{ \h{i}{\ell}}{ \h{j}{k} } = c \cdot \d{i}{\ell} \cdot \Big(\nl{i}{\ell} \dl{i}{\ell} - \nr{i}{\ell} \dr{i}{\ell} \Big) = 0,
	\end{equation}
	where $c \in \R$ denotes the constant value that $\h{j}{k}$ takes on $\Ql{j}{k}$ or $\Qr{j}{k}$, respectively. 
	In a very similar way, one can show that $\lnorm{\h{0}{0}} = 1$ and $\sp{\h{i}{\ell}}{\h{0}{0}} = 0$ for all $(\ell,i) \in \Lambda$.
\end{proof}

\subsubsection{Step 3(a): Construction of a Dual Vector}
\label{eq:proof:step3a}

In this substep, let $\gaussian \in \R^n$ and $\tau > 0$ be fixed. Let us also recall the least-squares problem \eqref{eq:proofs:mw:boxconstr} with the box-constraint set $\Fext$ defined in \eqref{eq:proofs:mw:feasibleset}. Using the construction of $\H$ from Proposition~\ref{prop:proofs:mw:haarofbinary} and the identification between $\Lambda$ and $[N]$, we can now easily rephrase \eqref{eq:proofs:mw:boxconstr} as follows:
\begin{equation}\label{eq:proofs:mw:boxconstr_simpl}
\inf_{\dv \in \Fext} \lnorm{\gaussian - \tau \H \TV^\T \dv}^2 = g_{\p{0}{0}}^2 + \inf_{\dv \in \Fext} \sum_{(\ell,i) \in \Lambda} \Big[g_{\p{i}{\ell}} - \tau \d{i}{\ell} \big(\dl{i}{\ell} {w}_{\pl{i}{\ell}} + \dr{i}{\ell} {w}_{\pr{i}{\ell}} -  {w}_{\p{i}{\ell}}\big)\Big]^2,
\end{equation}
where we have set ${w}_0 \coloneqq {w}_n \coloneqq 0$.
Based on the formulation on the right-hand side, we now construct a feasible dual vector $\bar\dv \in \Fext$ (depending on $\gaussian$ and $\tau$) by iterating over the pivot elements level by level $(\Lambda \intersec \lambda_{1}, \Lambda \intersec \lambda_{2},  \Lambda \intersec \lambda_{3}, \dots)$. 
In this context, we will consider the subsets
\begin{align}
\Lambda^\circ
& \coloneqq  \big\{(\ell, i) \in \Lambda \setminus \Lambda^{L_0} \suchthat p^\mo \big( \pl{i}{\ell} \big) \not\in \Lambda^{L_0} \text{ and } p^\mo \big( \pr{i}{\ell} \big)  \not\in \Lambda^{L_0} \big\}, \\*
\Lambda^\natural
& \coloneqq  \big\{(\ell, i) \in \Lambda \setminus \Lambda^{L_0} \suchthat p^\mo \big( \pl{i}{\ell} \big) \in \Lambda^{L_0} \text{ or } p^\mo \big( \pr{i}{\ell} \big) \in \Lambda^{L_0} \big\},
\end{align}
so that $\Lambda = \Lambda^{L_0} \disjunion \Lambda^\circ \disjunion \Lambda^\natural$ (the bijection $p \colon \Lambda \to [N]$ was introduced in Proposition~\ref{prop:proofs:mw:tree_properties}\ref{prop:proofs:mw:tree_properties:bijective}).
\revision{A schematic visualization of this decomposition is shown Figure~\ref{fig:tree}.}
The cardinalities of these index sets can be controlled as follows:\enlargethispage{-\baselineskip}
\begin{lemma} \label{lem:proofs:mw:cardinality}
	The following cardinality bounds hold true:
	\begin{thmproperties}
	\item
		$\cardinality{\Lambda^{L_0}} = \sext$,
	\item
		$\cardinality{\Lambda^\circ} \leq n - \sext$,
	\item
		$\cardinality{\Lambda^\natural} \leq 4\sext \log_2(n-\sext)$.
	\end{thmproperties}
\end{lemma}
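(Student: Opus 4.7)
The plan is to derive each bound from the structural properties of the tree $\Lambda$ together with the bijection $p\colon \Lambda \to [N]$ established in Proposition~\ref{prop:proofs:mw:tree_properties}\ref{prop:proofs:mw:tree_properties:bijective}. Claim (i) is immediate: the restriction $p\restrict_{\Lambda^{L_0}} \colon \Lambda^{L_0} \to \Supp$ is a bijection, so $\cardinality{\Lambda^{L_0}} = \cardinality{\Supp} = \sext$. For (ii), I would combine the identity $\cardinality{\Lambda} = N = n-1$ with the disjoint decomposition $\Lambda = \Lambda^{L_0} \disjunion \Lambda^\circ \disjunion \Lambda^\natural$ to conclude
\begin{equation}
\cardinality{\Lambda^\circ} \leq \cardinality{\Lambda} - \cardinality{\Lambda^{L_0}} = n - 1 - \sext \leq n - \sext.
\end{equation}

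The main work goes into claim (iii). My strategy is to assign each vertex $(\ell,i) \in \Lambda^\natural$ to an \emph{ancestor witness} $(k,j) \in \Lambda^{L_0}$ with the property that $\p{j}{k}$ equals one of the neighbor faces $\pl{i}{\ell}$ or $\pr{i}{\ell}$; such a $(k,j)$ exists by the very definition of $\Lambda^\natural$, and it necessarily lies in $\ceil{(\ell,i)}$ according to Definition~\ref{def:proofs:mw:signaltree:Q}. For a fixed witness $(k,j)$, I would then argue that the collection of vertices $(\ell,i) \in \Lambda$ satisfying $\pl{i}{\ell} = \p{j}{k}$ forms a \emph{left spine} starting at $\rchild(k,j)$ and continuing exclusively through left children. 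The reason is that $(k,j)$ must be the \emph{deepest} ancestor of $(\ell,i)$ whose pivot lies strictly below $\p{i}{\ell}$; any right move on the path from $\rchild(k,j)$ down to $(\ell,i)$ would, by the BST-like structure of Iteration~\ref{def:proofs:mw:signaltree}, introduce an intermediate ancestor with pivot between $\p{j}{k}$ and $\p{i}{\ell}$, invalidating the witness property. A symmetric argument shows that the vertices with $\pr{i}{\ell} = \p{j}{k}$ form a \emph{right spine} starting at $\lchild(k,j)$, and each of these spines contains at most $\depth(\Lambda) - k \leq \depth(\Lambda)$ vertices.

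Putting these ingredients together, every $(\ell,i) \in \Lambda^\natural$ is counted in at least one such spine associated with a witness in $\Lambda^{L_0}$, so
\begin{equation}
\cardinality{\Lambda^\natural} \leq 2\cardinality{\Lambda^{L_0}} \cdot \depth(\Lambda) \leq 2\sext \cdot \bigl(L_0 + \ceil{\log_2(n-\sext)}\bigr),
\end{equation}
where the depth estimate comes from Proposition~\ref{prop:proofs:mw:tree_properties}\ref{prop:proofs:mw:tree_properties:depth}. The standing hypothesis $\SC \geq 8\s/n$ of Theorem~\ref{thm:results:mwbound} (together with the forthcoming choice $\sext \asymp \s/\SC$ in Step~4) keeps $\sext$ well below $n/2$, so that $L_0 = \ceil{\log_2(\sext+1)} \leq \ceil{\log_2(n-\sext)}$ and the bound collapses to $\cardinality{\Lambda^\natural} \leq 4\sext \log_2(n-\sext)$. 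The most delicate ingredient will be the precise spine characterization: one has to trace through the recursive bisection of Iteration~\ref{def:proofs:mw:signaltree} to verify that the witness relation is captured exactly by maximal chains of left (respectively right) moves, and that all other configurations would be witnessed by a deeper ancestor.
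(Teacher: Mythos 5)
Your proposal takes essentially the same route as the paper. The core combinatorial fact is identical—each vertex of $\Lambda^{L_0}$ can serve as the left (respectively right) neighbor of at most one vertex of $\Lambda$ per level—and your spine characterization is simply a constructive reformulation of the paper's contradiction argument (two same-level vertices sharing a $\Lambda^{L_0}$-left-neighbor would force their last common ancestor to sit between that pivot and one of theirs, breaking the neighbor relation). The difference is in the final bookkeeping: the paper counts per layer, using that $\Lambda^\natural$ is contained in layers $L_0,\dots,L$, of which there are at most $\ceil{\log_2(n-\sext)}+1$ by Proposition~\ref{prop:proofs:mw:tree_properties}\ref{prop:proofs:mw:tree_properties:depth}, giving $2\sext\bigl(\ceil{\log_2(n-\sext)}+1\bigr)$. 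You instead count per witness and bound every spine crudely by $\depth(\Lambda) \leq L_0 + \ceil{\log_2(n-\sext)}$, which exceeds the paper's layer count by roughly $L_0-1$ and therefore obliges you to separately argue $L_0\le\ceil{\log_2(n-\sext)}$. As you note, this holds once $\sext < n/2$, which is guaranteed after $\Supp$ is fixed in Step~4 via \eqref{eq:proofs:mw:step4:sext_control_upper} and the standing hypothesis $\SC\geq 8\s/n$, so invoking it is legitimate in context even though it nominally reaches ahead. One small arithmetic caveat: your final chain delivers $4\sext\ceil{\log_2(n-\sext)}$, not literally $4\sext\log_2(n-\sext)$; an additional elementary estimate (valid once, say, $n-\sext\geq 4$) is still needed. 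This is cosmetic—the paper's own passage from $2\sext(\log_2(n-\sext)+1)$ to $4\sext\log_2(n-\sext)$ is similarly loose—and does not affect the asymptotic-order bound used downstream in Step~5.
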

\begin{proof}
	The bounds of (i) and (ii) follow directly from the construction of $\Lambda^\circ$ and $\Lambda^{L_0}$. It remains to verify (iii). To that end, we note that every pivot element $\p{i}{\ell}$ with $(\ell,i) \in \Lambda^{L_0}$ can only be the left neighbor of no more than one vertex per layer $\ell_0 \geq L_0$: towards a contradiction, assume that $\p{i}{\ell}$ is the left neighbor of two vertices $(\ell_0, i_1), (\ell_0, i_2)$ in the same layer $\ell_0 \geq L_0$. Then $\p{i_1}{\ell_0}$ and $\p{i_2}{\ell_0}$ are both larger than $\p{i}{\ell}$ and they have a common ancestor $(k,j) \neq (\ell,i)$ with $\p{j}{k}>\p{i}{\ell}$. Assume without loss of generality that $\p{i_1}{\ell_0} > \p{i_2}{\ell_0}$. Then $\p{i}{\ell} < \p{j}{k} \leq \pl{i_1}{\ell_0}$, which is a contradiction. With a similar argument, one can show that every pivot element $\p{i}{\ell}$ with $(\ell,i) \in \Lambda^{L_0}$ can only be the right neighbor of no more than one vertex per layer $\ell_0 \geq L_0$.
	
	Consequently, in each of the remaining layers, there can be at most $\sext = \cardinality{\Lambda^{L_0}}$ elements with left (right) neighbors in $\Lambda^{L_0}$; note that $L_0$-th layer may also contain elements from $\Lambda^\natural$.
	According to Proposition~\ref{prop:proofs:mw:tree_properties}\ref{prop:proofs:mw:tree_properties:depth}, the number of the remaining layers after $\Lambda^{L_0}$ is bounded by $\ceil{\log_2(n-\sext)}$.
	Therefore, we obtain
	\begin{equation}
		\cardinality{\Lambda^\natural} \leq 2\sext \big(\log_2(n-\sext) + 1 \big) \leq 4\sext \log_2(n-\sext).
	\end{equation}
\end{proof}

\revision{
\begin{example}[Dyadic signals IV]
 The previous proof reveals that for a dyadic signal, we can obtain the sharper bound $\cardinality{\Lambda^\natural} \leq 2\sext \log_2(n / \sext)$, since there are only $\log_2(n / (\sext+1))$ layers remaining (see also Figure~\ref{fig:tree}). 
\end{example}
}

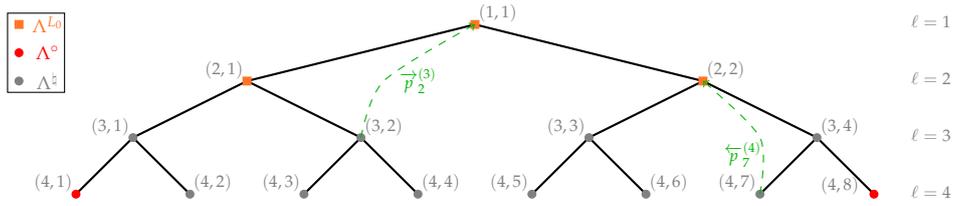
\begin{figure}
 \centering
 \begin{tikzpicture}[thick,scale=0.75, every node/.style={scale=0.7}]
  
  \definecolor{myo}{rgb}{1,0.46,0.15}
  \definecolor{myb}{rgb}{0.35,0.46,0.8}
  \definecolor{myb2}{rgb}{0.81,0.84,0.94}
  \definecolor{Sset}{gray}{0.5}
  \definecolor{myt}{rgb}{0.95,0.75,0.3}
  \definecolor{Sset2}{rgb}{0.35,0.46,0.8}
  
  \tikzstyle{signaldots}=[mark size=1.6pt,color=myb,rotate=-90]
  \tikzstyle{jumpdots}=[mark size=2pt,color=gray,rotate=-90]
  \tikzstyle{jumpdotsS}=[mark size=2pt,color=myo,rotate=-90] 
  \tikzstyle{treeS}=[mark size=2pt,color=myo,rotate=-90]
  \tikzstyle{treeSc}=[mark size=2pt,color=gray,rotate=-90]
  \tikzstyle{treecirc}=[mark size=2pt,color=red,rotate=-90]
  
  \draw (4,4) -- (8,5);
  \draw (12,4) -- (8,5);
  \draw (12,4) -- (14,3);
  \draw (12,4) -- (10,3);
  \draw (4,4) -- (2,3);
  \draw (4,4) -- (6,3);
  \draw (2,3) -- (1,2);
  \draw (2,3) -- (3,2);
  \draw (6,3) -- (7,2);
  \draw (6,3) -- (5,2);
  \draw (10,3) -- (9,2);
  \draw (10,3) -- (11,2);
  \draw (14,3) -- (15,2);
  \draw (14,3) -- (13,2);

  \node[treeS] at (8,5) {\pgfuseplotmark{square*}};
  \node[color=gray] at (8.4,5.2) {\small $(1,1)$}; 
  \node[treeS] at (4,4) {\pgfuseplotmark{square*}};
  \node[color=gray] at (3.6,4.2) {\small $(2,1)$}; 
  \node[treeS] at (12,4) {\pgfuseplotmark{square*}};
  \node[color=gray] at (12.4,4.2) {\small $(2,2)$};   
  \node[color=gray] at (14,3) {\pgfuseplotmark{*}};
  \node[color=gray] at (14.4,3.2) {\small $(3,4)$}; 
  \node[treeSc] at (2,3) {\pgfuseplotmark{*}};
  \node[color=gray] at (1.6,3.2) {\small $(3,1)$};
  \node[treeSc] at (6,3) {\pgfuseplotmark{*}};
  \node[color=gray] at (6.4,3.2) {\small $(3,2)$};
  \node[treecirc] at (1,2) {\pgfuseplotmark{*}};
  \node[color=gray] at (0.6,2.2) {\small $(4,1)$};
  \node[treeSc] at (3,2) {\pgfuseplotmark{*}};
  \node[color=gray] at (3.4,2.2) {\small $(4,2)$};
  \node[treeSc] at (5,2) {\pgfuseplotmark{*}};
  \node[color=gray] at (4.6,2.2) {\small $(4,3)$};
  \node[treeSc] at (7,2) {\pgfuseplotmark{*}};
  \node[color=gray] at (7.4,2.2) {\small $(4,4)$};
  \node[treeSc] at (11,2) {\pgfuseplotmark{*}};
  \node[color=gray] at (11.4,2.2) {\small $(4,6)$}; 
  \node[treeSc] at (13,2) {\pgfuseplotmark{*}};
  \node[color=gray] at (12.6,2.2) {\small $(4,7)$}; 
  \node[treecirc] at (15,2) {\pgfuseplotmark{*}};
  \node[color=gray] at (14.4,2.15) {\small $(4,8)$}; 
  \node[treeSc] at (10,3) {\pgfuseplotmark{*}};
  \node[color=gray] at (9.6,3.2) {\small $(3,3)$};
  \node[treeSc] at (9,2) {\pgfuseplotmark{*}};
  \node[color=gray] at (8.6,2.2) {\small $(4,5)$}; 
  
  \node[color=gray] at (16,5.05) {\small $\ell = 1$};
  \node[color=gray] at (16,4.05) {\small $\ell = 2$};
  \node[color=gray] at (16,3.05) {\small $\ell = 3$};
  \node[color=gray] at (16,2.05) {\small $\ell = 4$};

  \draw[black!30!green,->,thin,dashed] plot [smooth] coordinates {(6,3) (6.4,4.0) (8,5)};
  \node[color=black!30!green] at (7,4.0) {\small $\pr{2}{3}$};
  
  \draw[black!30!green,->,thin,dashed] plot [smooth] coordinates {(13,2) (13.0,3.0) (12,4)};
  \node[color=black!30!green] at (12.7,2.7) {\small $\pl{7}{4}$};
  
  \node[mark size=2pt,color=myo,rotate=-90] at (0,5) {\pgfuseplotmark{square*}};
  \node[color=myo] at (0.5,5) {$\Lambda^{L_0}$};
  
  \node[mark size=2pt,color=gray,rotate=-90] at (0,4) {\pgfuseplotmark{*}};
  \node[color=gray] at (0.5,4) {$\Lambda^{\natural}$};
  
  \node[mark size=2pt,color=red,rotate=-90] at (0,4.5) {\pgfuseplotmark{*}};
  \node[color=red] at (0.5,4.5) {$\Lambda^{\circ}$};
  
  \draw[thin] (-0.2,5.2) rectangle (0.8,3.8);

  \end{tikzpicture}
  \caption{\revision{\textbf{Decomposition of a signal-dependent tree.} The decomposition of the tree $\Lambda$ into its subsets ${\Lambda^{L_0} \disjunion \Lambda^\circ \disjunion \Lambda^\natural}$ is visualized for the dyadic signal from Figure~\ref{fig:trees:1}.}}
  \label{fig:tree}
\end{figure}

The specific form of \eqref{eq:proofs:mw:boxconstr_simpl} enables us to select the elements of $\bar\dv \in \Fext$ iteratively along the levels of $\Lambda$. 
While the entries corresponding to $\Lambda^{L_0}$ are fixed in advance by the sign pattern of~$\TV \grtr$, we need to make a case distinction between the sets $\Lambda^\circ$ and $\Lambda^\natural$.
In particular, for $(\ell,i) \in \Lambda^\circ$, we optimize the corresponding term in \eqref{eq:proofs:mw:boxconstr_simpl} over ${w}_{\p{i}{\ell}}$, using that ${w}_{\pl{i}{\ell}}$ and ${w}_{\pr{i}{\ell}}$ were both already selected in previous iterations. In order to ensure feasibility of the resulting dual vector~$\bar\dv$, the search space for each variable ${w}_{\p{i}{\ell}}$ needs to be slightly truncated in this procedure.
This construction is made precise by the following proposition. 
The nested structure of the truncated intervals introduced below is visualized in Figure~\ref{fig:proofs:mw:dualvec}.
\begin{proposition} \label{prop:proofs:mw:dualvec}
	Let $\gaussian = (g_1, \dots, g_n) \in \R^n$ and $\tau > 0$ be fixed. For $(\ell, i) \in \Lambda^{L_0}$, set
	\begin{equation}
	\bar{w}_{\p{i}{\ell}}  \coloneqq  [\sign(\TV \grtr)]_{\p{i}{\ell}}.
	\end{equation}
	Moreover, let $\bar{w}_0 \coloneqq \bar{w}_n \coloneqq 0$.
	Then, the choice
	\begin{equation}
	\bar{w}_{\p{i}{\ell}}  \coloneqq 
	\begin{cases}
	\clip[\Big]{\dl{i}{\ell} \bar{w}_{\pl{i}{\ell}} + \dr{i}{\ell} \bar{w}_{\pr{i}{\ell}}}{1 - \sqrt{2^{L_0-\ell}}}, & \text{if } (\ell, i) \in \Lambda^\natural, \\
	\argmin_{z \in I_i^{(\ell)}} \Big( g_{\p{i}{\ell}} - \tau \d{i}{\ell} \big( \dl{i}{\ell} \bar{w}_{\pl{i}{\ell}} + \dr{i}{\ell} \bar{w}_{\pr{i}{\ell}} - z \big) \Big)^2, & \text{if } (\ell, i) \in \Lambda^\circ,
	\end{cases}
	\end{equation}
	with
	\begin{equation}
	I_i^{(\ell)} \coloneqq \underbrace{ \intvcl{ -\gamma  \sqrt{2^{L_0 - \ell}}}{ \gamma \sqrt{2^{L_0 - \ell}} }}_{ \eqqcolon  I^{(\ell)}} {} + {} \big( \dl{i}{\ell} \bar{w}_{\pl{i}{\ell}} + \dr{i}{\ell} \bar{w}_{\pr{i}{\ell}} \big)
	\end{equation}
	and $\gamma  \coloneqq  \sqrt{2} - 1$ is well-defined.\footnote{More precisely, the entries of $\bar\dv$ are defined iteratively along the tree levels $\Lambda \intersec \lambda_{1}, \Lambda \intersec \lambda_{2},  \Lambda \intersec \lambda_{3}, \dots$; in particular, $\bar{w}_{\p{i}{\ell}}$ only depends on $\bar{w}_{\pl{i}{\ell}}$ and $\bar{w}_{\pr{i}{\ell}}$, which were both defined in previous layers.} In particular, we have that
	\begin{equation} \label{eq:proofs:mw:dualvec:wbox}
	\bar{w}_{\p{i}{\ell}} \in \underbrace{\intvcl{  -1 + \sqrt{2^{L_0 - \ell}}}{ 1 - \sqrt{2^{L_0 - \ell}} }}_{ \eqqcolon  B^{(\ell)}}
	\end{equation}
	for all $(\ell, i) \in \Lambda^\circ \disjunion \Lambda^\natural$, and therefore $\bar\dv \in \Fext$.
\end{proposition}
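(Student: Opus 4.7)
The plan is to establish both well-definedness and the box containment \eqref{eq:proofs:mw:dualvec:wbox} simultaneously by strong induction on the level $\ell$ of the vertex $(\ell,i)$, traversing $\Lambda$ top-down exactly as the recursive definition prescribes. The construction is legitimate because, by Definition~\ref{def:proofs:mw:signaltree:Q}, each neighbor $\pl{i}{\ell}, \pr{i}{\ell}$ is either a boundary value $0$ or $n$ (for which $\bar w_0 = \bar w_n = 0$ is fixed a priori) or the pivot of a strict ancestor of $(\ell,i)$, whose $\bar w$-value has been set in a previous layer. A preliminary observation of the plan is that every $(\ell,i) \in \Lambda^\natural$ satisfies $\ell \geq L_0$ and every $(\ell,i) \in \Lambda^\circ$ even satisfies $\ell \geq L_0+1$. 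The first fact follows from Proposition~\ref{prop:proofs:mw:tree_properties}\ref{prop:proofs:mw:tree_properties:supp_control} together with $2^{L_0-1}\leq \sext$, which ensures that the $\Lambda^{L_0}$-subtree cannot exit the support $\Supp$ before level $L_0$; the second is a direct consequence of the fact that the parent of $(\ell,i)$ is always one of its two neighbors (a straightforward property of the nested bipartition in Iteration~\ref{def:proofs:mw:signaltree}), so that a $\Lambda^\circ$ vertex cannot have a parent in $\Lambda^{L_0}$. As a result, the clip threshold $1-\sqrt{2^{L_0-\ell}}$ is non-negative, the box $B^{(\ell)}$ is non-empty, and the argmin over the compact interval $I_i^{(\ell)}$ is well-defined.

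For the inductive step, the two cases are treated separately. If $(\ell,i)\in \Lambda^\natural$, the clip operation directly enforces $\bar w_{\p{i}{\ell}} \in [-(1-\sqrt{2^{L_0-\ell}}),\, 1-\sqrt{2^{L_0-\ell}}] = B^{(\ell)}$. If $(\ell,i) \in \Lambda^\circ$, the induction hypothesis and the boundary convention $\bar w_0 = \bar w_n = 0 \in B^{(L_0)}$, combined with the nesting $B^{(\ell')} \subset B^{(\ell-1)}$ for $\ell' \leq \ell-1$, give $\bar w_{\pl{i}{\ell}}, \bar w_{\pr{i}{\ell}} \in B^{(\ell-1)}$. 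Since $\dl{i}{\ell} + \dr{i}{\ell} = 1$, convexity of $B^{(\ell-1)}$ places the center $c := \dl{i}{\ell}\bar w_{\pl{i}{\ell}} + \dr{i}{\ell}\bar w_{\pr{i}{\ell}}$ inside $B^{(\ell-1)}$, i.e.\ $|c| \leq 1 - \sqrt{2}\sqrt{2^{L_0-\ell}}$. The calibrated choice $\gamma = \sqrt 2 - 1$ then yields the arithmetic identity
\begin{equation}
(1 - \sqrt 2\,\sqrt{2^{L_0-\ell}}) + \gamma\sqrt{2^{L_0-\ell}} = 1 - \sqrt{2^{L_0-\ell}},
\end{equation}
which implies $I_i^{(\ell)} = c + [-\gamma\sqrt{2^{L_0-\ell}},\gamma\sqrt{2^{L_0-\ell}}] \subset B^{(\ell)}$. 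In particular, the argmin defining $\bar w_{\p{i}{\ell}}$ lies in $B^{(\ell)}$.

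Once the induction is finished, the feasibility $\bar\dv \in \Fext$ is immediate: on $\Supp$ the entries agree with $[\sign(\TV\grtr)]_\Supp$ by definition, and for $\jump \in [N]\setminus\Supp$ with $\jump = \p{i}{\ell}$ and $(\ell,i)\in\Lambda^\circ\disjunion\Lambda^\natural$ we have $|\bar w_\jump| \leq 1 - \sqrt{2^{L_0-\ell}} \leq 1$ since $\ell\geq L_0$. I expect the main technical obstacle to be exactly this geometric propagation: the constant $\gamma = \sqrt 2 - 1$ is the unique choice that makes the shrinking search radii $\gamma\sqrt{2^{L_0-\ell}}$ compatible with the shrinking box radii $1-\sqrt{2^{L_0-\ell}}$ across levels, and the delicate transition levels near $\ell = L_0$---where $B^{(\ell)}$ degenerates to $\{0\}$ and the clip forces $\Lambda^\natural$-vertices to vanish---need to be handled so that the subsequent $\Lambda^\circ$-vertices inherit a usable center $c$. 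The correct identification of neighbors and parents as elements of $\Lambda^{L_0} \disjunion \Lambda^\natural \disjunion \Lambda^\circ$ is where the tree bookkeeping from Step~\hyperref[eq:proof:step2a]{2(a)} pays off.
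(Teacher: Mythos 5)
Your proof reproduces the paper's inductive scheme level by level, with the same arithmetic identity $(1-\sqrt 2\,\sqrt{2^{L_0-\ell}})+\gamma\sqrt{2^{L_0-\ell}}=1-\sqrt{2^{L_0-\ell}}$ driving the nesting $I_i^{(\ell)}\subset B^{(\ell)}$, and the correct observation that the parent of any non-root vertex of $\Lambda$ is always one of its two neighbors, which forces $\Lambda^\circ$ to start only at level $L_0+1$. You also make explicit a preliminary structural fact that the paper's proof takes for granted (its induction simply starts at $\ell=L_0$ and asserts $\lambda_{L_0}\subset\Lambda^{L_0}\cup\Lambda^\natural$): namely, that no vertex of $\Lambda\setminus\Lambda^{L_0}$ can appear at a level strictly below $L_0$, which is needed so that the clip threshold $1-\sqrt{2^{L_0-\ell}}$ is non-negative and $B^{(\ell)}$ is non-empty wherever the box containment \eqref{eq:proofs:mw:dualvec:wbox} is asserted.

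The one genuine flaw is your justification of that preliminary fact. You cite Proposition~\ref{prop:proofs:mw:tree_properties}\ref{prop:proofs:mw:tree_properties:supp_control} together with $2^{L_0-1}\le\sext$, but that part of the proposition gives only the \emph{upper} bound $\cardinality{\SuppS{i}{\ell}\cap\Supp}\le\floor{\sext/2^{\ell-1}}$, which cannot certify a positive count at levels below $L_0$; at most it combines with $\cardinality{\Lambda^{L_0}}=\sext$ to yield Proposition~\ref{prop:proofs:mw:tree_properties}\ref{prop:proofs:mw:tree_properties:supporttree} (that $\Lambda^{L_0}$ has depth exactly $L_0$), which is a different statement. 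What is actually needed is the reverse inequality along the bipartition chain: the lower-median split passes each child of a vertex with support count $a$ at least $\floor{(a-1)/2}\ge (a-2)/2$ elements of $\Supp$, so along the path from the root one has $a^{(j-1)}\le 2a^{(j)}+2$. If some $(\ell,i)\in\Lambda\setminus\Lambda^{L_0}$ of minimal level had $\ell<L_0$, its parent $(k,j)$ at level $k=\ell-1$ would lie in $\Lambda^{L_0}$ with count $a^{(k)}\le 2$, giving $\sext=a^{(1)}\le 2^{k-1}\big(a^{(k)}+2\big)-2\le 2^{\ell}-2$, which contradicts $\sext\ge 2^{L_0-1}$. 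Substituting this argument closes the gap; everything else in your proposal is correct and matches the paper.
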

\enlargethispage{1.5\baselineskip}
\begin{proof}
	Clearly, $\bar{w}_{\p{i}{\ell}}$ is well-defined for $(\ell, i) \in \Lambda^{L_0}$. The second part of the claim follows by induction over $\ell=L_0,\dots,L$. Regarding the induction basis, we note that $\lambda_{L_0} \subset \Lambda^{L_0} \union \Lambda^\natural$, so that all values are well-defined and \eqref{eq:proofs:mw:dualvec:wbox} clearly holds on $\lambda_{L_0}$.
	
	Now, let $\ell \in \{L_0, \dots, L-1\}$. For the induction hypothesis, assume that the entries of $\bar\dv$ are well-defined on $\lambda_k$ for all $k = L_0,\dots,\ell$ and that \eqref{eq:proofs:mw:dualvec:wbox} holds true on these index sets. Then, $\bar\dv$ is also well-defined on $\lambda_{\ell+1}$, since all corresponding entries of $\bar\dv$ do only depend on previously defined entries of $\bar\dv$. Thus, it remains to show that \eqref{eq:proofs:mw:dualvec:wbox} is satisfied, i.e., $\bar{w}_{\p{i}{\ell+1}} \in B^{(\ell+1)}$ for all $(\ell+1, i) \in \lambda_{\ell+1}$. To see this, we assume without loss of generality that $(\ell+1, i) \in \Lambda^\circ$, since otherwise, the constraint \eqref{eq:proofs:mw:dualvec:wbox} is fulfilled by definition. Note that $B^{(k)}$ is a symmetric convex set for all $k = L_0, \dots, L$ and $B^{(k_1)} \subset B^{(k_2)}$ for $k_1 \leq k_2$. Due to the induction hypothesis and $\dl{i}{\ell+1} + \dr{i}{\ell+1} = 1$, we have that
	\begin{equation}\label{eq:proofs:mw:dualvec:B-ell}
	\dl{i}{\ell+1} \bar{w}_{\pl{i}{\ell+1}} + \dr{i}{\ell+1} \bar{w}_{\pr{i}{\ell+1}} \in B^{(\ell)}.
	\end{equation}
	Moreover,
	\begin{align}
	I^{(\ell+1)} + B^{(\ell)}
	&= \intvcl{ -1 + \sqrt{2^{L_0-\ell}} - ( \sqrt{2} - 1 ) \sqrt{2^{L_0-(\ell+1)}}}{ 1 - \sqrt{2^{L_0-\ell}} + ( \sqrt{2} - 1 ) \sqrt{2^{L_0-(\ell+1)}} } \\*
	&= \intvcl{ -1 + \sqrt{2^{L_0-(\ell+1)}}}{ 1 -  \sqrt{2^{L_0-(\ell+1)}} } = B^{(\ell+1)},
	\end{align}
	and therefore, we have $\bar{w}_{\p{i}{\ell+1}} \in I_i^{(\ell+1)} \subset I^{(\ell+1)} + B^{(\ell)} = B^{(\ell + 1)}$. From this, $\bar\dv \in \Fext$ follows immediately.
\end{proof}

\begin{figure}[t]
	\centering
	\includegraphics[width=0.7\linewidth]{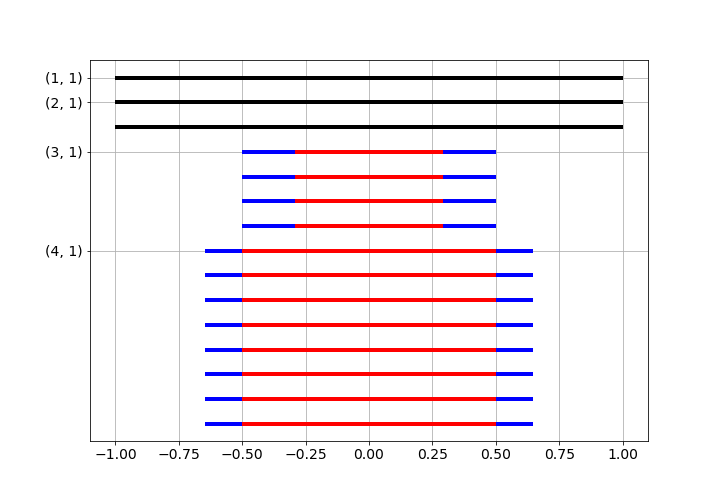}
	\caption{\textbf{The nested structure of the truncated intervals in Proposition~\ref{prop:proofs:mw:dualvec} for $L_0 = 2$ and $\sext = 3$.} The vertical axis denotes the different vertices of $\Lambda$ where the first vertex in each layer is labeled explicitly. The inner (red) interval is $B^{(\ell)}$, the outer (blue) interval is $I^{(\ell+1)}$. The black bars in the first two layers correspond to the entries of the dual vector on $\Lambda^{L_0}$, which certainly belong to the full unit interval.}
	\label{fig:proofs:mw:dualvec}
\end{figure}

\subsubsection{Step 3(b): Bounding the Conic Mean Width}
\label{eq:proof:step3b}

In this step, we establish a general bound for the conic mean width $\meanwidth[\conic]{\descset{\lnorm{\TV(\cdot)}[1], \grtr}}$ based on the estimate of Lemma~\ref{lem:proofs:mw:unitarypolar}.
Here, the choice of $\H$ follows the non-dyadic Haar wavelet transform from Proposition~\ref{prop:proofs:mw:haarofbinary}.
In order to invoke the bound of Lemma~\ref{lem:proofs:mw:unitarypolar}, let $\tau > 0$ be fixed for now and let $\gaussian \distributed \Normdistr{\vnull}{\I{n}}$. 

First, we proceed with the simplification of \eqref{eq:proofs:mw:boxconstr_simpl} and plug in the dual vector $\bar\dv \in \Fext$ constructed in Proposition~\ref{prop:proofs:mw:dualvec}. This leads to the following bound for the right-hand side of \eqref{eq:proofs:mw:unitarypolar}:
\begin{align} 
& \mean{}_{\gaussian} [ \inf_{\dv \in \Fext} \lnorm{\gaussian - \tau \H \TV^\T \dv}^2 ] \\*
= {} & \mean{}_{\gaussian} \Big[ g_{\p{0}{0}}^2 + \inf_{\dv \in \Fext} \sum_{(\ell,i) \in \Lambda} \Big(g_{\p{i}{\ell}} - \tau \d{i}{\ell} \big(\dl{i}{\ell} {w}_{\pl{i}{\ell}} + \dr{i}{\ell} {w}_{\pr{i}{\ell}} -  {w}_{\p{i}{\ell}}\big)\Big)^2 \Big] \label{eq:proofs:mw:mw_bound_pivotsum} \\
\leq {} & \underbrace{\mean{}_{\gaussian}[g_{\p{0}{0}}^2]}_{=1} + \sum_{(\ell,i) \in \Lambda} \underbrace{\mean{}_{\gaussian}\Big[\Big( g_{\p{i}{\ell}} - \tau \d{i}{\ell} \big(\dl{i}{\ell} \bar{w}_{\pl{i}{\ell}} + \dr{i}{\ell} \bar{w}_{\pr{i}{\ell}} -  \bar{w}_{\p{i}{\ell}}\big) \Big)^2 \Big]}_{\eqqcolon \e{i}{\ell}} = 1 + \sum_{(\ell,i) \in \Lambda} \e{i}{\ell}.
\end{align}
Before proceeding with individual estimates for the \emph{per-pivot errors} $\e{i}{\ell}$, we introduce the following \emph{balancing parameters}: For $(\ell, i) \in \Lambda$, let
\begin{equation}
\b{i}{\ell}  \coloneqq  \frac{\d{i}{\ell}}{c_\ell} \quad \text{with } c_\ell \coloneqq \sqrt{\frac{2^{\ell+1}}{n}},
\end{equation}
and for any subset $\Klambda \subset \Lambda$, let
\begin{equation}
\b{\min}{\Klambda}  \coloneqq  \min_{(\ell, i) \in \Klambda} \b{i}{\ell}
\qquad \text{and} \qquad
\b{\max}{\Klambda}   \coloneqq  \max_{(\ell, i) \in \Klambda} \b{i}{\ell}.
\end{equation}

\revision{
\begin{example}[Dyadic signals V]
 One can view $\b{i}{\ell}$ as a measure for the balance of the subtree with root $(\ell, i)$.
 Indeed, in the case of a dyadic signal (corresponding to a perfect tree), we simply have that $\b{i}{\ell} = 1$ for all $(\ell, i) \in \Lambda$. This follows immediately from the fact that $n = 2^L$ and $\d{i}{\ell} = \sqrt{2^{\ell-L+1}}$ (see Example~\ref{ex:dyadic_3}).
\end{example}
}

We now make a case distinction to estimate the per-pivot error terms:

\textbf{Case~1:} Let $(\ell,i) \in \Lambda^{L_0}$. Using that $\bar{w}_{\p{i}{\ell}}$, $\bar{w}_{\pl{i}{\ell}}$, and $\bar{w}_{\pr{i}{\ell}}$ are completely determined by $\sign(\TV \grtr)$, and therefore deterministic, we obtain the per-pivot error bound
\begin{align}
\e{i}{\ell}
&= \mean{}_{\gaussian}\Big[\Big( g_{\p{i}{\ell}} - \tau \d{i}{\ell} \big(\dl{i}{\ell} \bar{w}_{\pl{i}{\ell}} + \dr{i}{\ell} \bar{w}_{\pr{i}{\ell}} -  \bar{w}_{\p{i}{\ell}}\big)\Big)^2\Big] \\*
&= \begin{multlined}[t]
\mean{}_{\gaussian}[ g_{\p{i}{\ell}}^2 ] - \mean{}_{\gaussian}[ g_{\p{i}{\ell}} ] \cdot \tau \d{i}{\ell} \big(\dl{i}{\ell} \bar{w}_{\pl{i}{\ell}} + \dr{i}{\ell} \bar{w}_{\pr{i}{\ell}} -  \bar{w}_{\p{i}{\ell}}\big) \\*
+ \Big( \tau \d{i}{\ell} \underbrace{\big(\dl{i}{\ell} \bar{w}_{\pl{i}{\ell}} + \dr{i}{\ell} \bar{w}_{\pr{i}{\ell}} -  \bar{w}_{\p{i}{\ell}}\big)}_{\stackrel{(\ast)}{\leq} 2}\Big)^2
\end{multlined} \\
&\leq 1 + (2 \tau \d{i}{\ell})^2 \\
&= 1 + (2 \tau \b{i}{\ell} c_\ell)^2 \\
&\leq 1 + (2 \tau \b{i}{\ell} c_{L_0})^2 \\
&\leq 1 + (2 \tau \b{\max}{\Lambda^{L_0}} c_{L_0})^2,
\end{align}
where we have used that $\ell \leq L_0$ in the second last line and $(\ast)$ is due to $\dl{i}{\ell} + \dr{i}{\ell} = 1$ and $\bar{w}_{\p{i}{\ell}}, \bar{w}_{\pl{i}{\ell}}, \bar{w}_{\pr{i}{\ell}} \in \intvcl{-1}{1}$.

\textbf{Case~2:} Let $(\ell, i) \in \Lambda^\natural$. We note that $\bar{w}_{\pr{i}{\ell}}$ and $\bar{w}_{\pl{i}{\ell}}$ are independent of $g_{\p{i}{\ell}}$, due to the iterative construction of $\bar\dv$ from Proposition~\ref{prop:proofs:mw:dualvec}, and therefore, $\bar{w}_{\p{i}{\ell}}$ is also independent of $g_{\p{i}{\ell}}$. Using the definition of $\bar{w}_{\p{i}{\ell}}$, we then obtain the following per-pivot error bound:
\begin{align}
\e{i}{\ell}
&= \mean{}_{\gaussian}\Big[ \Big( g_{\p{i}{\ell}} - \tau \d{i}{\ell} \big(\dl{i}{\ell} \bar{w}_{\pl{i}{\ell}} + \dr{i}{\ell} \bar{w}_{\pr{i}{\ell}} -  \bar{w}_{\p{i}{\ell}}\big)\Big)^2 \Big] \\*
&= \begin{multlined}[t]
\mean{}_{\gaussian}[ g_{\p{i}{\ell}}^2 ] - \mean{}_{\gaussian}[ g_{\p{i}{\ell}} ] \cdot \mean{}_{\gaussian}\Big[  \tau \d{i}{\ell} \big(\dl{i}{\ell} \bar{w}_{\pl{i}{\ell}} + \dr{i}{\ell} \bar{w}_{\pr{i}{\ell}} -  \bar{w}_{\p{i}{\ell}}\big) \Big] \\
+ \mean{}_{\gaussian}\Big[ \Big( \tau \d{i}{\ell} \big(\dl{i}{\ell} \bar{w}_{\pl{i}{\ell}} + \dr{i}{\ell} \bar{w}_{\pr{i}{\ell}} -  \bar{w}_{\p{i}{\ell}}\big) \Big)^2 \Big]
\end{multlined}\\
&\stackrel{\mathllap{\text{Prop.~\ref{prop:proofs:mw:dualvec}}}}{\leq} 1 + \big(\tau \d{i}{\ell} \sqrt{2^{L_0 - \ell}}\big)^2 \\
&= 1 + \big(\tau \b{i}{\ell} c_\ell \sqrt{2^{L_0 - \ell}}\big)^2 \\
&\leq 1 + (\tau \b{\max}{\Lambda^\natural} c_{L_0})^2.
\end{align}

\textbf{Case~3:} Let $(\ell, i) \in \Lambda^\circ$. Using the definition of $\bar{w}_{\p{i}{\ell}}$, we obtain
\begin{align}
\e{i}{\ell}
&= \mean{}_{\gaussian}\Big[ \Big( g_{\p{i}{\ell}} - \tau \d{i}{\ell} \big(\dl{i}{\ell} \bar{w}_{\pl{i}{\ell}} + \dr{i}{\ell} \bar{w}_{\pr{i}{\ell}} -  \bar{w}_{\p{i}{\ell}}\big)\Big)^2\Big] \\*
&\stackrel{\mathllap{\text{Prop.~\ref{prop:proofs:mw:dualvec}}}}{=} \mean{}_{\gaussian}\Big[ \pospart[\big]{\abs{ g_{\p{i}{\ell}} } - \gamma  \tau \d{i}{\ell} \sqrt{2^{L_0 - \ell}}}^2 \Big] \\
&\leq 2\exp\Big(-\tfrac{1}{2} \big(\gamma \tau \d{i}{\ell} \sqrt{2^{L_0 - \ell}} \big)^2\Big) \\
&= 2\exp\Big(-\tfrac{1}{2} \big(\gamma \tau \b{i}{\ell} \underbrace{c_\ell \sqrt{2^{L_0 - \ell}}}_{= c_{L_0}} \big)^2\Big) \\*
&\leq 2\exp\Big(-\tfrac{1}{2} \big(\gamma \tau \b{\min}{\Lambda^\circ} c_{L_0} \big)^2\Big),
\end{align}
where the third line follows from the standard bound $\mean{}_{g \distributed \Normdistr{0}{1}}\big[ \pospart[normal]{\abs{g} - t }^2\big] \leq 2 e^{-t^2/2}$ for all $t > 0$; cf.~\cite[Eq.~(2.15)]{tropp2014convex}.

By summing up the above per-pivot error bounds, we obtain
\begin{align}
\sum_{(\ell,i) \in \Lambda^{L_0}} \e{i}{\ell}
&\leq \cardinality{\Lambda^{L_0}} + (2 \tau \b{\max}{\Lambda^{L_0}} c_{L_0})^2 \cardinality{\Lambda^{L_0}}, \\*
\sum_{(\ell,i) \in \Lambda^\natural} \e{i}{\ell}
&\leq \cardinality{\Lambda^\natural} + (\tau \b{\max}{\Lambda^\natural} c_{L_0})^2 \cardinality{\Lambda^\natural}, \\*
\sum_{(\ell,i) \in \Lambda^\circ} \e{i}{\ell}
&\leq \cardinality{\Lambda^\circ} \cdot 2\exp\Big(-\tfrac{1}{2} \big(\gamma \tau \b{\min}{\Lambda^\circ}  c_{L_0} \big)^2\Big). \label{eq:proofs:mw:pivoterr_sums}
\end{align}
We now select a (probably suboptimal) value for $\tau > 0$ to apply the bound for the conic mean width from Lemma~\ref{lem:proofs:mw:unitarypolar}. More specifically, we set $\tau = (\gamma \b{\min}{\Lambda^\circ} c_{L_0})^{-1} \sqrt{2 \log(n/\sext)}$, in order to compensate for the dominating cardinality of $\Lambda^\circ$. This leads to
\begin{align}
& \effdim[\conic]{\descset{\lnorm{\TV(\cdot)}[1], \grtr}} \\*
\leq {} & \inf_{\tilde\tau > 0} \mean{}_{\gaussian} [ \inf_{\dv \in \Fext} \lnorm{\gaussian - \tilde\tau \H \TV^\T \dv}^2 ] \\*
\leq {} & \mean{}_{\gaussian} [ \lnorm{\gaussian - \tau \H \TV^\T \bar\dv}^2 ] \\
\stackrel{\mathllap{\eqref{eq:proofs:mw:mw_bound_pivotsum}}}{=}  {} & 1 + \sum_{(\ell,i) \in \Lambda} \e{i}{\ell} \\
\stackrel{\mathllap{\eqref{eq:proofs:mw:pivoterr_sums}}}{\leq}  {} & \begin{multlined}[t] 1 + \cardinality{\Lambda^{L_0}} + \cardinality{\Lambda^\natural} + (\tau c_{L_0})^2 \Big( \cardinality{\Lambda^\natural} \big(\b{\max}{\Lambda^\natural}\big)^2 + 4\, \cardinality{\Lambda^{L_0}} \big(\b{\max}{\Lambda^{L_0}}\big)^2 \Big) \\ + 2\, \cardinality{\Lambda^\circ} \exp\Big(-\tfrac{1}{2} \big(\gamma \tau \b{\min}{\Lambda^\circ}  c_{L_0} \big)^2\Big) \end{multlined} \\
\leq {} & 1 + \cardinality{\Lambda^{L_0}} + \cardinality{\Lambda^\natural} + \frac{2}{\big(\gamma \b{\min}{\Lambda^\circ}\big)^2} \Big( \cardinality{\Lambda^\natural} \big(\b{\max}{\Lambda^\natural}\big)^2 + 4\, \cardinality{\Lambda^{L_0}} \big(\b{\max}{\Lambda^{L_0}}\big)^2 \Big) \log(\tfrac{n}{\sext}) + 2\, \cardinality{\Lambda^\circ} \cdot \frac{\sext}{n}. \quad \label{eq:proofs:mw:mw_bound_general_nonasymp}
\end{align}
Together with the cardinality bounds from Lemma~\ref{lem:proofs:mw:cardinality} and $\log_2(\cdot) \leq \tfrac{3}{2}\log(\cdot)$, we finally obtain
\begin{align}
\effdim[\conic]{\descset{\lnorm{\TV(\cdot)}[1], \grtr}} & \leq
\begin{multlined}[t]
1 + \sext + 4 \sext \log_2(n-\sext) + 8\bigg(\frac{\b{\max}{\Lambda^\natural}}{\gamma \b{\min}{\Lambda^\circ}}\bigg)^2 \sext \log_2(n-\sext) \log(\tfrac{n}{\sext}) \\*
+ 8\bigg(\frac{\b{\max}{\Lambda^{L_0}}}{\gamma \b{\min}{\Lambda^\circ}}\bigg)^2 \sext \log(\tfrac{n}{\sext}) + 2 \cdot \frac{\sext(n-\sext)}{n}
\end{multlined}\\*
& \leq 1 + 3\sext + \bigg[6 + 8\bigg(\frac{\b{\max}{\Lambda^{L_0}}}{\gamma \b{\min}{\Lambda^\circ}}\bigg)^2\bigg] \sext \log(n) + 12\bigg(\frac{\b{\max}{\Lambda^\natural}}{\gamma \b{\min}{\Lambda^\circ}}\bigg)^2 \sext \log^2(n). \qquad \label{eq:proofs:mw:mw_bound_general}
\end{align}

\revision{
\begin{example}[Dyadic signals VI]
In the case of a dyadic signal, the previous computations simplify again. 
Here, we directly obtain that $\e{i}{\ell} \leq 1 + 8 \tau^2 (s+1)/n$ for every $(\ell,i) \in \Lambda^{L_0} \cup \Lambda^\natural$ and $\e{i}{\ell} \leq 2 \exp(-(\gamma \tau)^2 (s+1)/n)$ for $(\ell,i) \in \Lambda^\circ$. 
Summing up the per-pivor error as in~\eqref{eq:proofs:mw:pivoterr_sums} (recall that $\cardinality{\Lambda^\natural} \leq 2\sext \log_2(n / \sext)$) and choosing $\tau =  \gamma^{-1} \sqrt{\tfrac{n}{s+1}\log (n/(s+1))}$ results in the overall bound
\begin{equation}
 \effdim[\conic]{\descset{\lnorm{\TV(\cdot)}[1], \grtr}} \lesssim \s\log^2(\tfrac{2n}{\s}).
\end{equation}
Hence, the proof is already complete for dyadic signals. In fact, we have also obtained a slightly improved bound, due to the fact that $\cardinality{\Lambda^\natural}$ is better controlled for perfect binary trees. 
\end{example}
}

\subsubsection{Step 4: Controlling the Balancing Parameters}
\label{eq:proof:step4}

It remains to control the three balancing parameters $\b{\max}{\Lambda^{L_0}}$, $\b{\max}{\Lambda^\natural}$, and $\b{\min}{\Lambda^\circ}$ in \eqref{eq:proofs:mw:mw_bound_general}.
For this purpose, we need to make the choice of the extended gradient support $\Supp \supset \ssupp$ explicit.
The basic idea behind our construction is to select $\Supp$ in such a way that the resulting tree $\Lambda$ becomes as ``balanced'' as possible and $\Lambda^{L_0}$ becomes a perfect binary tree.
Unfortunately, this construction turns out to be quite technical and is therefore divided into the substeps~\hyperref[eq:proof:step4a]{(a)}--\hyperref[eq:proof:step4f]{(f)} below.

Before presenting the details, let us provide a short overview of our strategy: Intuitively, one can think of the actual gradient support $\ssupp \subset [N]$ as a partition of $[n]$ into $\s + 1$ elements. Our goal is now to refine $\ssupp$ by adding more (``ghost'') elements such that the resulting partition $\Supp$ becomes almost equidistant (but may contain more elements than $\ssupp$).
Importantly, such a refinement can be achieved with $\sext = \cardinality{\Supp} \asymp \s / \SC$ elements; see substep~\hyperref[eq:proof:step4a]{(a)}. In fact, the precise choice of $\Supp$ relies on the $\SC$-separation property of $\grtr$ in conjunction with a real-valued auxiliary grid $\SuppEq \subset \R$, introduced in substep~\hyperref[eq:proof:step4b]{(b)}; this grid forms an equidistant partition of the interval $\intvop{0}{n}$ and is used in substep~\hyperref[eq:proof:step4c]{(c)} to specify those (rounded) grid elements that need to be added to $\ssupp$. The almost-equidistance property of the resulting set $\Supp$ is then verified in substep~\hyperref[eq:proof:step4d]{(d)}.
Finally, we relate the pairwise distances between the elements of $\Supp$ to the cardinality parameters $\nl{i}{\ell}$ and $\nr{i}{\ell}$ defined at the beginning of Step~\hyperref[eq:proof:step2b]{2(b)}; see \eqref{eq:proofs:mw:step4:n_control}. This allows us to show that $\b{\max}{\Lambda^{L_0}} \lesssim 1$ in substep~\hyperref[eq:proof:step4e]{(e)}, as well as $\b{\max}{\Lambda^\natural} \lesssim 1$ and $\b{\min}{\Lambda^\circ} \gtrsim 1$ in substep~\hyperref[eq:proof:step4f]{(f)}.

We now elaborate all these substeps in detail:
\begin{prooflist}
\item \label{eq:proof:step4a}
	We intend to design an extended support set $\Supp = \{\jumpext_1, \dots, \jumpext_{\sext}\} \subset [N]$ with $\jumpext_1 < \dots < \jumpext_{\sext}$ and such that
	\begin{equation}
	\sext \coloneqq 2^{\ceil{\log_2(\tfrac{\s + 1}{\SC})}} - 1.
	\end{equation}
	First, we observe that the actual gradient-sparsity $\s$ can be controlled in terms of $\sext$ and the separation constant $\SC$:
	\begin{equation}\label{eq:proofs:mw:step4:sext_control}
	\sext + 1 \geq 2^{\log_2(\tfrac{\s + 1}{\SC})} = \frac{\s + 1}{\SC},
	\end{equation}
	and
	\begin{equation}\label{eq:proofs:mw:step4:sext_control_upper}
	\sext + 1 = 2^{\ceil{\log_2(\tfrac{\s + 1}{\SC})}} \leq 2^{\log_2(2\cdot \tfrac{\s + 1}{\SC})} = 2\cdot \frac{\s + 1}{\SC} \leq \frac{4\s}{\SC}.
	\end{equation}
\item \label{eq:proof:step4b}
	In order to make the choice of $\jumpext_1, \dots, \jumpext_{\sext}$ explicit, we introduce equidistant auxiliary points
	\begin{equation}
	\tilde{\jumpext}_i \coloneqq i \cdot \frac{n}{\sext+1}, \quad i = 1, \dots, \sext,
	\end{equation} 
	and we set $\SuppEq \coloneqq \{\tilde{\jumpext}_1, \dots, \tilde{\jumpext}_{\sext}\} \subset \R$; note that these auxiliary points are not necessarily integers.
	Moreover, let $\jump_1,\dots, \jump_s \in [N]$ denote the $\s$ elements of the gradient support, i.e., $\ssupp = \{\jump_1,\dots, \jump_s\}$. 
	
	Then, for every $j \in [\s]$, we select\footnote{If the choice of $i_j$ is not unique, which is the case when $\jump_j$ is exactly the midpoint of $\tilde{\jumpext}_i$ and $\tilde{\jumpext}_{i + 1}$ for some $i \in [\sext-1]$, we make the convention $i_j \coloneqq i$.}
	\begin{equation}
	i_j \in \argmin_{i \in [\sext]} \ \abs{\jump_j - \tilde{\jumpext}_i}.
	\end{equation}
	In other words, $\tilde{\jumpext}_{i_j}$ is the element in $\SuppEq$ that is the closest to the jump discontinuity $\jump_j$.
	Since $\SuppEq$ defines an equidistant grid on the interval $\intvop{0}{n}$, it is not hard to see that 
	\begin{equation}\label{eq:proofs:mw:step4:supp_equid_control}
	\abs{\jump_j - \tilde{\jumpext}_{i_j}} \leq \frac{1}{2} \cdot \frac{n}{\sext+1}, \quad j = 1, \dots, \s,
	\end{equation}
	where we have also used the $\SC$-separation of $\grtr$ to conclude that $\jump_j $ cannot be too close to the boundary points $\tilde{\jumpext}_0 \coloneqq 0$ and $\tilde{\jumpext}_{\sext + 1} \coloneqq n$:
	\begin{align}
	\abs{\jump_j - 0} &\geq \SC \cdot \frac{n}{\s+1} \stackrel{\eqref{eq:proofs:mw:step4:sext_control}}{\geq} \frac{n}{\sext+1} > \frac{1}{2} \cdot \frac{n}{\sext+1}, \\*
	\abs{\jump_j - n} &\geq \SC \cdot \frac{n}{\s+1} \stackrel{\eqref{eq:proofs:mw:step4:sext_control}}{\geq} \frac{n}{\sext+1} > \frac{1}{2} \cdot \frac{n}{\sext+1}.
	\end{align}
\item \label{eq:proof:step4c}
	Introducing the index set $\mathcal{I} \coloneqq \{i_1, \dots, i_s\} \subset [\sext]$, we observe that $\cardinality{\mathcal{I}} = \s$, so that $j \mapsto i_j$ is a bijection between $[\s]$ and $\mathcal{I}$.
	In other words, none of the elements in $\SuppEq$ gets selected for more than one element of $\ssupp$; indeed, this claim is again a consequence of the $\SC$-separation of $\grtr$: if ${i_j} = {i_{j'}}$ for some $j, j' \in [\s]$, we would have
	\begin{equation}
	\abs{\jump_j - \jump_{j'}} \leq \abs{\jump_j - \tilde{\jumpext}_{i_j}} + \abs{\tilde{\jumpext}_{i_{j'}} - \jump_{j'}} \stackrel{\eqref{eq:proofs:mw:step4:supp_equid_control}}{\leq} \frac{n}{\sext+1} \stackrel{\eqref{eq:proofs:mw:step4:sext_control}}{\leq} \SC \cdot \frac{n}{\s+1}.
	\end{equation}
	The above construction of the auxiliary points $\SuppEq$ implies that equality can only hold if $j = j'$, and in any other case, we would have a contradiction to \eqref{eq:results:msc}.
	
	Consequently, the following extended gradient support set is well-defined and satisfies $\cardinality{\Supp} = \sext$:
	\begin{equation}
	\Supp = \{\jumpext_1, \dots, \jumpext_{\sext}\} \coloneqq \ssupp \disjunion \{ \round{\tilde\jumpext_i} \suchthat i \in \setcompl{\mathcal{I}} = [\sext]\setminus \mathcal{I} \} \subset [N].
	\end{equation}
	Note that $\round{\tilde\jumpext_i} \not\in \ssupp$ for $i \in \setcompl{\mathcal{I}}$ is due to \eqref{eq:proofs:mw:step4:supp_equid_control} and $n / (\sext+1) \geq \SC n / (4\s) > 1$, where the latter is a consequence of \eqref{eq:proofs:mw:step4:sext_control_upper} and the assumption $\SC \geq 8\s / n$ of Theorem~\ref{thm:results:mwbound}.
	In particular, we have that
	\begin{equation}
	\abs{\jumpext_i - \tilde\jumpext_i} \leq \frac{1}{2} \cdot \frac{n}{\sext+1}, \quad i = 1, \dots, \sext,
	\end{equation}
	where the case $i \in \mathcal{I}$ is due to \eqref{eq:proofs:mw:step4:supp_equid_control} and the case $i \in \setcompl{\mathcal{I}}$ is due to rounding.
	Moreover, recall that the elements of $\Supp$ are ordered, i.e., $\jumpext_1 < \dots < \jumpext_{\sext}$.
\item \label{eq:proof:step4d}
	We now aim to control the pairwise distance between the elements of $\Supp$. Let $i, j \in [\sext]$.
	If $i \neq j$, then we have the following upper bound:
	\begin{align}
	\abs{\jumpext_i - \jumpext_j} &\leq \abs{\jumpext_i - \tilde\jumpext_i} + \abs{\tilde\jumpext_i - \tilde\jumpext_j} + \abs{\tilde\jumpext_j - \jumpext_j} \\*
	&\leq \frac{1}{2} \cdot \frac{n}{\sext+1} + \abs{i-j} \cdot \frac{n}{\sext+1} + \frac{1}{2} \cdot \frac{n}{\sext+1} \\
	&= (\abs{i-j} + 1) \cdot \frac{n}{\sext+1} \\
	&\leq 2\, \abs{i-j} \cdot \frac{n}{\sext+1}.
	\end{align}
	For $\abs{i - j} \geq 2$, we have the following lower bound:
	\begin{align}
	\abs{\jumpext_i - \jumpext_j} &\geq  \abs{\tilde\jumpext_i - \tilde\jumpext_j} - \abs{\jumpext_i - \tilde\jumpext_i} - \abs{\tilde\jumpext_j - \jumpext_j} \\*
	&\geq \abs{i-j} \cdot \frac{n}{\sext+1} - \frac{1}{2} \cdot \frac{n}{\sext+1} - \frac{1}{2} \cdot \frac{n}{\sext+1} \\
	&= (\abs{i-j} - 1) \cdot \frac{n}{\sext+1} \\
	&\geq \frac{1}{2} \, \abs{i-j} \cdot \frac{n}{\sext+1}.
	\end{align}
	For $\abs{i - j} = 1$, we make a case distinction to obtain a similar lower bound: if $\jumpext_i, \jumpext_j \in \ssupp$, then the $\SC$-separation of $\grtr$ implies that
	\begin{equation}
	\abs{\jumpext_i - \jumpext_j} \geq \SC \cdot \frac{n}{\s+1} \stackrel{\eqref{eq:proofs:mw:step4:sext_control}}{\geq} \frac{n}{\sext+1} > \frac{1}{2} \, \abs{i-j} \cdot \frac{n}{\sext+1}.
	\end{equation}
	If $\jumpext_i \in \ssupp$ and $\jumpext_j = \round{\tilde\jumpext_{j}}$ for some $j \in \setcompl{\mathcal{I}}$, then 
	\begin{align}
	\abs{\jumpext_i - \jumpext_j} &\geq \abs{\jumpext_i - \tilde\jumpext_{j}} - \abs{\tilde\jumpext_{j} - \round{\tilde\jumpext_{j}}} \\*
	&\geq \frac{1}{2}  \cdot \frac{n}{\sext+1} - \frac{1}{2} \\*
	&\geq \frac{1}{4} \, \abs{i-j}  \cdot \frac{n}{\sext+1},
	\end{align}
	where the last line is due to $n / (\sext+1) \geq \SC n / (4\s) \geq 2$. Finally, if $\jumpext_i = \round{\tilde\jumpext_{i}}$ and $\jumpext_j = \round{\tilde\jumpext_{j}}$ for some $i, j \in \setcompl{\mathcal{I}}$, we have that
	\begin{align}
	\abs{\jumpext_i - \jumpext_j} &\geq \abs{\tilde\jumpext_{i} - \tilde\jumpext_{j}} - \abs{\tilde\jumpext_{i} - \round{\tilde\jumpext_{i}}} - \abs{\tilde\jumpext_{j} - \round{\tilde\jumpext_{j}}} \\*
	&\geq \frac{n}{\sext+1} - 1 \\*
	&\geq \frac{1}{2} \, \abs{i-j}  \cdot \frac{n}{\sext+1},
	\end{align}
	where the last line is again due to $n / (\sext+1) \geq 2$ and $\abs{i - j} = 1$.
	
	Finally, we define the auxiliary boundary points $\jumpext_0 \coloneqq 0$ and $\jumpext_{\sext+1} \coloneqq n$, and it is not hard to see that the above estimates remain valid if $i, j \in \{0,1, \dots, \sext+1\}$. Hence, in total, we obtain an approximate isometry between the sets $\{0,1, \dots, \sext+1\}$ and $\{\jumpext_0,\jumpext_1, \dots, \jumpext_{\sext+1}\}$:
	\begin{equation}\label{eq:proofs:mw:step4:supp_control}
	\frac{1}{4} \, \abs{i-j}  \cdot \frac{n}{\sext+1} \leq \abs{\jumpext_i - \jumpext_j} \leq 2 \, \abs{i-j}  \cdot \frac{n}{\sext+1}, \quad i,j = 0,1, \dots, \sext+1.
	\end{equation}
	Note that this relationship can be seen as a relaxation of an equidistant gradient support.
\item \label{eq:proof:step4e}
	Let us recall the tree-related notation from Iteration~\ref{def:proofs:mw:signaltree} and Definition~\ref{def:proofs:mw:signaltree:Q} as well as the basic properties from Proposition~\ref{prop:proofs:mw:tree_properties} and Proposition~\ref{prop:proofs:mw:tree_properties:Q}. Since $\cardinality{\Supp} = \sext = 2^{L_0} - 1$, the resulting binary tree $\Lambda^{L_0} = p^\mo (\Supp) = p^\mo(\{\jumpext_1, \dots, \jumpext_{\sext}\})$ is perfect and its depth satisfies
	\begin{equation}
	L_0 = \log_2(\sext + 1) = \ceil[\big]{\log_2(\tfrac{\s + 1}{\SC})}.
	\end{equation}
	Now, we aim to control the parameters $\d{i}{\ell}$ defined in \eqref{eq:proofs:mw:d-parameters} at the beginning of Step~\hyperref[eq:proof:step2b]{2(b)}. For this purpose, let $(\ell,i) \in \Lambda^{L_0}$ and let $r,\olarr{r},\orarr{r} \in \{0, 1, \dots, \sext+1\}$ be such that $\jumpext_r = \p{i}{\ell} \in \Supp$, $\jumpext_{\olarr{r}} = \pl{i}{\ell} \in \Supp \union \{\jumpext_{0}\}$, and $\jumpext_{\orarr{r}} = \pr{i}{\ell} \in \Supp \union \{\jumpext_{\sext+1}\}$.
	Then, we observe that
	\begin{align}
	\nl{i}{\ell} &= \cardinality{\Ql{i}{\ell}} = \abs{\pl{i}{\ell} - \p{i}{\ell}} = \abs{\jumpext_{\olarr{r}} - \jumpext_r}, \\*
	\nr{i}{\ell} &= \cardinality{\Qr{i}{\ell}} = \abs{\p{i}{\ell} - \pr{i}{\ell}} = \abs{\jumpext_r - \jumpext_{\orarr{r}}}.
	\end{align}
	These expressions can be controlled by the approximate isometry relation \eqref{eq:proofs:mw:step4:supp_control} as follows:
	\begin{align}
		\frac{1}{4} \, \abs{r - \olarr{r}}  \cdot \frac{n}{\sext+1} \leq \nl{i}{\ell} \leq 2 \, \abs{r - \olarr{r}}  \cdot \frac{n}{\sext+1}, \\
		\frac{1}{4} \, \abs{r - \orarr{r}}  \cdot \frac{n}{\sext+1} \leq \nr{i}{\ell} \leq 2 \, \abs{r - \orarr{r}}  \cdot \frac{n}{\sext+1}.
	\end{align}
	Since $\Lambda^{L_0}$ is perfect and $\jumpext_0 < \jumpext_1 < \dots < \jumpext_{\sext+1}$, it is not hard to see (yet crucial) that
	\begin{equation}\label{eq:proofs:mw:step4:index_control}
		\abs{r - \olarr{r}} = \abs{r - \orarr{r}} = 2^{L_0 - \ell},
	\end{equation}
	and therefore
	\begin{equation}
	\abs{r - \olarr{r}}  \cdot \frac{n}{\sext+1} = \abs{r - \orarr{r}}  \cdot \frac{n}{\sext+1} =  2^{L_0 - \ell} \cdot \frac{n}{2^{L_0}} = \frac{n}{2^\ell}.
	\end{equation}
	In total, we obtain
	\begin{equation}\label{eq:proofs:mw:step4:n_control}
	\frac{1}{4}  \cdot \frac{n}{2^\ell} \leq \nl{i}{\ell}, \nr{i}{\ell} \leq  2 \cdot \frac{n}{2^\ell},
	\end{equation}
	which particularly implies that
	\begin{equation}
	\d{i}{\ell} = \sqrt{\frac{\nl{i}{\ell} + \nr{i}{\ell}}{\nl{i}{\ell} \cdot \nr{i}{\ell}}} = \sqrt{\frac{1}{\nl{i}{\ell}} + \frac{1}{\nr{i}{\ell}}} \leq \sqrt{8} \cdot \sqrt{\frac{2^\ell}{n}}.
	\end{equation}
	This allows us to bound the balancing parameter $\b{\max}{\Lambda^{L_0}}$ as follows:
	\begin{equation}\label{eq:proofs:mw:step4:L0}
	\b{\max}{\Lambda^{L_0}} = \max_{(\ell,i) \in \Lambda^{L_0}} \b{i}{\ell} = \max_{(\ell,i) \in \Lambda^{L_0}} \frac{\d{i}{\ell}}{c_\ell} \leq \sqrt{8} \cdot \sqrt{\frac{2^\ell}{n}} \cdot \sqrt{\frac{n}{2^{\ell+1}}} = 2.
	\end{equation}
\item \label{eq:proof:step4f}
	For $1 \leq \ell \leq L$, we set 
	\begin{equation}
	\n{\min}{\ell} \coloneqq \min_{\substack{i\in[2^{\ell-1}], \\ \SuppS{i}{\ell} \neq \emptyset}} \big( \min \{ \nl{i}{\ell}, \nr{i}{\ell} \} \big) \quad \text{and} \quad \n{\max}{\ell} \coloneqq \max_{\substack{i\in[2^{\ell-1}], \\ \SuppS{i}{\ell} \neq \emptyset}} \big( \max \{ \nl{i}{\ell}, \nr{i}{\ell} \} \big).
	\end{equation}
	According to \eqref{eq:proofs:mw:step4:n_control}, we know that
	\begin{equation}\label{eq:proofs:mw:step4:n_minmax_control_L0}
	\frac{1}{4}  \cdot \frac{n}{2^{L_0}} \leq \n{\min}{{L_0}} \leq \n{\max}{{L_0}} \leq 2 \cdot \frac{n}{2^{L_0}}.
	\end{equation}
	Now let $L_0 < \ell \leq L$. The bound \eqref{eq:proofs:mw:tree_properties:Q_S:card_iter} from Proposition~\ref{prop:proofs:mw:tree_properties:Q}\ref{prop:proofs:mw:tree_properties:Q_S} implies that
	\begin{equation}
	\frac{\n{\min}{{\ell-1}} - 1}{2} \leq \n{\min}{{\ell}} \leq \n{\max}{{\ell}} \leq \frac{\n{\max}{{\ell-1}} + 1}{2}.
	\end{equation}
	A straightforward induction argument then yields
	\begin{equation}
	\frac{\n{\min}{{L_0}} - 2^{\ell - L_0} + 1}{2^{\ell - L_0}} \leq \n{\min}{{\ell}} \leq \n{\max}{{\ell}} \leq \frac{\n{\max}{{L_0}} + 2^{\ell - L_0} - 1}{2^{\ell - L_0}},
	\end{equation}
	and together with \eqref{eq:proofs:mw:step4:n_minmax_control_L0}, we have that
	\begin{equation}\label{eq:proofs:mw:step4:n_minmax_control}
	\frac{1}{4}  \cdot \frac{n}{2^{\ell}} - 1 \leq \n{\min}{{\ell}} \leq \n{\max}{{\ell}} \leq 2  \cdot \frac{n}{2^{\ell}} + 1.
	\end{equation}
	Since $\n{\min}{{\ell}} \geq 1$, it follows that
	\begin{equation}
	\frac{1}{4}  \cdot \frac{n}{2^{\ell}} \leq \n{\min}{{\ell}} + 1 \leq 2 \n{\min}{{\ell}},
	\end{equation}
	and therefore, for every $(\ell,i) \in \Lambda$ with $\ell > L_0$, we have that
	\begin{equation}
	\b{i}{\ell} = \frac{\d{i}{\ell}}{c_\ell} = \sqrt{\frac{n}{2^{\ell+1}}} \cdot \sqrt{\frac{1}{\nl{i}{\ell}} + \frac{1}{\nr{i}{\ell}}} \leq \sqrt{\frac{n}{2^{\ell+1}}} \cdot \sqrt{\frac{2}{\n{\min}{\ell}}} \leq \sqrt{\frac{n}{2^{\ell+1}}} \cdot \sqrt{\frac{16 \cdot 2^\ell}{n} } \leq \sqrt{8}.
	\end{equation}
	To show a lower bound for $\b{i}{\ell}$, let us first assume that $\n{\max}{{\ell}} \geq 2$. From \eqref{eq:proofs:mw:step4:n_minmax_control}, it follows that
	\begin{equation}
	\frac{\n{\max}{{\ell}}}{2} \leq \n{\max}{{\ell}} - 1 \leq 2  \cdot \frac{n}{2^{\ell}},
	\end{equation}
	so that
	\begin{equation}
	\b{i}{\ell} = \sqrt{\frac{n}{2^{\ell+1}}} \cdot \sqrt{\frac{1}{\nl{i}{\ell}} + \frac{1}{\nr{i}{\ell}}} \geq \sqrt{\frac{n}{2^{\ell+1}}} \cdot \sqrt{\frac{2}{\n{\max}{\ell}}} \geq \sqrt{\frac{n}{2^{\ell+1}}} \cdot \sqrt{\frac{2^\ell}{2n} } \geq \frac{1}{2}.
	\end{equation}
	Finally, we assume that $\n{\min}{{\ell}} = \n{\max}{{\ell}} = 1$. The construction of Iteration~\ref{def:proofs:mw:signaltree} implies that this can be only the case if $\ell = L = \depth(\Lambda)$.
	Since $\cardinality{\SuppS{i}{L_0+1}} \leq \n{\max}{{L_0}} \leq 2 n / 2^{L_0}$ by Proposition~\ref{prop:proofs:mw:tree_properties:Q}\ref{prop:proofs:mw:tree_properties:Q_S} and \eqref{eq:proofs:mw:step4:n_minmax_control_L0}, one can argue analogously to Proposition~\ref{prop:proofs:mw:tree_properties}\ref{prop:proofs:mw:tree_properties:depth} that the following bound for the depth of~$\Lambda$ holds true:
	\begin{align}
	L \leq L_0 + \ceil{\log_2(2 \cdot \tfrac{n}{2^{L_0}})} \leq \log_2(2^{L_0}) + \log_2(4 \cdot \tfrac{n}{2^{L_0}}) = \log_2(4 n).
	\end{align}
	Consequently, we have that $n  / 2^L \geq {1} / {4}$, which implies for all $i\in[2^{L-1}]$ that
	\begin{equation}
	\b{i}{L} = \sqrt{\frac{n}{2^{L+1}}} \cdot \sqrt{\frac{1}{\nl{i}{L}} + \frac{1}{\nr{i}{L}}} = \sqrt{\frac{n}{2^{L+1}}} \cdot \sqrt{2} \geq \frac{1}{2}.
	\end{equation}
	
	In total, we obtain the following bounds for $\b{\min}{\Lambda^\circ}$ and $\b{\max}{\Lambda^\natural}$:
	\begin{equation}\label{eq:proofs:mw:step4:natural_circ}
	\b{\max}{\Lambda^\natural} \leq \sqrt{8} \quad \text{and} \quad \b{\min}{\Lambda^\circ} \geq \frac{1}{2}.
	\end{equation}
\end{prooflist}

\subsubsection{Step 5: Proof of Theorem \ref{thm:results:mwbound}}
\label{eq:proof:step5}

In this final step, we simply combine the outcomes of Step~\hyperref[eq:proof:step3a]{3} and Step~\hyperref[eq:proof:step4]{4} to verify the bound \eqref{eq:results:mwbound} in Theorem~\ref{thm:results:mwbound}. Indeed, our bound for the conic mean width in \eqref{eq:proofs:mw:mw_bound_general} and our bounds for the balancing parameters in \eqref{eq:proofs:mw:step4:L0} and \eqref{eq:proofs:mw:step4:natural_circ} yield the claim:
\begin{align}
\effdim[\conic]{\descset{\lnorm{\TV(\cdot)}[1], \grtr}} &\leq 1 + 3\sext + \bigg[6 + 8\underbrace{\bigg(\frac{\b{\max}{\Lambda^{L_0}}}{\gamma \b{\min}{\Lambda^\circ}}\bigg)^2}_{\lesssim 1}\bigg] \sext \log(n) + 12 \underbrace{\bigg(\frac{\b{\max}{\Lambda^\natural}}{\gamma \b{\min}{\Lambda^\circ}}\bigg)^2}_{\lesssim 1} \sext \log^2(n) \\*
&\lesssim \sext\log^2(n) \stackrel{\eqref{eq:proofs:mw:step4:sext_control_upper}}{\leq} \frac{4}{\SC} \cdot \s\log^2(n), \label{eq:proofs:mw:step5:final}
\end{align}
where we have also used that $\sext \geq 1$, which is due to \eqref{eq:proofs:mw:step4:sext_control} and $\SC \leq 1$.
\qed

\begin{remark}
\begin{rmklist}
\item\label{rmk:proofs:mw:nonasymp}
	\textbf{Non-asymptotic bounds.} It is worth pointing out that the general estimate in \eqref{eq:proofs:mw:mw_bound_general_nonasymp} can be seen as a \emph{non-asymptotic} bound on the conic mean width, since it does not involve any hidden universal constants. However, this expression depends on various model parameters, such as the tree set cardinalities and the balancing parameters. Thus, although more accurate, \eqref{eq:proofs:mw:mw_bound_general_nonasymp} provides a much less informative sampling-rate bound than our final outcome in~\eqref{eq:proofs:mw:step5:final}.
\item\label{rmk:proofs:mw:signs}
	\textbf{The role of sign patterns.} While the sign pattern of $\TV\grtr$ does not affect the asymptotic-order bound \eqref{eq:proofs:mw:step5:final} in Step~\hyperref[eq:proof:step5]{5}, it allows for refinements at least in some special cases. For example, if $\sign(\TV\grtr)$ would be positive on $\supp(\TV\grtr)$, then one would simply have $\e{i}{\ell} = 1$ for every $(\ell,i) \in \Lambda^{L_0}$; cf.~Case~1 in Step~\hyperref[eq:proof:step3b]{3(b)}. \qedhere
\end{rmklist}
\end{remark}\label{rmk:proofs:mw}

\subsection{Proof of Theorem~\ref{thm:results:stable}}
\label{subsec:proofs:stable}

Let us consider the following signal vector
\begin{equation}
\tilde{\x} \coloneqq \psinv{\TV} \proj{\ssupp} \TV \grtr + \lambda \1_n \in \R^n,
\end{equation}
where the coefficient $\lambda \in \R$ represents the ``constant part'' of $\tilde{\x}$, which is left unspecified for the moment.
Due to
\begin{equation}
\TV \tilde{\x} = \underbrace{\TV \psinv{\TV}}_{= \I{N}} \proj{\ssupp} \TV \grtr + \lambda \underbrace{\TV\1_n}_{= \vnull} = \proj{\ssupp} \TV \grtr,
\end{equation}
we observe that $\tilde{\x}$ is $\s$-gradient-sparse and $\TV \tilde{\x}$ is a best $\s$-term approximation of $\TV \grtr$ (with respect to the $\l{1}$-norm).
Now, we apply Proposition~\ref{prop:results:stable:general} with $R = 1$ and a rescaled version of $\tilde{\x}$:
\begin{equation}
\grtrsparse \coloneqq \frac{\lnorm{\TV\grtr}[1]}{\lnorm{\TV\tilde{\x}}[1]} \cdot \tilde{\x} = \frac{\lnorm{\TV\grtr}[1]}{\lnorm{\proj{\ssupp} \TV \grtr}[1]} \cdot \tilde{\x} = \tau(\grtr) \cdot \big( \psinv{\TV} \proj{\ssupp} \TV \grtr + \lambda \1_n \big).
\end{equation}
Then we have $\lnorm{\TV\grtr}[1] = \lnorm{\TV\grtrsparse}[1]$ and the assumption~\eqref{eq:results:stable:general:meas} follows from Theorem~\ref{thm:results:mwbound} and \eqref{eq:results:stable:meas}. Hence, with probability at least $1 - e^{-\probsuccess^2/2}$, the following error bound holds true:
\begin{equation}
\lnorm{\solu - \grtr} \leq \lnorm{\grtr - \grtrsparse} + \frac{2\noiseparam}{\pospart{\sqrt{m-1} - \sqrt{m_0-1}}} \lesssim \lnorm{\grtr - \grtrsparse} + \frac{\noiseparam}{\sqrt{m}},
\end{equation}
where we have used that $m \geq C \cdot m_0$ for a sufficiently large universal constant $C > 1$ (this can be easily achieved by slightly enlarging the hidden constant in \eqref{eq:results:stable:meas}).
Finally, using the definition of $\grtrsparse$, we obtain
\begin{align}
\lnorm{\grtr - \grtrsparse} &= \lnorm[\big]{\grtr - \tau(\grtr) \cdot \big( \psinv{\TV} \proj{\ssupp} \TV \grtr + \lambda \1_n \big)} \\*
&= \lnorm[\big]{\grtr - \tau(\grtr) \cdot \psinv{\TV} \TV \grtr - \tau(\grtr) \lambda \1_n + \tau(\grtr) \cdot  \psinv{\TV} \proj{\setcompl\ssupp} \TV \grtr} \\*
&\leq \lnorm[\big]{\grtr - \tau(\grtr) \cdot \grtr + \tau(\grtr) \sp{\tfrac{1}{n}\1_n}{\grtr} \1_n - \tau(\grtr) \lambda \1_n } + \lnorm[\big]{ \tau(\grtr) \cdot  \psinv{\TV} \proj{\setcompl\ssupp} \TV \grtr},
\end{align}
and by choosing $\lambda = \sp{\tfrac{1}{n}\1_n}{\grtr} / \tau(\grtr)$, it follows that
\begin{align}
\lnorm{\grtr - \grtrsparse} &\leq \big(\tau(\grtr) - 1 \big) \lnorm[\big]{\grtr - \sp{\tfrac{1}{n}\1_n}{\grtr} \1_n} + \tau(\grtr) \lnorm[\big]{\psinv{\TV} \proj{\setcompl\ssupp}\TV\grtr} .
\end{align} \qed


\section*{Acknowledgments}
{\smaller
M.G.\ is supported by the Bundesministerium für Bildung und Forschung (BMBF) through the Berliner Zentrum for Machine Learning (BZML), Project AP4.
M.M.\ is supported by the DFG Priority Programme DFG-SPP 1798 Grants KU 1446/21 and KU 1446/23.
\revision{Moreover, the authors would like to thank the anonymous referees for their useful comments and suggestions which have helped to improve the original manuscript.}

}

\renewcommand*{\bibfont}{\smaller}
\begin{refcontext}[sorting=nyt]
	\printbibliography
\end{refcontext}

\newpage
\listoftodos

\end{document}